\documentclass[reqno]{amsart}

\usepackage[centering,top=1in,bottom=1in,left=0.9in,right=0.9in]{geometry}

\numberwithin{equation}{section}

\usepackage{amsmath, amssymb, xcolor, amsthm, url, algorithm, algorithmic}
\usepackage{graphicx} %

\definecolor{dark-gray}{gray}{0.3}
\definecolor{dkgray}{rgb}{.4,.4,.4}
\definecolor{dkblue}{rgb}{0,0,.5}
\definecolor{medblue}{rgb}{0,0,.75}
\definecolor{rust}{rgb}{0.5,0.1,0.1}
\usepackage[colorlinks=true]{hyperref}
\hypersetup{linkcolor=dkblue}    
\hypersetup{citecolor=rust}      
\hypersetup{urlcolor=rust} 
\usepackage{bbm}
\usepackage{enumitem}
\usepackage[normalem]{ulem}
\usepackage{subcaption} 
\newcommand{\R}{\mathbb{R}}
\newcommand{\B}{\mathcal{B}}
\newcommand{\Bf}{\mathcal{B}f}
\newcommand{\E}{\mathbb{E}}
\renewcommand{\P}{\mathbb{P}}
\newcommand{\hatf}{\widehat{f}}
\newcommand{\hatfft}{\widehat{\fft}}

\newcommand{\fft}{f^{ft}}
\newcommand{\gft}{g^{ft}}
\newcommand{\pft}{p^{\text{ft}}}
\newcommand{\N}{\mathbbm{N}}
\newcommand{\cov}{\operatorname{Cov}}

\newcommand{\ft}{^{ft}}
\newcommand{\w}{\omega}

\usepackage{cancel}
\newcommand{\xxx}{E_3}
\newcommand{\xxone}{E_2^{(1)}}
\newcommand{\xxtwo}{E_2^{(2)}}
\newcommand{\xxthree}{E_2^{(3)}}
\newcommand{\xone}{E_1^{(1)}}
\newcommand{\xtwo}{E_1^{(2)}}
\newcommand{\xthree}{E_1^{(3)}}

\usepackage{comment}

\newcommand{\noisecov}{\mathsf{k}_\epsilon}
\newcommand{\fsup}{\|f\|_{\infty}}
\newcommand{\fLtwo}{\|f\|_2}
\newcommand{\Lipcov}{\|\nabla\noisecov\|_\infty}
\newcommand{\boundcov}{\|\noisecov\|_\infty}
\newcommand{\Lipf}{L_f}
\newcommand{\covLone}{\|\noisecov\|_{1}}

\newcommand{\obs}{Y}
\newcommand{\kerneldec}{\mathcal{K}}
\newcommand{\Kft}{\kerneldec^{\text{ft}}}
\newcommand{\dom}{R}
\newcommand{\sam}{N}

\newcommand{\nc}{\normalcolor}

\newtheorem{thm}{Theorem}[section]

\newtheorem{example}[thm]{Example}
\newtheorem{lem}[thm]{Lemma}
\newtheorem{rmk}[thm]{Remark}

\newtheorem{defn}[thm]{Definition}
\newtheorem{assump}[thm]{Assumption}
\newtheorem{prop}[thm]{Proposition}

\title{Functional Multi-Target Detection via Bispectrum Inversion}
\author{Anna Little$^*$, Daniel Sanz-Alonso$^\dagger$, Mikhail Sweeney$^\ddagger$, and Ruiyi Yang$^\S$ }
\thanks{$^*$Department of Mathematics, Utah Center for Data Science, University of Utah}
\thanks{$^\dagger$Department of Statistics, University of Chicago}
\thanks{$^\ddagger$Department of Mathematics, University of Utah}
\thanks{$^\S$Institute of Natural Sciences, School of Mathematical Sciences, MOE-LSC, Shanghai Jiao Tong University}

\begin{document}

\begin{abstract}
This paper develops a functional theory for multi-target detection, where a compactly supported signal is recovered from a single noisy observation containing many unknown translations of the signal. Our formulation allows continuous, off-grid translations and correlated stationary Gaussian process noise, extending beyond the discrete, grid-aligned, white-noise models common in prior work. We analyze two uninitialized recovery algorithms based on autocorrelation analysis; in particular,
both algorithms first estimate the signal's bispectrum via a debiased third-order empirical autocorrelation. The signal  is then recovered from the estimated bispectrum using either a functional frequency marching scheme or a Kotlarski-type deconvolution formula.
For both algorithms, we prove %
non-asymptotic recovery guarantees for compactly supported signals without band-limiting assumptions. The resulting error bounds depend on the smoothness of the signal and the accuracy of bispectrum estimation, with the latter governed by the noise characteristics and the number of signal occurrences. %
Numerical experiments validate our theory and demonstrate accurate recovery in low-SNR regimes.
\end{abstract}

\maketitle

\section{Introduction}
In multi-target detection (MTD), the goal is to recover an unknown signal that appears many times at unknown locations within a single large noisy observation.  In contrast to classical alignment problems, the latent translations cannot be reliably estimated in the low signal-to-noise ratio (SNR) regime, so the locations of the signal instances must be treated as nuisance parameters rather than explicitly inferred \cite{alignmentlimits}. This makes MTD a natural model for low-SNR inverse problems in which repeated signal instances are present but not individually detectable, including biological imaging problems such as cryo-electron microscopy (cryo-EM) \cite{DBLP:detectionlimit} and cryo-electron tomography (cryo-ET) \cite{bharat2016resolving}, as well as spike sorting \cite{spikesorting} and passive radar \cite{passiveradar}.

This paper studies the functional MTD problem in general dimension: recovering a signal function $f:\mathbb{R}^d\to\mathbb{R}$ from a single noisy observation of the form
\[
Y(t)=\sum_{i=1}^{N}f(t-x_i)+\epsilon(t),
\qquad
t\in[- R, R]^d,
\]
where \(f\) is supported on a compact domain \(D\subset [-R,R]^d\), the unknown translations \(x_1,\dots,x_N\) are latent variables, and \(\epsilon\) is a centered stationary Gaussian process. 
We assume that the translations are well separated: the shifted copies of $f$ are non-overlapping, separated by a fixed margin, and all supported within the observation window $[-R,R]^d$.
While this assumption can be relaxed to more general spacing distributions \cite{MTDapptoCryoEM,L:MTDarbitraryspacing}, it simplifies the analysis and is standard in the MTD literature. Unlike most existing MTD formulations, we work in a continuous setting: the translations are arbitrary (off-grid), the signal is modeled as a compactly supported function rather than a finite-dimensional vector, and the noise is allowed to be spatially correlated.  Our functional setting is therefore  motivated by
applications such as cryo-EM and cryo-ET, where the shifts are inherently off-grid \cite{B:MTDrotations}, the signal represents a physical object naturally described by a function, and the noise is often spatially correlated \cite{hazon2022noise}. 
Figure \ref{fig:introplot} illustrates the raw data and the recovered signal.

Our recovery approach follows the method of invariants.
A key object is the signal's bispectrum, a translation-invariant third-order frequency-domain quantity, defined in Equation \eqref{equ:bispectrum_def},  which measures how different frequencies in the signal interact. 
We first 
show that an appropriately debiased empirical third-order autocorrelation function yields an 
unbiased
estimator of the signal's bispectrum with nonasymptotic concentration guarantees.  
We then recover the signal by inverting the bispectrum using one of two procedures: a functional frequency marching algorithm \cite{ogfreqmarch} and a deconvolution method derived from Kotlarski’s identity \cite{kotlarski1967characterizing,rao1992identifiability}. Both algorithms are uninitialized, operate directly on the observed data, and recover the signal up to translation, which is the natural identifiability limit of the problem.

Our main contribution is the first explicit end-to-end recovery guarantee for MTD in a genuinely functional setting. Specifically,  Theorem \ref{thm:ErrorBoundDeconvolutionEstimator} makes precise the following informal statement:

\begin{thm}[Main result––informal statement]
    Under suitable assumptions and up to translation, the estimator \(\widehat f\) produced by either of our two algorithms satisfies, with high probability, 
\[
\|\widehat f-f\|_2
\;\lesssim\;
\rho_N^{\frac{\beta-d/2}{2\beta+1}},
\]
where \(\beta\) governs the smoothness of the true signal $f$, \(d\) is the dimension of its domain, and \(\rho_N\) denotes the error in estimating the bispectrum.
\end{thm}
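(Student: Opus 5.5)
The bound splits into a bias part and a variance part. The variance part comes from using the estimated bispectrum $\widehat{B}$ in place of the true bispectrum $B_f(\omega_1,\omega_2)=\fft(\omega_1)\fft(\omega_2)\overline{\fft(\omega_1+\omega_2)}$; the bias part comes from truncating frequencies to a ball $|\omega|\le \Omega$. I would first take as given a bispectrum estimation bound: with high probability, $\sup_{|\omega_1|,|\omega_2|\le\Omega}|\widehat{B}-B_f|\lesssim \rho_N$, possibly up to a polynomial factor in $\Omega$. This should follow from the debiased third-order autocorrelation being unbiased, together with Gaussian concentration (hypercontractivity of cubic Gaussian chaos) over the $N$ well-separated occurrences. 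Smoothness would enter through a Sobolev-type assumption $\int |\fft(\omega)|^2(1+|\omega|^2)^{\beta}\,d\omega<\infty$. This gives the tail bound $\int_{|\omega|>\Omega}|\fft|^2\lesssim \Omega^{-2\beta}$.

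\textbf{Step 1: pointwise Fourier error on the ball.} Fix the translation ambiguity by setting the phase gauge $\arg\fft(0)=0$ and fixing the phase of $\fft$ along one direction (the linear phase). Then recover $\fft$ on $|\omega|\le\Omega$ from $\widehat{B}$.
\begin{itemize}
\item For frequency marching, use $\fft(\omega_1+\omega_2)=\overline{B_f(\omega_1,\omega_2)}/(\overline{\fft(\omega_1)}\,\overline{\fft(\omega_2)})$ along a grid of step $h$. Errors propagate multiplicatively over roughly $\Omega/h$ steps.
\item For the Kotlarski route, use $\nabla_{\omega_2}\log B_f(\omega_1,\omega_2)|_{\omega_2=0}=\nabla\log\fft(0)-\nabla\log\fft(\omega_1)$ up to conjugation. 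This gives $\log\fft(\omega)=\log\fft(0)-\int_0^\omega(\cdots)$, a line integral of a derivative of $\widehat{B}$ (estimated via a smoothed, differentiated empirical autocorrelation).
\end{itemize}
In either case the goal is $\sup_{|\omega|\le\Omega}|\hatfft-\fft|\lesssim \Omega\,\rho_N$, where the factor $\Omega$ comes from path length or number of marching steps. This requires a lower bound on $|\fft|$ along the path, since one divides by $\fft$. Because $f$ is compactly supported, $\fft$ is entire, so its zeros are isolated. I would assume (or restrict to) $\inf_{|\omega|\le\Omega}|\fft|\ge c$ on the relevant region, or absorb this into the constant. This step is the main obstacle: controlling error accumulation through division by small Fourier magnitudes and through the unwrapping of phases. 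I would first try to handle it with a stability assumption that $|\fft|$ is bounded below on bounded sets, together with a Grönwall-type bound on the integrated log-derivative.

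\textbf{Step 2: $L^2$ error via Plancherel.} Set $\hatfft=0$ for $|\omega|>\Omega$. Then
\[
\|\hatf-f\|_2^2\lesssim \Omega^{d}\sup_{|\omega|\le\Omega}|\hatfft-\fft|^2+\Omega^{-2\beta}\lesssim \Omega^{d+2}\rho_N^2+\Omega^{-2\beta}.
\]

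\textbf{Step 3: choice of $\Omega$.} Balancing the two terms gives $\Omega\asymp\rho_N^{-2/(2\beta+d+2)}$. This does not reproduce the stated exponent $(\beta-d/2)/(2\beta+1)$. The stated exponent corresponds to bias $\Omega^{-(\beta-d/2)}$ together with error $\Omega^{\beta+1/2}\rho_N$. So I expect the paper uses a Hölder/sup-type smoothness $|\fft(\omega)|\lesssim|\omega|^{-\beta}$, whose tail in $L^2$ is $\Omega^{-(\beta-d/2)}$. I also expect the reconstruction error to carry an amplification factor $\Omega^{\beta+1/2}$, arising from dividing by $|\fft|\gtrsim|\omega|^{-\beta}$ along the path together with the length factor. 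I would adopt that lower/upper decay assumption, redo Step 1 tracking the factor $1/|\fft(\omega)|$, and then set $\Omega=\rho_N^{-1/(2\beta+1)}$. This yields $\|\hatf-f\|_2\lesssim\rho_N^{(\beta-d/2)/(2\beta+1)}$ up to translation.
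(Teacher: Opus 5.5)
Your skeleton matches the paper's. It has a uniform bispectrum error $\rho_N$, a pointwise stability estimate for the inversion on a frequency box, a Plancherel split into truncation bias plus propagated error, and a choice of cutoff.

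\textbf{The gap: the stability estimate is never carried out.} This is the step that fixes the exponent.
\begin{itemize}
\item Step 1 rests on $\inf_{|\omega|\le\Omega}|\fft|\ge c$ with target $\Omega\rho_N$. That is incompatible with the decay $|\fft(\omega)|\asymp(1+|\omega|_2)^{-\beta}$ you then need for the bias.
\item In Step 3 you back-solve the cutoff from the target rate instead of deriving it.
\item The amplification you guess is also wrong. Balancing $\Omega^{\beta+1/2}\rho_N$ against $\Omega^{d/2-\beta}$ gives $\Omega\asymp\rho_N^{-1/(2\beta+1/2-d/2)}$, not $\rho_N^{-1/(2\beta+1)}$.
\end{itemize}
For the stated rate, the propagated term must be $\Omega^{\beta+1+d/2}\rho_N$ in $L^2$. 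Pointwise, that means $\sup_{|k|_\infty\le\Omega}|\hatfft(k)-\fft(k)|\lesssim\Omega^{\beta+1}\rho_N$.

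The mechanism is that you divide by \emph{bispectrum} values, not by $|\fft|$. The relevant values are $|\Bf(j,j-1)|$ or $|\Bf(\omega,\omega)|=|\fft(\omega)|^2$, both of size $(1+|\omega|_2)^{-2\beta}$.
\begin{itemize}
\item Each marching step, or each point of the Kotlarski ray (using $\nabla_1\widehat{\Bf}$), contributes relative error $\lesssim\rho_N(1+|\omega|_2)^{2\beta}$.
\item Summing over the $O(\Omega)$ path gives $|\log(\hatfft/\fft)|\lesssim\rho_N\Omega^{2\beta+1}$.
\item Multiplying back by $|\fft|\lesssim\Omega^{-\beta}$ gives $\Omega^{\beta+1}\rho_N$.
\end{itemize}

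\textbf{The ``main obstacle'' you flag} (error accumulation, phase unwrapping) is resolved by a smallness condition missing from your plan. You need $\rho_N(1+\sqrt d\,\Omega)^{2\beta+1}\lesssim 1$ with a sufficiently small constant. Then:
\begin{itemize}
\item Every ratio $\widehat{\Bf}/\Bf$ on the path lies in $B(1,1/2)$. You take principal logarithms of these ratios and never unwrap $\arg\fft$, using $|\log(1+z)|\le 2|z|$.
\item The accumulated log-error is at most $1$, so $|e^z-1|\lesssim|z|$ linearizes.
\item For Kotlarski, write $\hatfft=\fft e^{\Delta}$, with $\Delta$ the integrated difference of log-derivatives. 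This is branch-free.
\end{itemize}
Choosing $\Omega\asymp\rho_N^{-1/(2\beta+1)}$ with a small implicit constant is exactly what makes this condition hold.

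\textbf{Two further ingredients are needed.}
\begin{enumerate}
\item The bispectrum bound must be uniform over \emph{all} frequencies, with no polynomial loss in $\Omega$, for both $\widehat{\Bf}$ and $\nabla_1\widehat{\Bf}$. Any $\Omega^a$ factor shifts the exponent. This holds here because $A_3f$ and $\widehat{A_3f}$ live on the compact set $Z=[-2\pi,2\pi]^{2d}$. So $\sup_Z$ concentration transfers with a constant $C_Z$, and no smoothing is needed for the derivative.
\item For the Kotlarski route, the cross term $\int_0^1\bigl(1/\widehat{\Bf}-1/\Bf\bigr)\nabla_1\Bf\cdot\omega\,d\alpha$ requires bounding $\nabla_1\Bf/\Bf=\nabla\fft/\fft$ (using $\fft(0)=1$, $\nabla\fft(0)=0$). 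That needs the extra hypothesis $\sup_\omega|\nabla\fft(\omega)|_2/|\fft(\omega)|\le C_\nabla$. Merely tracking $1/|\fft|$ costs another factor $\Omega^{\beta}$ and degrades the rate to $\rho_N^{(\beta-d/2)/(3\beta+1)}$. Frequency marching does not need this assumption.
\end{enumerate}
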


Proposition~\ref{prop:bispectrumconcentration} provides a high-probability, nonasymptotic bound on $\rho_N$.  Consequently, the estimators produced by our algorithms converge as $N\to\infty$ to the true signal with a quantitative rate depending on the smoothness $\beta$ and the dimension $d$. To our knowledge, this is the first result for MTD that closes the loop from a single noisy observation to a recovered signal with a provable signal-level error guarantee.  

\begin{figure}[tb]
    \centering
    \begin{subfigure}[t]{0.32\textwidth}
        \centering
        \includegraphics[width=\textwidth]{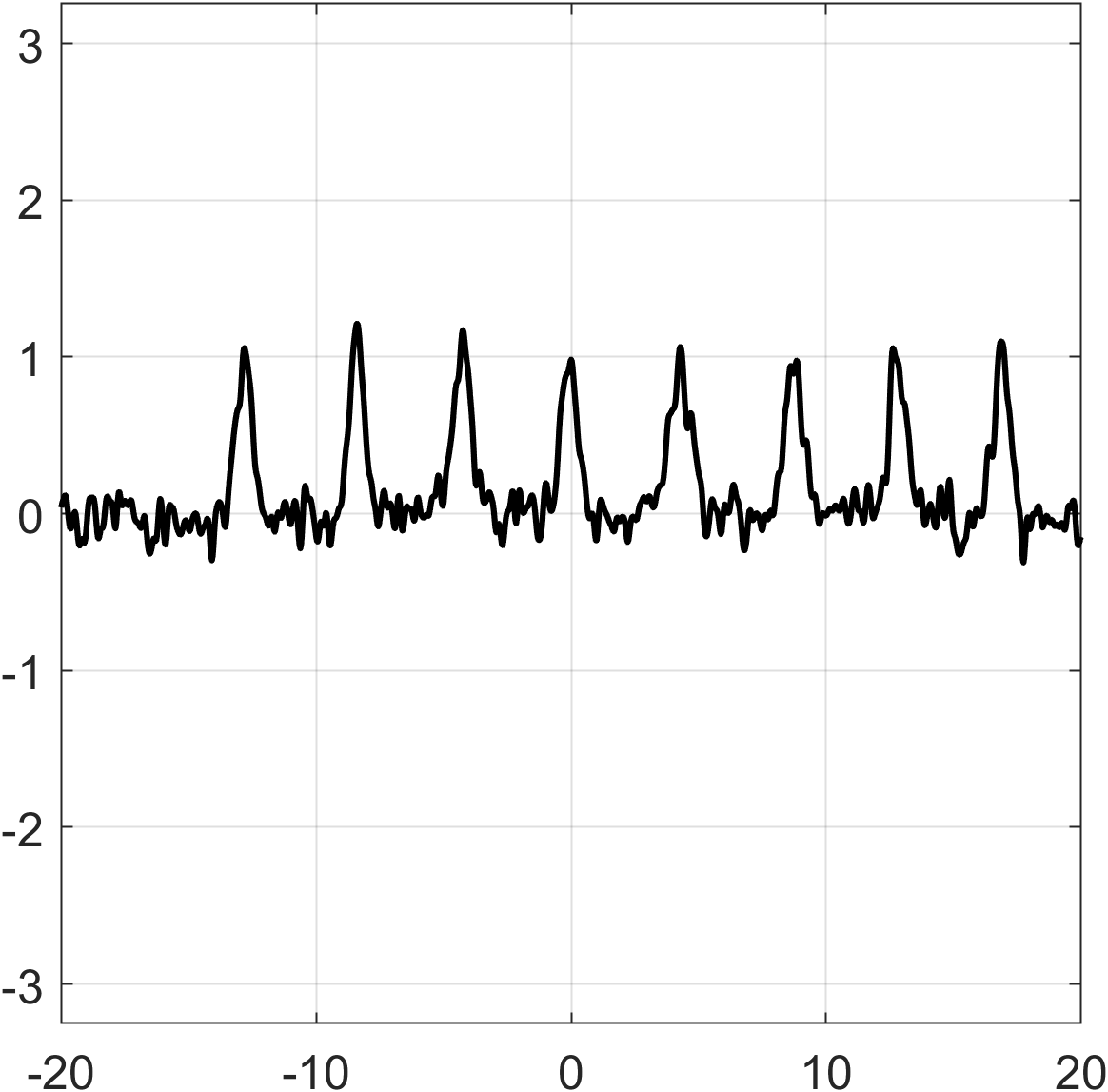}
        \caption{An MTD observation $\obs$ with $N=8$ and $\sigma = 0.1$.}
        \label{fig:introplotleft}
    \end{subfigure}
    \hfill
    \begin{subfigure}[t]{0.32\textwidth}
        \centering
        \includegraphics[width=\textwidth]{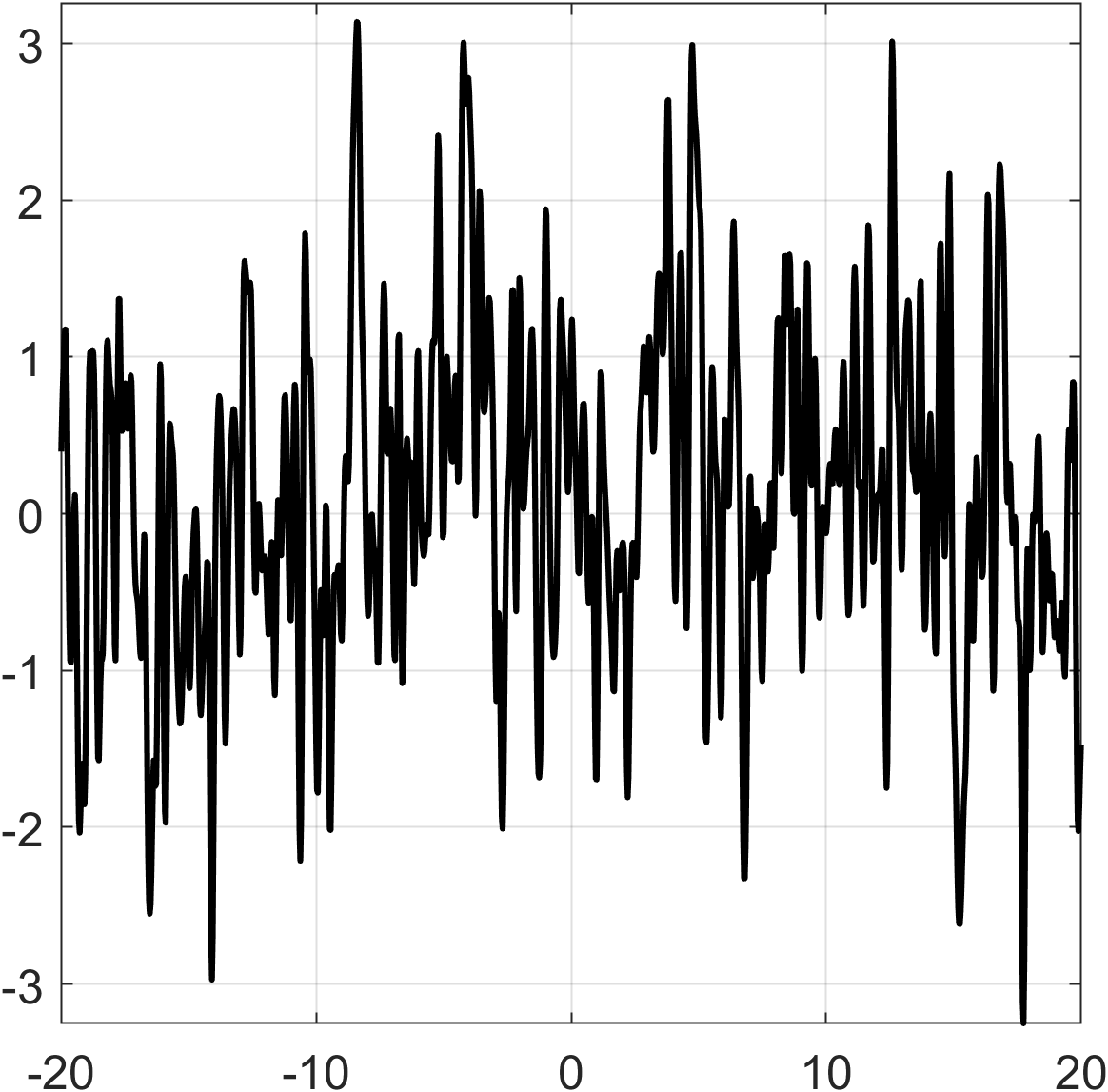}
        \caption{An MTD observation $\obs$ with $N=8$ and $\sigma = 1$.}
        \label{fig:introplotcenter}
    \end{subfigure}
    \hfill
    \begin{subfigure}[t]{0.32\textwidth}
        \centering
        \includegraphics[width=\textwidth]{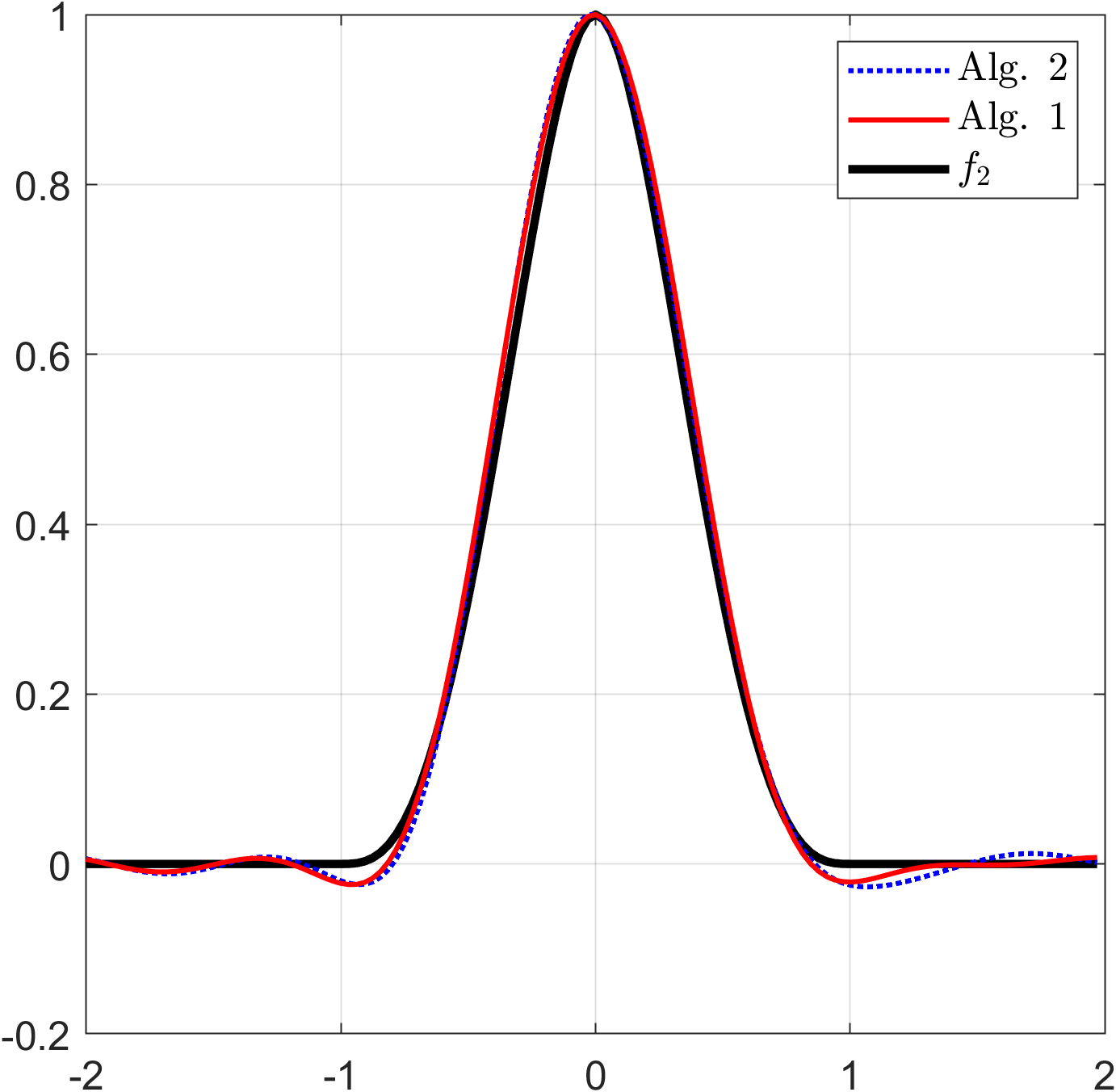}
        \caption{Signal recovered from a $\sigma=1$ observation $\obs$ with $N=2^{15}$.}
        \label{fig:introplotright}
    \end{subfigure}
    \caption{An illustration of the MTD problem and our recovery of the signal $f_2$ defined in \eqref{eq:signals}. The rightmost panel is recovered via Algorithms \ref{alg:mainalgFM} and \ref{alg:mainalg} from an observation $\obs$ with noise level $\sigma =1$, as in the center panel. Note the differing $y$-axes for the rightmost panel. See Section \ref{sec:numerics} for more details.} 
    \label{fig:introplot}
\end{figure}

\subsection{Motivation and Related Problems}
In cryo-EM, one seeks to recover the 3D structure of a molecule from a micrograph containing many randomly rotated, projected, and translated copies of the molecule, along with additive noise. In cryo-ET, a tilt series is acquired and reconstructed into a 3D tomogram of the specimen, which contains many copies of the molecule of interest at 
unknown locations and orientations. Subtomogram averaging then yields a 3D structure---and, critically, unlike cryo-EM, the signal model involves no tomographic projection, making it a more natural, albeit less studied, setting
for MTD-style recovery methods. In the standard cryo-EM and cryo-ET workflows, the translations are first estimated in a step known as \emph{particle picking}. 

The \emph{multi-reference alignment (MRA)} problem was developed as a simplified model for the cryo-EM workflow under the assumption of successful particle picking.
In MRA, one assumes access to many independently shifted noisy copies of the signal, rather than a single long observation containing many copies at unknown locations. 
For instance, in the functional MRA model of \cite{fMRA}, one seeks to recover a signal function $f: \R^d \to \R$ from $N$ independent observations
\[
y_i(t)=f(t-x_i)+\epsilon_i(t),
\qquad
t\in [-R,R]^d,\qquad 1\le i\le N,
\]
where $f$ is supported on a compact domain $D \subset [-R,R]^d,$  the shifts \(x_i\) are unknown, and the noise processes \(\epsilon_i\) are independent. Here, it is assumed that all shifted signals $f(t-x_i)$ are supported on the observation window $[-R,R]^d$, and the length $R$ of the observation window is fixed and independent of $N.$
As with MTD, MRA has been studied primarily in discrete settings, including settings with group actions beyond translations.
The sample complexity of MRA is studied in \cite{perry2019sample}. Solution methods for the low-SNR regime  ---avoiding explicit signal alignment--- include the method of moments \cite{hansen1982large, kam1980reconstruction, sharon2019method, abas2022generalized} (and in particular invariant methods \cite{bandeira2020non, bendory2017bispectrum, collis1998higher, hirn2023power, hirn2021wavelet}) and expectation maximization (EM) algorithms \cite{abbe2018multireference, dempster1977maximum, balanov2025expectation}. 
Extensions of the discrete MRA problem allowing additional group actions, such as rotations, have been studied in \cite{provableSO2MRA, SO3MRA, SEorbitrec}, and MRA with projections has been considered in \cite{provableMRAprojections}.
Altogether, the post-particle-picking cryo-EM problem has benefited substantially from the extensive MRA literature. 

However, it has recently been shown that particle-picking can be unreliable in the low-SNR regime and can contaminate reconstruction. Classically, particle-picking in cryo-EM  is done automatically with \emph{template matching}, in which signal locations are chosen as regions of the micrograph with increased correlation to a given template. This process can degrade the reconstruction in an unpredictable way, since cross-correlating noise with a template systematically produces spurious picks that resemble the template \cite{einsteinfromnoiseearly, einsteinfromnoisenew}.
Moreover, even template-free and aggregated particle picking approaches such as difference-of-Gaussian methods \cite{dogpicker}, deep-learning methods \cite{topazpicker}, and consensus methods \cite{cameron2024reliable} are unreliable in the low-SNR regime \cite{DBLP:detectionlimit}. Specifically, for molecules below a molecular weight of 40 kDa, it has been shown that particle picking is impossible \cite{CryoEmexcitement}; for cryo-EM to be viable for imaging small molecules, the bottleneck of particle picking must be avoided. Consequently, the works \cite{DBLP:detectionlimit,MTDapptoCryoEM} proposed the MTD computational framework as a simplified (projection-free) model of the cryo-EM problem without particle picking. MTD and MRA are thus closely related problems, with the sample complexity of MRA lower bounding the sample complexity of MTD \cite{abraham2025sample}.

\subsection{Previous Work on MTD}
The MTD problem was first introduced in \cite{MTDapptoCryoEM} as a complement to the proposal in \cite{DBLP:detectionlimit} of treating shifts in cryo-EM as nuisance parameters. The paper \cite{MTDapptoCryoEM} studied the one-dimensional discrete MTD problem and showed that the second- and third-order autocorrelation functions were enough to uniquely determine a signal vector $ z\in\R^L$. Moreover, the authors constructed unbiased estimators of the autocorrelation functions directly from the MTD observation $ \obs $. This was a continuation of the autocorrelation analysis methods proposed by Zvi Kam in 1980 \cite{Kam}, which laid out the groundwork for recovering signals corrupted by group actions from a method of moments approach, leveraging the invariance of the autocorrelation functions to translations. 
Shortly after, \cite{L:MTDarbitraryspacing} extended the previous results to hold for not only the well-separated model, but also the non-overlapping model. The authors also introduced an expectation-maximization (EM) algorithm as a multi-pass alternative to the autocorrelation-based method of moments / invariants approach and demonstrated a sample complexity of SNR$^{-3}$ in the low-SNR regime. \cite{S:generalizedMTD} placed autocorrelation analysis in the generalized method of moments framework and used this to construct an estimator for the autocorrelation functions of the hidden signal that attained minimum asymptotic variance.

The works \cite{marshall2020image,B:MTDrotations} extended the well-separated MTD formulation to two dimensions and added the additional group action of rotations, under a uniform rotation distribution assumption. While these papers assumed that the hidden signal was a compactly supported band-limited function, the MTD observation was assumed to be discrete, with shifts occurring on the discretization grid. The authors constructed unbiased estimators of the hidden signal's autocorrelation functions and computed the variance of these estimators. They also proposed an end-to-end recovery algorithm for their MTD problem based on inverting the bispectrum, as in this work. \cite{approxEMfor2dMTD} studied the same MTD problem formulation and constructed an approximate EM algorithm that attained comparable performance to the full EM algorithm proposed in \cite{L:MTDarbitraryspacing} but with much less computational complexity, requiring fewer passes through the data set.  Subsequently, \cite{diffusionpriorsforMTD} improved the approximate EM algorithm by incorporating learned score-based diffusion priors into the M-step via a regularized gradient ascent update, improving recovery quality
substantially in the low-SNR regime. The work
\cite{K:2dMTD} extended the 2D MTD with rotations problem formulation to the separated case, constructing pairwise separation functions to unbias the autocorrelation function estimators constructed from the MTD observation $ \obs $ and solving the MTD problem with a least squares non-convex optimization approach. 

The state-of-the-art for the EM algorithm workflow is developed in \cite{EMdirectlyfromCryoEM}, which extends the approximate EM algorithm constructed in \cite{approxEMfor2dMTD} to the full 3D cryo-EM problem, handling random shifts, rotations, and projections---demonstrating the first end-to-end algorithm for signal recovery from a cryo-EM micrograph without particle picking. The authors constructed and utilized a stochastic mini-batch variant of the EM algorithm. Meanwhile, formal sample complexity bounds for the MTD problem were first studied in \cite{noteonsamplecomplexity}, which lower bounded the MTD sample complexity by the sample complexity of the corresponding MRA problem and upper bounded the MTD sample complexity by an identifiability result for the autocorrelation functions. The paper \cite{abraham2025sample} also studied the sample complexity of MTD in both one and two dimensions, showing that, for method of moments approaches, the sample complexity of MTD matches that of a higher-dimensional i.i.d. MRA problem, up to a constant. That is, the mean squared error (MSE) of the MTD empirical moment estimator converges at the same rate as the i.i.d. MRA empirical moment estimator. Most recently, \cite{SEorbitrec} studied the functional well-separated MTD with rotations problem in arbitrary dimension and, under the assumption of non-vanishing antipodal correlation, provided upper and lower bounds for the sample complexity of the MTD problem.

\subsection{Our Contributions}
The principal contribution of this paper is to introduce and analyze two end-to-end uninitialized algorithms for signal recovery in the functional MTD setting. We provide concentration results for our estimators and explicit high-probability recovery guarantees for our algorithms. Our functional formulation enables natural modeling of the signal, the shifts, and the observational noise.

\subsubsection{Signal Modeling}
In many of the scientific problems that motivate MTD, the hidden signal represents a physical object, and is thus ideally modeled as a compactly supported function rather than a finite-dimensional vector. Moreover, the assumption of compact support in space implies infinite support in frequency, and thus the standard band-limited assumption in the MTD literature becomes unrealistic for physically relevant signals. With the exception of \cite{SEorbitrec}, which studies the full functional MTD problem but is predominantly interested in the sample complexity and does not characterize rates of recovery, this work is the only other paper in the MTD literature to study the functional MTD problem without a band-limited assumption on the hidden signal. Moreover, to the best of our knowledge, our proposed recovery rates are the first in the MTD literature to depend explicitly on properties of the hidden signal, such as signal smoothness.

\subsubsection{Shift Modeling}
A key feature of the functional formulation is that our recovery algorithms are sampling-rate independent. All proposed MTD recovery algorithms thus far make the assumption that the shifts inherent to the problem setup are multiples of the sampling rate; the discretized signal is always perfectly shifted on the grid. While discretization is necessary for practical implementation, in the functional setting discretization can occur after the signals have been acted upon by an \emph{infinite, continuous} group action. This is a more realistic assumption for the translations occurring in the motivating problems, such as cryo-EM. As \cite{B:MTDrotations} states in their Remark 3.3, \emph{``One potential extension of this model is
	to consider off grid translations... [this] would be a necessary extension if the presented approaches are adapted for an application problem involving real data.'' }

\subsubsection{Noise Modeling}
In addition, our problem formulation and recovery guarantees are set up for an arbitrary centered, stationary Gaussian noise process. In many applications, including cryo-EM, the noise is known to exhibit spatial correlations~\cite{hazon2022noise, bejjanki2017noise, huang2009noise}. As resolution increases, the assumption that pixelwise noise is uncorrelated becomes increasingly implausible; consequently, modeling the noise as a Gaussian process with short-range correlations (small lengthscale) may offer a more accurate representation than the universally used white-noise assumption in the MTD literature, as mentioned in the conclusion of \cite{EMdirectlyfromCryoEM}.

\subsubsection{Algorithms and Recovery Guarantees} 
The existing MTD literature has made substantial progress along two largely separate tracks. The algorithmic track has produced effective numerical methods such as autocorrelation analysis, approximate EM, and their variants \cite{MTDapptoCryoEM, L:MTDarbitraryspacing, B:MTDrotations, K:2dMTD, approxEMfor2dMTD, EMdirectlyfromCryoEM}, but offers no theoretical recovery guarantees: results consist of mean law-of-large-numbers convergence results for the autocorrelations, at best supplemented by variance scaling arguments on those intermediate quantities, rather than on the
recovered signal itself. The theoretical track establishes when recovery is information-theoretically possible using sample complexity bounds \cite{noteonsamplecomplexity, abraham2025sample, SEorbitrec}, but does not show
how to attain recovery algorithmically: the bounds are for moments, not for any specific algorithm, and no paper in the literature closes the loop from observed data to recovered signal with a quantitative error guarantee. 

This paper closes that loop for
the first time. We provide an explicit, end-to-end recovery guarantee for MTD using autocorrelation analysis in a genuinely functional setting, for compactly supported continuous signals with no band-limiting assumption and sampling-rate independent shifts.
Moreover, along the way we identify an infinite sampling-rate limit for the classical frequency marching algorithm \cite{ogfreqmarch}, constructing a novel functional version of the classic algorithm. Using this functional frequency marching algorithm, we prove a new explicit stability guarantee for bispectrum inversion via a Kotlarski-type integral formula, quantifying how perturbations in the estimated bispectrum propagate to errors in the recovered signal. This extends the analysis of bispectrum inversion in \cite{ogfreqmarch, bispecinversionMRA} to the functional setting. A distinguishing feature of our analysis is that it applies to signals with spectral decay rate $\beta > d/2$, in contrast to the analogous functional bispectrum inversion results of \cite{dou2024rates}, which require $\beta \in (0,1/2)$, a condition that excludes $L^2$ functions entirely. Our framework therefore extends the previous literature on bispectrum inversion. 
Additionally, in discretizations of the functional setting, our proposed algorithms for bispectrum inversion outperform other uninitialized algorithms for bispectrum inversion, such as the spectral algorithm proposed in \cite{spectralbispecinversion}, which is not robust to a fine sampling grid. 

\subsection{Outline}
The remainder of the paper is organized as follows. In Section~\ref{section:setup_algs}, we formalize the functional MTD model and present our two recovery algorithms. In Section~\ref{section:theory}, we establish concentration of the bispectrum estimator and prove end-to-end recovery guarantees. In Section~\ref{sec:numerics}, we present numerical experiments illustrating the performance of the proposed methods and validating the theoretical results. We close in Section \ref{sec:conclusions}.

\section{MTD: Functional Formulation and Inversion Algorithms}\label{section:setup_algs}
Let $f:\R^d \to \R$ be a compactly supported function on $D:= [-\pi,\pi]^d$, and  
let $\epsilon:\R^d \to\R$ be a centered stationary Gaussian process with covariance function $\noisecov$. We assume access to a large observation $\obs$ consisting of random shifts of the signal $f$ corrupted by the noise $\epsilon$; that is, we are given a single observation of the form
\begin{equation}\label{eq:M=F+noise}
\obs(t) = \sum_{i=1}^{\sam}f(t-x_i)+\epsilon(t) =: F(t) + \epsilon(t) \, , \qquad t \in [-\dom,\dom]^d,
\end{equation}
where $\{x_i\}_{i=1}^\sam$ are unknown latent variables, $F(t)=\sum_{i=1}^{\sam}f(t-x_i)$ is the noiseless component of the signal, and typically $\dom \gg 1$.  Our goal is to recover the hidden signal $f$ from the observation $\obs$. We assume that, as $\dom$ increases, the hidden signal continues to appear in the observation at the same rate, i.e. we consider a fixed density ratio $\gamma = \sam/(2\dom)^d$. In this setting, our proposed algorithms yield estimators of $f$ with error decaying to zero as $\dom,\sam \rightarrow \infty$.   

We adopt a two-stage approach to recovering $f$.  First, we estimate the bispectrum of $f$ from autocorrelation statistics. Second, we recover $f$ from its estimated bispectrum.

\subsection{Stage 1: Bispectrum Estimation}\label{ssec:bispectrumestimation}
Recall that the bispectrum of a function $g: \R^d \to \R$  is the function  $\B g:\R^d \times \R^d\to\mathbb{C}$  given by
\begin{equation}\label{equ:bispectrum_def}
   \B g(\omega_1,\omega_2):= \gft(\omega_1)\gft(-\omega_2)\gft(\omega_2-\omega_1) , 
        \qquad \omega_1, \omega_2 \in \R^d \, . 
\end{equation}
Here, the Fourier transform of a function $g$ is given by $g^\text{ft}(\omega) := \int_{\R^d} g(t) e^{-i\omega \cdot t} \, dt$. In this subsection we construct an estimator of the bispectrum of the signal $f$ using the third-order autocorrelation function of the data. To that end,  recall that the third-order autocorrelation of a function $g:\R^d\rightarrow \R$ on $[-\dom,\dom]^d$ is given by
\begin{align}\label{equ:A3_def}
A_3g(z_1,z_2) &= \int_{[-\dom,\dom]^d} g(t)g(t+z_1)g(t-z_2)\, dt.
\end{align}
As shown in Lemma \ref{lem:bispectrum}, for any function $g$ that is compactly supported on $[-(\dom-2\pi),\dom-2\pi]^d$, the Fourier transform of $A_3g$ produces the bispectrum of $g$, i.e.
\begin{align}\label{eq:bispectrum and A3}
    (A_3g)^{ft}(\omega_1,\omega_2) =\B g(\omega_1,\omega_2).
\end{align}

Since $f$ is compactly supported on $[-\pi,\pi]^d$, $A_3f(z_1,z_2)=0$ unless $(z_1,z_2) \in [-2\pi,2\pi]^{2d}$. Thus the only information relevant for recovering $f$ from autocorrelation statistics is contained in the domain $Z:= [-2\pi,2\pi]^{2d}$, and henceforth we will always consider $A_3g:Z\rightarrow \R$.

A key observation is that, under the well-separatedness condition in  Assumption \ref{assump:standing}(iv), $A_3F = \sam A_3f$, while the noisy observation $\obs$ satisfies
\begin{align}\label{eq:importantidentity}
\E\biggl[\frac{A_3\obs(z_1,z_2)}{\sam} \biggr] = A_3f(z_1,z_2)+\noisecov(z_2)+\noisecov(z_1)+\noisecov(z_1+z_2)  \, .
\end{align}
The proof of this identity can be found in Lemma \ref{lemma:A3M}. 
Leveraging this identity, we propose the following unbiased estimators for $A_3f$ and $\Bf$:
\begin{align}
    \label{equ:empirical_estimator}
    \widehat{A_3f}(z_1,z_2) &:= \frac{A_3\obs(z_1,z_2)}{\sam} - \noisecov(z_2)- \noisecov(z_1)-\noisecov(z_1+z_2)\, ,\quad\quad \widehat{ \Bf}:=(\widehat{A_3f})^{ft}\,.
\end{align} 

The estimator $\widehat{ \Bf}$ is not only unbiased, but it concentrates around the true bispectrum $\Bf$ as $\sam \to \infty$, as established in Proposition \ref{prop:bispectrumconcentration}.
Note that this estimator relies on knowledge of the number of occurrences $\sam$. This is not an issue, since $N$ can be estimated from the first- and second-order autocorrelation functions, as we describe in Remark \ref{rmk:n_est} and numerically demonstrate in Figure \ref{fig:n_est}. However, going forward we assume oracle knowledge of $\sam$ in order to simplify our error analysis.

\subsection{Stage 2: Bispectrum Inversion}\label{ssec:bispectruminversion}
We consider two algorithms for recovering the unknown signal from its bispectrum. The first leverages a well-known recursion formula to perform frequency marching on the Fourier series coefficients, while the second uses Kotlarski's integral formula.

\subsubsection{Functional Frequency Marching}
Frequency marching is an algorithm that exploits the definition of the bispectrum to recursively define increasingly high frequencies \cite{ogfreqmarch, bendory2017bispectrum}. Since by definition of the bispectrum \eqref{equ:bispectrum_def},
\[ \fft(\omega_1) = \frac{\B f(\omega_1,\omega_2)}{\fft(-\omega_2)\fft(\omega_2-\omega_1)}\, , \]
one can obtain $\fft(\omega+\Delta\omega)$ from $\fft(\omega)$ and $\fft(\Delta\omega)$ via
\begin{equation}\label{equ:FMupdate}
 \fft(\omega+\Delta\omega) = \frac{\B f(\omega+\Delta\omega,\omega)}{\fft(-\omega)\fft(-\Delta\omega)}\, .    
\end{equation}
Note that because $f$ is real-valued, $\fft(-\omega)=\overline{\fft(\omega)}$.
Since $f$ can be recovered from its Fourier series coefficients, i.e.
\[ f(t)=\frac{1}{(2\pi)^d}\sum_{k\in\mathbb{Z}^d} \fft(k)e^{i k \cdot t}, \]
it is sufficient to define $\fft(k)$ for $k\in\mathbb{Z}^d$, so typically \eqref{equ:FMupdate} is utilized with $|\Delta\omega|=1$. 
Algorithm \ref{alg:mainalgFM} summarizes the procedure for hidden signal recovery via frequency marching on the series coefficients. To reach frequency $k=(k_1,\dots,k_d)$, one first defines $\fft$ at $k=0$ and $k=\Delta \omega$ for all $|\Delta\omega|=1$, and then one iteratively marches along coordinate $r$ until reaching the terminal frequency $k_r$. We note that when $d=1$ and $\Delta\omega=1$, the recursive formula \eqref{equ:FMupdate} reduces to the simple form
\begin{align}\label{equ:FM1d}
\fft(j+1)
            =
            \frac{\Bf(j+1,j)}
            {\fft(-j)\fft(-1)} \, ,
\end{align}
as by symmetry only recovery of the positive frequencies is required.
If we represent $\fft(j) = r_je^{i\phi_j}$ and $\Bf(j,\ell)=r_{j,\ell}e^{i\Psi_{j,\ell}}$, then \eqref{equ:FM1d} implies the phase relationship
\begin{equation}\label{equ:FM_phase}
   \phi_{j+1} = \Psi_{j+1,j} + \phi_j + \phi_1 \, . 
\end{equation}
Since the magnitude of $\fft$ can be recovered in other ways (e.g. second-order autocorrelation function), frequency marching is classically just a procedure to recover the \textit{phase} of the unknown signal via \eqref{equ:FM_phase} or similar recursive formulas. However, we find it convenient to write this relationship in its full form (including the magnitudes) to (1) simplify its stability analysis and (2) derive a clear connection with Kotlarski's identity which we now describe. 
We work under the normalization $\fft(0)=1$; see Assumption \ref{assump:standing}(iii) and the subsequent discussion.  Under this normalization, we have that $|\fft(1)|=Bf(1,1)^{\frac{1}{2}}.$ The initialization step of Algorithm \ref{alg:mainalgFM} assumes that $\fft(1)=|\fft(1)|$, i.e. that $\phi_1=0$; this fixes a shift for the recovery, i.e. the algorithm will output $f(t-\phi_1)$.
Notice that since the bispectrum is translation invariant, any bispectrum inversion algorithm can only recover $f$ up to a shift.
                       
\begin{algorithm}[htp]
\caption{\label{alg:mainalgFM}Functional MTD via Frequency Marching on Fourier Series Coefficients}
\begin{algorithmic}[1]
    \STATE {\bf Input:} Functional data $\obs;$ noise covariance $\noisecov$; truncation parameter $K\in\mathbb{Z}^{+}$.
    
    \STATE {\bf Estimate $\Bf$:} Set
    \begin{align*}
    \widehat{A_3f}(z_1,z_2)
    &:=
    \frac{A_3 \obs(z_1,z_2)}{\sam}
    - \noisecov(z_2)- \noisecov(z_1)-\noisecov(z_1+z_2), \, \quad\quad 
    \widehat{\Bf}(\omega_1,\omega_2)
    :=
    (\widehat{A_3f})^{ft}(\omega_1,\omega_2).
    \end{align*}
    \STATE {\bf Initialize:} Let $e_1,\dots,e_d$ denote the standard basis vectors in $\mathbb{Z}^d$, and let
    \[
    \Omega_K^{(d)}:=\{k\in\mathbb{Z}^d:|k|_{\infty}\le K\}.
    \]
    Set
    \[
    \widehat{\fft}(0)=1,
    \qquad
    \widehat{\fft}(e_r)=|\widehat{\Bf}(e_r,e_r)|^{1/2},
    \qquad
    \widehat{\fft}(-e_r)=\overline{\widehat{\fft}(e_r)},
    \qquad r=1,\dots,d.
    \]

    \STATE {\bf Estimate Fourier coefficients:} For each $k=(k_1,\dots,k_d)\in\Omega_K^{(d)}$, let $\sigma_r:=\text{sgn}(k_r)$ for $r=1,\dots,d$ and set $j=(j_1,\ldots,j_d)\leftarrow \sigma_1e_1 \in\mathbb{Z}^d$.
    \FOR{$r=1,\dots,d$}
        \WHILE{$|j_r| < |k_r|$}
            \STATE
            \[
            \widehat{\fft}(j+\sigma_r e_r)
            =
            \frac{\widehat{\Bf}(j+\sigma_r e_r,j)}
            {\widehat{\fft}(-j)\widehat{\fft}(-\sigma_r e_r)},
            \qquad
            j\leftarrow j+\sigma_r e_r.
            \]
         \ENDWHILE
    \ENDFOR

    \STATE {\bf Invert:} Set
    \[
    \hatf(t):=\frac{1}{(2\pi)^d}\sum_{k\in\Omega_K^{(d)}} \widehat{\fft}(k)e^{i k \cdot t}.
    \]

    \STATE {\bf Output:} Approximation $\hatf$ to the hidden signal $f$.
\end{algorithmic}
\end{algorithm}

\subsubsection{Kotlarski's Identity}
Here we consider an alternative approach for bispectrum inversion based on Kotlarski's identity, motivated by the following result: 
\begin{prop}[Identification from bispectrum] 
\label{prop:kotlarski} 
Assume 
$f, |t|_2f(t) \in L^1(\R^d)$, $\fft(0)=1$, $\nabla\fft(0)=0$, and $\fft(\omega) \ne 0$ for all $\omega \in \R^d$. Then
    \begin{align}\label{eq:formulaft}
    \fft(\omega) = 
    \exp \left ( 
    \int_0^1 \frac{\nabla_{1} \Bf(\alpha \omega, \alpha \omega)}{\Bf(\alpha \omega,\alpha \omega)} \cdot \omega ~ d \alpha 
    \right ),
    \end{align}
    where $\nabla_1$ denotes the gradient with respect to the first argument.
\end{prop}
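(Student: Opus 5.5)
The plan is to compute the integrand in \eqref{eq:formulaft} in closed form: it should equal the logarithmic derivative of $\fft$ along the ray $\alpha\mapsto\alpha\omega$. Then integrate it, using $\fft(0)=1$ to fix the constant.

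\emph{Step 1 (regularity).} Since $f\in L^1(\R^d)$, $\fft$ is bounded and continuous. Since also $|t|_2 f(t)\in L^1(\R^d)$, differentiating under the integral sign (dominated convergence with dominating function $|t|_2|f(t)|$) shows that $\fft\in C^1(\R^d)$, with
\[
\nabla\fft(\omega)=-i\int_{\R^d} t\, f(t)e^{-i\omega\cdot t}\,dt ,
\]
which is bounded and continuous. By the definition \eqref{equ:bispectrum_def}, $\Bf(\omega_1,\omega_2)=\fft(\omega_1)\fft(-\omega_2)\fft(\omega_2-\omega_1)$ is therefore $C^1$ in $\omega_1$.

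\emph{Step 2 (computing the integrand).} By the product and chain rules,
\[
\nabla_1\Bf(\omega_1,\omega_2)=\nabla\fft(\omega_1)\fft(-\omega_2)\fft(\omega_2-\omega_1)-\fft(\omega_1)\fft(-\omega_2)\nabla\fft(\omega_2-\omega_1).
\]
Set $\omega_1=\omega_2=u$ and use $\fft(0)=1$ and $\nabla\fft(0)=0$. This gives $\nabla_1\Bf(u,u)=\nabla\fft(u)\fft(-u)$. Similarly, $\Bf(u,u)=\fft(u)\fft(-u)=|\fft(u)|^2$, where $\fft(-u)=\overline{\fft(u)}$ because $f$ is real. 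This is nonzero by hypothesis. Hence
\[
\frac{\nabla_1\Bf(u,u)}{\Bf(u,u)}=\frac{\nabla\fft(u)}{\fft(u)} .
\]
This is exactly where the normalizations $\fft(0)=1$ and $\nabla\fft(0)=0$ are used: they kill the second term and the factor $\fft(0)$.

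\emph{Step 3 (integration along the ray).} Define $h(\alpha):=\fft(\alpha\omega)$ for $\alpha\in[0,1]$. It is $C^1$, nonvanishing, and satisfies $h(0)=1$ and $h'(\alpha)=\nabla\fft(\alpha\omega)\cdot\omega$. By Step 2, the integrand in \eqref{eq:formulaft} is $h'(\alpha)/h(\alpha)$. Let
\[
G(\alpha):=\exp\!\Big(\int_0^\alpha h'(s)/h(s)\,ds\Big).
\]
Then
\[
\frac{d}{d\alpha}\big(h/G\big)=\frac{h'G-hG'}{G^2}=\frac{h'G-h\,(h'/h)\,G}{G^2}=0 ,
\]
so $h/G$ is constant. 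Since $h(0)/G(0)=1$, we get $h(1)=G(1)$, which is \eqref{eq:formulaft}. This argument deliberately avoids choosing a branch of the complex logarithm. That is the one subtle point, since $\fft$ is complex-valued and a naive ``$\log\fft$'' would not be well defined. Only the nonvanishing of $\fft$ along the segment $[0,\omega]$ and the continuity of $h'/h$ are needed, and both hold by Step 1 and the hypotheses.

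The main obstacle is mild. The key points are justifying differentiability of $\fft$ under the moment assumption (Step 1) and sidestepping the branch issue (Step 3). The algebraic simplification in Step 2 is routine.
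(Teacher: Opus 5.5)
Your proof is correct, but it takes a different route from the paper. The paper proves the proposition by citation: it invokes the Kotlarski-type identity of \cite[Theorem 2.1]{fMRA}. It observes that the functional MRA quantity $\Psi$ there coincides with $\Bf$ when the shift ``density'' is taken to be $p_\zeta(t)=f(-t)$, so that $\pft_\zeta=\overline{\fft}$. It also remarks that the cited proof uses only $\nabla\pft_\zeta(0)=0$, not that $p_\zeta$ is a probability density.

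You instead verify the identity from scratch. First you use the diagonal collapse $\nabla_1\Bf(u,u)/\Bf(u,u)=\nabla\fft(u)/\fft(u)$; this is the same computation the paper carries out later when bounding $\Delta_2$ in the proof of Proposition~\ref{prop:FTDeviation}. Then you integrate along the ray without choosing a branch, via $(h/G)'=0$.

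The paper's route is shorter and places the formula inside the replicated-measurements deconvolution framework that motivates the Kotlarski approach. However, it leaves the reader to check that the external proof survives dropping the density hypothesis. Your argument is self-contained and shows exactly where each hypothesis enters:
\begin{itemize}
\item $|t|_2 f\in L^1$ gives $C^1$ regularity;
\item $\fft(0)=1$ and $\nabla\fft(0)=0$ give the diagonal simplification;
\item nonvanishing gives continuity of the integrand.
\end{itemize}
It also makes clear that, for a fixed $\omega$, all you really need is $\Bf(\alpha\omega,\alpha\omega)\neq 0$ for $\alpha\in[0,1]$.

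One minor point: Step 2 does not need real-valuedness of $f$, since $\fft(-u)\neq 0$ already follows from the global nonvanishing hypothesis.
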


\begin{proof}
The result follows from  \cite[Theorem 2.1]{fMRA} and the observation that, in their notation, $\Psi = \Bf$ if $\pft_{\zeta}=\overline{\fft}$ (i.e. if $p_\zeta(t)=f(-t)$). We note that the proof of Theorem 2.1 in \cite{fMRA} does not require $p_\zeta$ to be a density, only that $\nabla \pft_{\zeta}(0)=0$.
\end{proof}

Kotlarski's identity originates in the deconvolution literature, where it is used to identify latent distributions from \emph{replicated measurements} \cite{kotlarski1967characterizing, rao1992identifiability}. In the classical setting, one observes two noisy replicates $Z_1=X_0+X_1$ and $Z_2=X_0+X_2$ of a latent variable $X_0$, with independent errors $X_1,X_2$, and the joint characteristic function of $(Z_1,Z_2)$ factors into a product involving the characteristic functions of the latent variable and the errors. Under a nonvanishing assumption, Kotlarski's formula recovers one of these characteristic functions from a logarithmic derivative of the joint characteristic function; see also \cite{li1998nonparametric, comte2015density, meister2007Deconvolution, kurisu2022uniform} for statistical developments and applications in deconvolution. In our previous work \cite{fMRA}, we used this perspective to recover $\fft$ from second-order statistics in functional MRA. Proposition~\ref{prop:kotlarski} shows that bispectrum inversion in the present setting admits an analogous structure, so that $\fft$ can again be recovered by the same integral mechanism.
Algorithm \ref{alg:mainalg} summarizes the procedure for hidden signal recovery via Kotlarski's identity.

\begin{rmk}[Regularization]\label{rmk:reg_conc}
In practice, the bispectrum estimator $\widehat{\Bf}$ constructed in \eqref{equ:empirical_estimator} can be regularized to an estimator $\widetilde{\Bf}$, defined as
\[
\widetilde{\Bf}(\omega_1,\omega_2) := \frac{\widehat{\Bf}(\omega_1,\omega_2)}{1 \land \left( \sqrt{\sam}\cdot|\widehat{\Bf}(\omega_1,\omega_2)|\right)} \, .
\]
This regularization is considered in \cite{kurisu2022uniform, foucart:mathCS2013} and helps prevent blow-up in Algorithms \ref{alg:mainalgFM} and \ref{alg:mainalg} due to dividing by small values of $\widehat{\Bf}$. 
\end{rmk}

\begin{algorithm}[htp]
\caption{\label{alg:mainalg}Functional MTD via Kotlarski's Identity}
\begin{algorithmic}[1]
     \STATE {\bf Input:} Functional data $\obs;$ kernel $\kerneldec;$ noise covariance $\noisecov$; bandwidth parameter $h>0.$ 
    \STATE  {\bf Estimate  $\Bf$:}
    Set \begin{align*}
    \widehat{A_3f}(z_1,z_2) := \frac{A_3 \obs(z_1,z_2)}{\sam} - \noisecov(z_2)- \noisecov(z_1)-\noisecov(z_1+z_2), \, \quad\quad \widehat{ \Bf}(\omega_1,\omega_2):=(\widehat{A_3f})^{ft}(\omega_1,\omega_2)\,.
\end{align*}
    \STATE  {\bf Estimate $\fft$:} Set
    \begin{align*}
       \widehat{\fft}(\omega)
        :=
        \exp \left ( 
            \int_0^1 \frac{\nabla_1 \widehat{\Bf}(\alpha \omega, \alpha \omega)}{\widehat{\Bf}(\alpha \omega,\alpha \omega)} \cdot \omega ~d\alpha
        \right) .
        \end{align*}
    \STATE  {\bf Deconvolve:} Set
      \begin{align*}
        \hatf(t) := \frac{1}{(2\pi)^d} \int_{\R^d} e^{i \omega \cdot  t} \hatfft(\omega) \Kft(h \omega) \, d\omega \,.
        \end{align*}
    \STATE {\bf Output}: Approximation $\hatf$ to the hidden signal $f.$ 
\end{algorithmic}
\end{algorithm}

\subsubsection{Kotlarski as Infinitesimal Limit of Frequency Marching} Although the frequency marching update \eqref{equ:FMupdate} and
Kotlarski's identity \eqref{eq:formulaft} appear different, they express the
same underlying idea at different scales. Frequency marching propagates
$\fft$ across a discrete frequency grid using finite bispectrum ratios, while
Kotlarski's identity propagates $\log \fft$ continuously by integrating a
logarithmic derivative of the bispectrum. In this sense, Kotlarski's formula
can be viewed informally as the infinitesimal limit of frequency marching.

To see this connection, consider first the one-dimensional setting and write
\[
    g(\omega):=\log \fft(\omega),
\]
on an interval where $\fft$ is nonvanishing and a branch of the logarithm has
been fixed. Formally taking the logarithm of the frequency marching update \eqref{equ:FMupdate} gives
\[
    g(\omega+\Delta\omega)
    =
    \log \Bf(\omega+\Delta\omega,\omega)
    -
    \log \fft(-\omega)
    -
    \log \fft(-\Delta\omega).
\]
Since $g(\omega)=\log \Bf(\omega,\omega)-\log \fft(-\omega)$, one obtains
\[
    g(\omega+\Delta\omega)-g(\omega)
    =
    \log \Bf(\omega+\Delta\omega,\omega)
    -
    \log \Bf(\omega,\omega)
    -
    \log \fft(-\Delta\omega).
\]
Assuming $\fft(0)=1$ and $(\fft)'(0)=0$ as in Proposition \ref{prop:kotlarski}, $\log \fft(-\Delta\omega)=O(|\Delta\omega|^2)$, and thus
\[
    \frac{g(\omega+\Delta\omega)-g(\omega)}{\Delta\omega}
    =
    \frac{
        \log \Bf(\omega+\Delta\omega,\omega)
        -
        \log \Bf(\omega,\omega)
    }{\Delta\omega}
    +
    O(|\Delta\omega|).
\]
Letting $\Delta\omega\to 0$ gives the differential relation
\[
    g'(\omega)
    =
    \partial_1 \log \Bf(\omega,\omega)
    =
    \frac{\partial_1 \Bf(\omega,\omega)}{\Bf(\omega,\omega)} .
\]
Since $g(0)=0$, integration yields
\[
    \fft(\omega)
    =
    \exp\left(
        \int_0^\omega
        \frac{\partial_1 \Bf(\xi,\xi)}
             {\Bf(\xi,\xi)}
        \, d\xi
    \right),
\]
which is the one-dimensional form of Kotlarski's identity. In higher dimensions, the analogous calculation is performed along the ray $\alpha\omega$, $0\leq \alpha \leq 1$, recovering \eqref{eq:formulaft}. 

\section{Recovery Guarantees}\label{section:theory}

We now establish performance guarantees for Algorithms \ref{alg:mainalgFM} and \ref{alg:mainalg}. We work under the following assumptions on the hidden signal, shifts, and noise.  We use $|\cdot|_2$ to denote the Euclidean norm and $\|\cdot\|_2$ to denote the $L^2$ function norm throughout the article. 

\begin{assump}[Model assumptions]\label{assump:standing} Suppose that:
    \begin{enumerate}[label=(\roman*)]
        \item The signal $f:\R^d\rightarrow\R$ is Lipschitz with constant $\Lipf$ and is compactly supported on $[-\pi,\pi]^d.$%
        \item There exists a constant $\beta>d/2$ and positive universal constants $C\ge c$ such that, for all $\omega\in\R^d,$
\[
c(1+|\omega|_2)^{-\beta}
\le
|\fft(\omega)|
\le
C(1+|\omega|_2)^{-\beta}.
\]
Furthermore, there is a positive universal constant $C_\nabla$ such that $\sup_{\omega \in \R^d} \frac{|\nabla \fft(\omega)|_2}{|\fft(\omega)|}  \leq C_\nabla.$
    \item The signal satisfies $\fft(0)=1$. Additionally, for Algorithm \ref{alg:mainalgFM}, $\fft(e_r)\in\mathbb{R}_+$ for standard basis vectors $e_r$, $r=1,\ldots,d$ and for Algorithm \ref{alg:mainalg}, $\nabla\fft(0)=0$.
    \item The shifts satisfy $|x_i-x_j|_2 >4\pi\sqrt{d} $ for $i\not=j$ 
and $F(t)$ has support contained in $[-(\dom-2\pi),\dom-2\pi]^d$.
    \item Signal occurrences have a fixed density ratio $\gamma = \sam/(2\dom)^d$ as $\dom,\sam \rightarrow \infty$.
    \item The noise $\epsilon$ is a centered and stationary Gaussian process with covariance function $\noisecov.$
    \item The noise covariance $\noisecov$ and its gradient $\nabla \noisecov$ both belong to $L^1(\R^d)\cap L^\infty(\R^d)$.
    \end{enumerate} 
\end{assump}

Examples of signals satisfying Assumption \ref{assump:standing} will be considered in the numerical experiments in Section \ref{sec:numerics}.
Note the parameter $\beta$ controls the smoothness of the hidden signal, and we restrict our attention to signals with non-vanishing Fourier transform and to shifts which are well separated. By Plancherel, Assumption \ref{assump:standing}(ii) guarantees that $\|f\|_2$ is upper and lower bounded by constants depending on $c,C,\beta,d$, and hence we regard $\|f\|_2$ as order one throughout our analysis. Additionally, note that we assume $\fft(0)=1$ only for ease of presentation: if $\fft(0)=a\neq 0$, 
the constant $a$ can be estimated by $A_1\obs / \sam$, where $\sam$ can be estimated via Remark \ref{rmk:n_est}, and the data can 
then be rescaled accordingly. 
The assumption of compact support on $[-\pi,\pi]^d$ is also for convenience (to obtain a Fourier series in terms of integer frequencies), and can be replaced with any compactly supported box $[-L,L]^d$.
Similarly, the purpose of Assumption \ref{assump:standing}(iii) is to fix a unique centering for the recovered signal, since $\Bf$ only defines $f$ up to translation; if not satisfied, Algorithm \ref{alg:mainalg} will return an approximation of $f(x+m)$ for $m = \int xf(x)\ dx$ and Algorithm \ref{alg:mainalgFM} will return an approximation of $f(x-\phi)$ for $\phi_r = \arg\fft(e_r), \phi = (\phi_1,\ldots,\phi_d)$. 

Algorithms \ref{alg:mainalgFM} and \ref{alg:mainalg} can also be extended to handle the case of vanishing Fourier transform. Although Assumption \ref{assump:standing}(ii) imposes a global nonvanishing condition on  $\fft$, Algorithm \ref{alg:mainalgFM} only requires that the Fourier coefficients used in the marching recursion are nonzero. 
Thus Algorithm \ref{alg:mainalgFM} can in principle tolerate $\fft$ having zeros
away from the integer lattice, in which case the stability constants depend on the minimum magnitude of the relevant coefficients. Likewise, Algorithm \ref{alg:mainalg} can be extended to the vanishing case via the algorithm discussed in \cite[Section 4]{fMRA}, which gives a generalized Kotlarski estimator relying on empirical estimation of the zeros of $\fft$. However to simplify our analysis and presentation, we consider only nonvanishing Fourier transforms here.

\subsection{Main Results}\label{subsec:main_results}

Our recovery guarantees rely on two key ingredients. Proposition \ref{prop:bispectrumconcentration} establishes concentration of the bispectrum estimator, while Proposition \ref{prop:FTDeviation} establishes the stability of 
bispectrum inversion under small perturbations of the bispectrum. Combining these two propositions yields our main result, Theorem \ref{thm:ErrorBoundDeconvolutionEstimator}, which provides an $L^2$ recovery guarantee for the hidden signal.

\begin{prop}[Concentration of $\widehat{\Bf}$]\label{prop:bispectrumconcentration}
Under Assumption \ref{assump:standing}, with probability at least $1-\delta$ the bispectrum estimator defined in \eqref{equ:empirical_estimator} satisfies
\begin{align*}
    \sup_{\omega_1,\omega_2 \in \R^d} | \widehat{\Bf}(\omega_1, \omega_2) - {\Bf}(\omega_1, \omega_2) | \leq \rho_\sam \, ,\quad
     \sup_{\omega_1,\omega_2 \in \R^d} | \nabla_1\widehat{\Bf}(\omega_1, \omega_2) - \nabla_1\Bf(\omega_1, \omega_2) | \leq \rho_\sam,
\end{align*}
where
\begin{small}
\begin{align}\label{equ:def_of_rhom}
    \rho_\sam= C_\B\left[(-\log \delta)^{3/2}\sqrt{\frac{\boundcov^2\|\nabla\noisecov\|_{1}+\boundcov \Lipcov\|\noisecov\|_{1}}{\sam}}+(-\log \delta)\sqrt{\frac{\Lipcov\fLtwo^2 \covLone}{\sam}}+(-\log \delta)^{1/2}\sqrt{\frac{\Lipf^2 \fsup^2 \|\noisecov\|_{1}}{\sam}}\right],
\end{align}
and $C_\B$ is a constant independent of $N$. 
\end{small}
\end{prop}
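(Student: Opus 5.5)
The plan is to expand the empirical third-order autocorrelation of $\obs=F+\epsilon$ into Wiener chaos components, control each component uniformly on the compact lag domain $Z=[-2\pi,2\pi]^{2d}$, and then transfer the bound to the Fourier side. The transfer is the easy part. Since $\widehat{A_3f}-A_3f$ is regarded as a function on $Z$, for every $(\omega_1,\omega_2)$ we have
\[
|\widehat{\Bf}-\Bf|(\omega_1,\omega_2)\le \|\widehat{A_3f}-A_3f\|_{L^1(Z)} .
\]
Similarly, $\nabla_1$ of the Fourier transform multiplies the integrand by $-iz_1$ with $|z_1|_2\le 2\pi\sqrt d$, so
\[
|\nabla_1\widehat{\Bf}-\nabla_1\Bf|(\omega_1,\omega_2)\le 2\pi\sqrt d\,\|\widehat{A_3f}-A_3f\|_{L^1(Z)} .
\]
Both suprema in the statement are therefore controlled by $C\sup_{z\in Z}|\widehat{A_3f}(z)-A_3f(z)|$, and it suffices to prove a uniform-in-$z$ high-probability bound with the three terms of \eqref{equ:def_of_rhom}.

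\textbf{Step 1 (decomposition).} Expanding $\obs(t)\obs(t+z_1)\obs(t-z_2)$ and using $A_3F=\sam A_3f$ together with the identity of Lemma \ref{lemma:A3M} gives $\widehat{A_3f}-A_3f=\frac1\sam(E_1+E_2+E_3)$, where:
\begin{itemize}
\item $E_1$ collects the three terms of the form $\int F F\epsilon$; it is a centered Gaussian field, i.e.\ first chaos.
\item $E_2$ collects the three terms $\int F\,\epsilon\epsilon$ minus their means $\sam\,\noisecov(\cdot)$, using $\int F=\sam\fft(0)=\sam$; it is a second-chaos field.
\item $E_3=\int\epsilon(t)\epsilon(t+z_1)\epsilon(t-z_2)\,dt$; it has mean zero by Gaussian symmetry and is a third-chaos field after subtracting lower-order projections, which are again first-chaos terms of the same type.
\end{itemize}

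\textbf{Step 2 (variances).} For fixed $z$, compute second moments via Isserlis/Wick. Because the $\sam$ copies are well separated (Assumption \ref{assump:standing}(iv)), cross terms between copies decouple up to covariance integrals. The expected sizes are:
\begin{itemize}
\item $\operatorname{Var}E_1\lesssim \sam\|f\|_\infty^2\|f\|_1\|\noisecov\|_1$;
\item $\operatorname{Var}E_2\lesssim \sam\|f\|_2^2\|\noisecov\|_\infty\|\noisecov\|_1$;
\item $\operatorname{Var}E_3\lesssim (2\dom)^d\|\noisecov\|_\infty^2\|\noisecov\|_1=\gamma^{-1}\sam\|\noisecov\|_\infty^2\|\noisecov\|_1$, using the fixed density ratio in Assumption \ref{assump:standing}(v).
\end{itemize}
After dividing by $\sam$, each standard deviation is $O(\sam^{-1/2})$. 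The Lipschitz-type quantities appearing in $\rho_\sam$ ($\Lipf$, $\Lipcov$, $\|\nabla\noisecov\|_1$) come from the same computation applied to increments $E_p(z)-E_p(z')$. These increments have variance $\lesssim |z-z'|$ times the stated constants: differentiate $\noisecov$ or $f$ under the integral and use Assumption \ref{assump:standing}(i),(vii).

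\textbf{Step 3 (uniform tails).} This is the main obstacle: making the bound uniform in $z$ with exactly the powers $(-\log\delta)^{p/2}$. I would first try hypercontractivity for $p$-th chaos, $\|X\|_{L^q}\le (q-1)^{p/2}\|X\|_{L^2}$, which yields tails $\exp(-c(t/\sigma)^{2/p})$. I would apply it to the pointwise values and to the increments, and feed the increment bounds into a chaining (Kolmogorov/Dudley-type) argument over the fixed compact set $Z$. Since $Z$ does not grow with $\sam$, the entropy contributes only a constant absorbed into $C_\B$.

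If chaining becomes messy, a fallback is the Sobolev embedding $\sup_Z|g|\lesssim \|g\|_{W^{s,q}(Z)}$ for $sq>2d$, combined with hypercontractivity applied to the $L^q(Z)$-norms of $E_p$ and of its $z$-derivatives. Taking $q\asymp\log(1/\delta)$ and applying Markov's inequality then gives the $(-\log\delta)^{1/2},(-\log\delta),(-\log\delta)^{3/2}$ factors for $p=1,2,3$ respectively. A union bound over the three chaos levels, with $\delta/3$ each, finishes the proof.
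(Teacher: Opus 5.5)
Your proposal is correct and follows essentially the same route as the paper. The Fourier-side transfer is Lemma~\ref{lemma:bispectrum < A3f}, your first/second/third-chaos split is Lemma~\ref{lemma:error decomp}, your Isserlis-based increment bounds are Lemma~\ref{lemma:bound on canonical distance}, and uniformity comes from chaining sub-$k$th chaos fields over the fixed compact set $Z$ with a $\delta/3$ union bound; the only differences are the off-the-shelf tools (the paper cites Major's tail bounds for multiple Wiener--It\^o integrals and the Viens--Vizcarra supremum and concentration theorems where you use hypercontractivity and Dudley-type chaining), and in both versions the pointwise-variance scale needed to anchor the chain, or to concentrate the supremum around its mean, is absorbed into $C_\B$ rather than appearing explicitly in $\rho_\sam$.
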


\begin{proof}
The result follows from Lemmas \ref{lemma:bispectrum < A3f}, \ref{lemma:error decomp}, and \ref{lem:concentration of A3}, which are stated and proved in Section \ref{subsec:bispectrum_conc}. 
\end{proof}

We note that the concentration results in Proposition \ref{prop:bispectrumconcentration} also hold for the regularized estimator $\widetilde{\Bf}$ defined in Remark \ref{rmk:reg_conc}. 

\begin{example}[Squared exponential noise covariance]\label{ex:squaredexponential}
Suppose that the noise is a centered Gaussian process with squared exponential covariance function
\begin{align}
\label{equ:squared_exp_kernel}
    \noisecov(h) = \sigma^2 \exp\left( -\frac{|h|_2^2}{2\ell^2} \right),
\end{align}
where $\sigma^2$ controls the noise level and $\ell$ is the correlation lengthscale. Then  
\begin{align*}
    \boundcov &= \sigma^2\, ,\qquad  \qquad 
    \|\noisecov\|_{1}  = \sigma^2(2\pi \ell^2)^{d/2}\, ,   \\
    \Lipcov  &= \frac{\sigma^2}{\ell \sqrt{e}}\, ,  \qquad 
    \|\nabla\noisecov\|_{1} = 2^{(d+1)/2}\sigma^2\pi^{d/2}\ell^{d-1}
    \frac{\Gamma(\frac{d+1}{2})}{\Gamma(\frac{d}{2})}\, .
\end{align*}
If $\delta\in(0,e^{-1})$ is fixed, then the value of $\rho_\sam$ in
\eqref{equ:def_of_rhom} scales with the noise
parameters $\sigma$ and $\ell$ as
\begin{align}\label{eq:rhosemisimplified}
\rho_\sam
\asymp
\frac{\ell^{(d-1)/2}\sigma^3}{\sqrt{\sam}}
+
\frac{\ell^{(d-1)/2}\sigma^2}{\sqrt{\sam}}
+
\frac{\ell^{d/2}\sigma}{\sqrt{\sam}}.
\end{align}
This scaling suggests that, as expected, larger noise level $\sigma$ and correlation lengthscale $\ell$ make bispectrum estimation harder. In particular, the concentration becomes slower as $\ell$ increases: for fixed estimation accuracy on a fixed window, longer-range correlations require a longer observation, i.e., larger $\sam$.

If, in addition, the noise level is large enough so that
$\ell^{1/2}\lesssim \sigma^2$ and $\sigma\gtrsim 1$, then \eqref{eq:rhosemisimplified} simplifies to
\[
\rho_\sam
\lesssim
\frac{\ell^{(d-1)/2}\sigma^3}{\sqrt{\sam}}.
\]
\end{example}

\begin{example}[White noise]
The white noise covariance model can be heuristically viewed as the limit of squared exponential kernels with $\ell\searrow 0$ and $\sigma^2 = \ell^{-d}({2\pi})^{-d/2}.$ Under this scaling, $\rho_N$ blows up as $\ell \searrow 0$ in any dimension. The fact that our concentration bounds and recovery guarantees deteriorate in this limiting regime is expected, since white noise is not supported in $L^2$ and its covariance is not represented by an $L^1\cap L^\infty$ function. 
\end{example}

\nc

We now turn to the second main ingredient of our signal recovery theory: stability of bispectrum inversion under small perturbations.

\begin{prop}[Stability of Algorithms \ref{alg:mainalgFM} and \ref{alg:mainalg}] \label{prop:FTDeviation}
Under Assumption \ref{assump:standing}, let \(\widehat{\Bf}\) be an
approximation of the bispectrum \(\Bf\). Then the following stability
estimates hold.
\begin{enumerate}[leftmargin=*]
\item 

  If  
  \[ \sup_{(k,m) \in\mathbb{Z}^{2d}}|\widehat{\Bf}(k,m) - \Bf(k,m)| \leq \epsilon \, , \]
  then for $K>1$ satisfying $\frac{2^{\beta} \sqrt{d}}{c^3}(1+\sqrt{d}K)^{2\beta+1} \epsilon \leq \frac12,$ where $c$ is the constant in Assumption \ref{assump:standing}(ii), 
  the approximate Fourier coefficients produced by Algorithm \ref{alg:mainalgFM} satisfy
  \[ \sup_{k \in \mathbb{Z}^d \cap [-K,K]^d}|\widehat{\fft}(k)-\fft(k)| \lesssim K^{\beta+1}\epsilon \, .\]

  \item If  
\begin{align*}
    \sup_{\omega \in \R^d} | \widehat{\Bf}(\omega, \omega) - {\Bf}(\omega, \omega) | \leq \epsilon\, , \quad
     \sup_{\omega \in \R^d} | \nabla_1\widehat{\Bf}(\omega, \omega) - \nabla_1\Bf(\omega, \omega) | \leq \epsilon \, ,
\end{align*}
then for $h \in (0,1)$ satisfying $\frac{1}{c^2}(1+\sqrt{d}h^{-1})^{2\beta+1}\epsilon\leq \frac{1}{2}$, where $c$ is the constant in Assumption \ref{assump:standing}(ii), 
the approximation $\hatfft$ produced by Algorithm \ref{alg:mainalg} satisfies
    \begin{align*}
        \sup_{\omega \in [-h^{-1},h^{-1}]^d}
        \left|\hatfft(\omega) - \fft(\omega) \right|
        \lesssim h^{-\beta-1}\epsilon.
    \end{align*}
    \end{enumerate}
    
\end{prop}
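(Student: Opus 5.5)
For part (1) the plan is to compare the true and estimated marching recursions multiplicatively rather than additively. Write $\widehat{\fft}(k)=\fft(k)(1+\eta_k)$ and $\widehat{\Bf}(k,m)=\Bf(k,m)(1+\theta_{k,m})$. Assumption \ref{assump:standing}(ii) gives $|\Bf(k,m)|\ge c^3(1+|k|_2)^{-\beta}(1+|m|_2)^{-\beta}(1+|m-k|_2)^{-\beta}$. For lattice points along the marching path with $|k|_\infty\le K$ this is bounded below by roughly $c^3 2^{-\beta}(1+\sqrt d K)^{-2\beta-1}$ once the bounded third factor (of size $(1+\sqrt d)^{-\beta}$ since $k-m=\pm e_r$) is absorbed. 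Hence $|\theta_{k,m}|\le 2^\beta c^{-3}(1+\sqrt d K)^{2\beta+1}\epsilon/\sqrt d\cdot(\ldots)$, and the hypothesis makes this at most $1/2$. The constants are chosen exactly so that the displayed condition yields $|\theta|\le \tfrac12$ with a factor to spare.

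Initialization gives $|\eta_{\pm e_r}|\lesssim \epsilon$ directly. Here $\fft(e_r)=|\Bf(e_r,e_r)|^{1/2}$, since $\Bf(e_r,e_r)=|\fft(e_r)|^2\fft(0)$, and $\sqrt{\cdot}$ is Lipschitz away from zero. The recursion then gives
\[
1+\eta_{j+\sigma e_r}=\frac{1+\theta_{j+\sigma e_r,j}}{(1+\overline{\eta_j})(1+\overline{\eta_{\sigma e_r}})}.
\]
I would pass to logarithms: with $\lambda_k=\log(1+\eta_k)$ the errors add, $|\lambda_{k}|\le \sum_{\text{path}}|\log(1+\theta)|+|k|_1|\lambda_{e}|$. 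The path to $k$ has at most $dK$ steps, and the step-$n$ bispectrum error is at most $2\epsilon/|\Bf|$ along the path. Summing the lower bound $|\Bf(j+e,j)|\gtrsim (1+|j|)^{-2\beta}$ over the path gives the accumulated relative error $\sum_{n\le |k|}(1+n)^{2\beta}\epsilon\lesssim (1+|k|)^{2\beta+1}\epsilon\le 1/2$. This keeps $|\lambda_k|$ bounded, so $|\eta_k|\lesssim|\lambda_k|$.

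Finally, $|\widehat{\fft}(k)-\fft(k)|=|\fft(k)||\eta_k|\lesssim (1+|k|)^{-\beta}(1+|k|)^{2\beta+1}\epsilon= (1+|k|)^{\beta+1}\epsilon\lesssim K^{\beta+1}\epsilon$. The main obstacle is controlling the compounding of multiplicative errors: products of $(1+\eta)$ factors could blow up. The log reformulation together with the hypothesis $\lesssim 1/2$ is what tames this, and I would be careful that the accumulated sum, not just each term, is bounded by the given quantity.

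For part (2), the analogous continuous argument applies. Set $\Lambda(\omega)=\int_0^1 \frac{\nabla_1\Bf(\alpha\omega,\alpha\omega)}{\Bf(\alpha\omega,\alpha\omega)}\cdot\omega\,d\alpha$ and let $\widehat\Lambda$ be its estimated counterpart, so that $\fft=e^{\Lambda}$ by Proposition \ref{prop:kotlarski} and $\hatfft=e^{\widehat\Lambda}$. Write $|\widehat{\Bf}-\Bf|\le\epsilon$, and note that $|\Bf(\xi,\xi)|=|\fft(\xi)|^2|\fft(0)|\ge c^2(1+|\xi|)^{-2\beta}$. For $|\omega|_\infty\le h^{-1}$ the hypothesis then gives $|\widehat\Bf/\Bf-1|\le1/2$ along the ray. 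The integrand difference is bounded by
\[
\frac{|\nabla_1\widehat\Bf-\nabla_1\Bf|}{|\widehat\Bf|}+\frac{|\nabla_1\Bf|\,|\widehat\Bf-\Bf|}{|\Bf||\widehat\Bf|}\lesssim \frac{\epsilon}{|\Bf|}\Big(1+\frac{|\nabla_1\Bf|}{|\Bf|}\Big).
\]
Here $|\nabla_1\Bf|/|\Bf|\le 2C_\nabla$ by the logarithmic-derivative assumption. Integrating against $|\omega|\lesssim h^{-1}$ gives $|\widehat\Lambda-\Lambda|\lesssim h^{-1}(1+\sqrt d h^{-1})^{2\beta}\epsilon\lesssim 1$. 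Then $|e^{\widehat\Lambda}-e^{\Lambda}|\le|\fft(\omega)|e^{|\widehat\Lambda-\Lambda|}|\widehat\Lambda-\Lambda|\lesssim(1+|\omega|)^{-\beta}h^{-2\beta-1}\epsilon$. This is crude; to reach the stated $h^{-\beta-1}$ I would keep $|\omega|$-dependence pointwise: the integral of $(1+\alpha|\omega|)^{2\beta}|\omega|$ is $\lesssim(1+|\omega|)^{2\beta+1}$, and multiplying by $|\fft(\omega)|\lesssim(1+|\omega|)^{-\beta}$ gives $(1+|\omega|)^{\beta+1}\epsilon\lesssim h^{-\beta-1}\epsilon$.
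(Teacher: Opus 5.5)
Your proposal is correct and follows essentially the paper's proof. For part (1), you sum principal logarithms of the bispectrum ratios along the marching path using $|\log(1+z)|\le 2|z|$ and the lower bound on $|\Bf|$ from Assumption \ref{assump:standing}(ii), then exponentiate and multiply by $|\fft(k)|\lesssim(1+|k|_2)^{-\beta}$. For part (2), you perturb the Kotlarski log-derivative integrand, use $|\nabla_1\Bf/\Bf|\lesssim C_\nabla$ on the diagonal, and keep the pointwise factor $|\omega|_2(1+|\omega|_2)^{2\beta}$ before multiplying by $|\fft(\omega)|$. Your two-term split, which divides the gradient error by $\widehat{\Bf}$, just absorbs the paper's quadratic term $\Delta_3$.

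One small slip: along the path $k-m=\pm e_r$ with $|e_r|_2=1$, so the third factor is $c\,2^{-\beta}$, not $(1+\sqrt d)^{-\beta}$. It is this value that makes the stated hypothesis give $|\theta|\le\tfrac12$.
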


\begin{proof}
The proof is given in Section \ref{subsec:stability_of_bisp_inv}. 
\end{proof}

Having controlled the error of our bispectrum estimator in Proposition \ref{prop:bispectrumconcentration} and established the stability of Algorithms \ref{alg:mainalgFM} and \ref{alg:mainalg} under perturbations of the bispectrum in Proposition \ref{prop:FTDeviation}, we are ready to state and prove our main result on signal recovery. For Algorithm \ref{alg:mainalg}, we also impose the following assumption on the deconvolution kernel $\kerneldec$:

\begin{assump}[Deconvolution kernel]\label{assumption:kernel}
For a positive even integer $p$, the kernel $\kerneldec:\R^d \to \R$ satisfies:
\begin{enumerate}[label=(\roman*)]
    \item $\kerneldec\in L^1(\R^d)$ and  $  \int_{\R^d} \kerneldec(t) \, dt = 1.$
    \item $ \int_{\R^d} t^\nu \kerneldec(t) \, dt = 0 $ for all multi-indices $\nu \in \N^d$ with $1 \le |\nu| \le p-1.$ \label{assp:vanishing moment}
    \item $t \mapsto t^\nu \kerneldec(t) \in L^1(\R^d)$ for all multi-indices $\nu \in \N^d$ with $ |\nu| = p.$
    \item $  \int_{\R^d} |t|_2^p\, |\kerneldec(t)| \, dt \le C_\kerneldec $ where $C_\kerneldec$ is a finite constant allowed to depend on $\kerneldec, d$, and $p.$ 
    \item $ \Kft(\omega) = 0 $ for all $ |\omega|_\infty > 1 .$ 
\end{enumerate}
\end{assump}

\begin{thm}[Main result]\label{thm:ErrorBoundDeconvolutionEstimator}
Suppose Assumption~\ref{assump:standing} holds. Let $\rho_\sam$ be defined as in \eqref{equ:def_of_rhom}. For Algorithm \ref{alg:mainalg}, assume in addition that the deconvolution kernel $\kerneldec$ satisfies Assumption \ref{assumption:kernel} with $p>\beta-d/2$. Then, with probability at least $1-\delta$, the following statements hold.
    
\begin{enumerate}[leftmargin=*]
    \item The estimator $\hatf$ produced by Algorithm \ref{alg:mainalgFM} satisfies
    \begin{align*}
        \|\hatf-f\|_2
        \lesssim
        K^{d/2+\beta+1}\rho_\sam
        + K^{-\beta+d/2},
    \end{align*}
    for all $K>1$ satisfying $\frac{2^{\beta}\sqrt{d}}{c^3}(1+\sqrt{d}K)^{2\beta+1} \rho_N \leq \frac12$. In particular, choosing
    \[
        K=K^*\asymp \rho_\sam^{-1/(2\beta+1)}
    \]
    with the implicit constant sufficiently small and when $N$ is large enough so that $K^*>1$, we have
    \begin{align*}
        \|\hatf-f\|_2
        \lesssim
        \rho_\sam^{\frac{\beta-d/2}{2\beta+1}}.
    \end{align*}

    \item The estimator $\hatf$ produced by Algorithm \ref{alg:mainalg} satisfies
    \begin{align*}
       \|\hatf-f\|_2
       \lesssim
       h^{-d/2-\beta-1}\rho_\sam
       + h^{\beta-d/2},
    \end{align*}
    for all $h\in(0,1)$ satisfying $\frac{1}{c^2}(1+\sqrt{d}h^{-1})^{2\beta+1}\rho_\sam\leq \frac{1}{2}$. In particular, choosing
    \[
        h=h^*\asymp \rho_\sam^{1/(2\beta+1)}
    \]
    with the implicit constant sufficiently large and when $N$ is large enough so that $h^*\in(0,1)$, we have
    \begin{align*}
        \|\hatf-f\|_2
        \lesssim
        \rho_\sam^{\frac{\beta-d/2}{2\beta+1}}.
    \end{align*}
\end{enumerate}
\end{thm}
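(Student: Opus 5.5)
We argue on the event of probability at least $1-\delta$ given by Proposition~\ref{prop:bispectrumconcentration}. On this event both $\sup|\widehat{\Bf}-\Bf|\le\rho_\sam$ and $\sup|\nabla_1\widehat{\Bf}-\nabla_1\Bf|\le\rho_\sam$ hold uniformly over $\R^{2d}$, in particular on $\mathbb{Z}^{2d}$ and on the diagonal. The proof is then deterministic. We feed these bounds into Proposition~\ref{prop:FTDeviation} with $\epsilon=\rho_\sam$; the side conditions on $K$ and $h$ in the theorem are exactly those required there. In each case we split the $L^2$ error by Plancherel into a variance term, coming from frequencies where $\widehat{\fft}$ is used, and a bias term, coming from the high frequencies that are truncated. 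The bias is controlled by the decay $|\fft(\omega)|\le C(1+|\omega|_2)^{-\beta}$ in Assumption~\ref{assump:standing}(ii).

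\textbf{Algorithm~\ref{alg:mainalgFM}.} Since $f$ is supported in $[-\pi,\pi]^d$, Parseval for Fourier series on this box gives
\[
\|\hatf-f\|_2^2=(2\pi)^{-d}\sum_{k\in\Omega_K^{(d)}}|\widehat{\fft}(k)-\fft(k)|^2+(2\pi)^{-d}\sum_{|k|_\infty>K}|\fft(k)|^2 ,
\]
with the $L^2$ norm taken over $[-\pi,\pi]^d$, on which both functions are compared. Part (1) of Proposition~\ref{prop:FTDeviation} bounds each of the $(2K+1)^d$ terms in the first sum by $K^{2\beta+2}\rho_\sam^2$, so the first term is $\lesssim K^{d+2\beta+2}\rho_\sam^2$. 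The second sum is $\lesssim\sum_{|k|_\infty>K}|k|^{-2\beta}\lesssim K^{d-2\beta}$, which is finite because $\beta>d/2$. Taking square roots gives the first claimed bound.

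Next we balance the two terms by setting $K^{d/2+\beta+1}\rho_\sam\asymp K^{d/2-\beta}$, which gives $K\asymp\rho_\sam^{-1/(2\beta+1)}$. With this choice $K^{2\beta+1}\rho_\sam$ is a constant, so taking the implicit constant small enough makes the admissibility condition $\frac{2^\beta\sqrt d}{c^3}(1+\sqrt dK)^{2\beta+1}\rho_\sam\le\frac12$ hold. Substituting back yields the rate $\rho_\sam^{(\beta-d/2)/(2\beta+1)}$.

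\textbf{Algorithm~\ref{alg:mainalg}.} By Plancherel on $\R^d$,
\[
(2\pi)^d\|\hatf-f\|_2^2=\int|\widehat{\fft}\,\Kft(h\cdot)-\fft|^2\le 2\int|\widehat{\fft}-\fft|^2|\Kft(h\omega)|^2\,d\omega+2\int|\fft|^2|\Kft(h\omega)-1|^2\,d\omega .
\]

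For the first term, Assumption~\ref{assumption:kernel}(v) restricts the integral to $[-h^{-1},h^{-1}]^d$. There $|\Kft|\le\|\kerneldec\|_1$, and part (2) of Proposition~\ref{prop:FTDeviation} gives $|\widehat{\fft}-\fft|\lesssim h^{-\beta-1}\rho_\sam$. Integrating over a box of volume $(2/h)^d$, the first term is $\lesssim h^{-d-2\beta-2}\rho_\sam^2$.

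For the bias term we use the vanishing moments. Assumption~\ref{assumption:kernel}(i)--(iv) imply $\Kft\in C^p$ with $\Kft(0)=1$ and vanishing derivatives of orders $1,\dots,p-1$ at the origin. Taylor's theorem then gives $|\Kft(h\omega)-1|\lesssim C_\kerneldec\min\{(h|\omega|)^p,1\}$, using boundedness of $\Kft$ for the second bound. We split the integral at $|\omega|=h^{-1}$:
\[
\int_{|\omega|\le h^{-1}}h^{2p}|\omega|^{2p}(1+|\omega|)^{-2\beta}\,d\omega+\int_{|\omega|>h^{-1}}(1+|\omega|)^{-2\beta}\,d\omega .
\]
Because $2p>2\beta-d$, the first integrand grows fast enough that the integral is dominated by the outer shell, giving $h^{2p}h^{-(2p-2\beta+d)}=h^{2\beta-d}$. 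The second integral is also $\lesssim h^{2\beta-d}$ since $2\beta>d$. Hence the bias contributes $h^{\beta-d/2}$.

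The balancing is the same as before: $h\asymp\rho_\sam^{1/(2\beta+1)}$, with the implicit constant large enough that $\frac1{c^2}(1+\sqrt dh^{-1})^{2\beta+1}\rho_\sam\le\frac12$. This gives the rate $\rho_\sam^{(\beta-d/2)/(2\beta+1)}$.

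\textbf{Main obstacle.} All the hard analysis is already contained in Propositions~\ref{prop:bispectrumconcentration} and~\ref{prop:FTDeviation}. The only delicate point here is the kernel bias estimate. It needs the Taylor bound on $\Kft-1$ to hold globally, which requires $\Kft$ to be bounded, and it needs $p>\beta-d/2$ so that the polynomial part of the split integral scales as $h^{2\beta-d}$ rather than being dominated by its low-frequency region.
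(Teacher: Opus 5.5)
Your proposal is correct and follows essentially the same route as the paper's proof. You work on the concentration event of Proposition~\ref{prop:bispectrumconcentration} and apply Proposition~\ref{prop:FTDeviation} with $\epsilon=\rho_\sam$. You then split the $L^2$ error via Plancherel into a variance term and a bias term. The bias is handled by tail decay for Algorithm~\ref{alg:mainalgFM}, and by a Taylor bound on $\Kft-1$ from the vanishing moments with $p>\beta-d/2$, plus the tail, for Algorithm~\ref{alg:mainalg}. Finally you balance the two terms. Your explicit remarks that the norm for Algorithm~\ref{alg:mainalgFM} is taken over $[-\pi,\pi]^d$ and that $|\Kft|\le\|\kerneldec\|_1$ only make explicit what the paper leaves implicit.
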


\begin{proof}
By Proposition \ref{prop:bispectrumconcentration}, with probability at least $1-\delta$,
\[
    \sup_{\omega_1,\omega_2\in\R^d}
    |\widehat{\Bf}(\omega_1,\omega_2)-\Bf(\omega_1,\omega_2)|
    \leq \rho_\sam,
\qquad 
    \sup_{\omega_1,\omega_2\in\R^d}
    |\nabla_1\widehat{\Bf}(\omega_1,\omega_2)-\nabla_1\Bf(\omega_1,\omega_2)|
    \leq \rho_\sam .
\]
We work on this event throughout the proof.

\textbf{Algorithm \ref{alg:mainalgFM}:} Proposition \ref{prop:FTDeviation} gives
\[
    \sup_{k\in\mathbb Z^d\cap[-K,K]^d}
    |\widehat{\fft}(k)-\fft(k)|
    \lesssim K^{\beta+1}\rho_\sam,
\]
provided that $\frac{2^{\beta}}{c^3}(1+\sqrt{d}K)^{2\beta+1} \rho_N \leq \frac12$.
Hence, by Plancherel's identity,
\begin{align*}
    \|\hatf-f\|_2^2
    &=
    \frac{1}{(2\pi)^d}\sum_{k\in \mathbb{Z}^d}
    |\widehat{\fft}(k)-\fft(k)|^2 \\
    &=
    \frac{1}{(2\pi)^d}\sum_{k\in\Omega_K^{(d)}}
    |\widehat{\fft}(k)-\fft(k)|^2
    +
    \frac{1}{(2\pi)^d}\sum_{k\in(\Omega_K^{(d)})^c}
    |\fft(k)|^2 \\
    &\lesssim
    \sum_{k\in\Omega_K^{(d)}} K^{2\beta+2}\rho_\sam^2
    +
    \sum_{k\in(\Omega_K^{(d)})^c} |k|_2^{-2\beta} \\
    &\lesssim
    K^{d+2\beta+2}\rho_\sam^2
    +
    K^{-2\beta+d}.
\end{align*}
Since $\beta>d/2$, taking square roots gives
\[
    \|\hatf-f\|_2
    \lesssim
    K^{d/2+\beta+1}\rho_\sam
    +
    K^{-\beta+d/2}.
\]
Setting $K\asymp \rho_\sam^{-1/(2\beta+1)}$ while keeping the implicit constant sufficiently small ensures that $\frac{2^{\beta}}{c^3}(1+\sqrt{d}K)^{2\beta+1} \rho_N \leq \frac12$ is satisfied, and when $N$ is large enough so that $K>1$, we obtain
\[
    \|\hatf-f\|_2
    \lesssim
    \rho_\sam^{\frac{\beta-d/2}{2\beta+1}}.
\]

\textbf{Algorithm \ref{alg:mainalg}:} Plancherel's identity and the definition
\[
    (\widehat f)^{ft}(\omega)=\hatfft(\omega)\Kft(h\omega)
\]
give
\begin{align*}
    \|\widehat{f}-f\|_2^2
    &=
    \frac{1}{(2\pi)^d}\int_{\R^d}
    |(\widehat{f})^{ft}(\omega)-\fft(\omega)|^2\,d\omega \\
    &=
    \frac{1}{(2\pi)^d}\int_{\R^d}
    |\hatfft(\omega)\Kft(h\omega)-\fft(\omega)|^2\,d\omega \\
    &\lesssim
    \int_{\R^d}
    |\Kft(h\omega)|^2
    |\hatfft(\omega)-\fft(\omega)|^2\,d\omega
    +
    \int_{\R^d}
    |\fft(\omega)|^2
    |\Kft(h\omega)-1|^2\,d\omega \\
    &=: I_1+I_2.
\end{align*}

For $I_1$, by Assumption \ref{assumption:kernel}(v), $\Kft(h\omega)=0$ unless $|h\omega|_\infty\le 1$, and hence unless $\omega\in[-h^{-1},h^{-1}]^d$. Therefore, using Proposition \ref{prop:FTDeviation},
\begin{align}\label{eq:spatial recovery I1}
    I_1
    &\lesssim
    \int_{\omega\in[-h^{-1},h^{-1}]^d}
    |\hatfft(\omega)-\fft(\omega)|^2\,d\omega \nonumber\\
    &\lesssim
    \int_{\omega\in[-h^{-1},h^{-1}]^d}
    \left(\rho_\sam h^{-\beta-1}\right)^2\,d\omega
    \lesssim
    \rho_\sam^2 h^{-d-2\beta-2},
\end{align}
provided that $\frac{1}{c^2}(1+\sqrt{d}h^{-1})^{2\beta+1}\rho_\sam\leq \frac{1}{2}$ is satisfied.

For $I_2$, Assumption \ref{assump:standing}(ii) gives
\begin{align*}
    I_2
    &\lesssim
    \int_{\R^d}
    (1+|\omega|_2)^{-2\beta}
    |\Kft(h\omega)-1|^2\,d\omega \\
    &\lesssim
    \int_{|\omega|_2\le h^{-1}}
    (1+|\omega|_2)^{-2\beta}
    |\Kft(h\omega)-1|^2\,d\omega
    +
    \int_{|\omega|_2>h^{-1}}
    (1+|\omega|_2)^{-2\beta}\,d\omega \\
    &=: E_1+E_2.
\end{align*}
For $E_1$, Taylor's theorem and Assumption \ref{assumption:kernel}(ii) imply that the derivatives of $\Kft$ at the origin up to order $p-1$ vanish. Moreover, Assumption \ref{assumption:kernel}(iv) implies that, for every multi-index $\nu$ with $|\nu|=p$,
\[
    D^\nu\Kft(\xi)
    =
    (-i)^{|\nu|}
    \int_{\R^d} t^\nu \kerneldec(t)e^{-it\cdot \xi}\,dt
\]
is uniformly bounded in $\xi$. Therefore,
\[
    |\Kft(h\omega)-1|
    \lesssim
    h^p|\omega|_2^p.
\]
Using $p>\beta-d/2$, we obtain
\begin{align*}
    E_1
    &\lesssim
    \int_{|\omega|_2\le h^{-1}}
    (1+|\omega|_2)^{-2\beta}
    h^{2p}|\omega|_2^{2p}\,d\omega \\
    &\lesssim
    h^{2p}
    \int_0^{h^{-1}}
    r^{2p-2\beta+d-1}\,dr
    \lesssim
    h^{2\beta-d}.
\end{align*}
Similarly, since $\beta>d/2$,
\begin{align*}
    E_2
    \lesssim
    \int_{h^{-1}}^\infty r^{-2\beta+d-1}\,dr
    \lesssim
    h^{2\beta-d}.
\end{align*}
Combining these bounds with \eqref{eq:spatial recovery I1}, we obtain
\[
    \|\widehat{f}-f\|_2^2
    \lesssim
    \rho_\sam^2 h^{-d-2\beta-2}
    +
    h^{2\beta-d}.
\]
Taking square roots gives
\[
    \|\widehat{f}-f\|_2
    \lesssim
    \rho_\sam h^{-d/2-\beta-1}
    +
    h^{\beta-d/2}.
\]
Setting $h\asymp \rho_\sam^{1/(2\beta+1)}$ with the implicit constant sufficiently large ensures that $\rho_N\leq \frac{c^2}{2}(1+\sqrt{d}h^{-1})^{-2\beta-1}$ is satisfied, and when $N$ is large enough so that $h\in(0,1)$, we get
\[
    \|\widehat{f}-f\|_2
    \lesssim
    \rho_\sam^{\frac{\beta-d/2}{2\beta+1}}.
\]
This completes the proof.
\end{proof}

\begin{example}[Signal-to-noise ratio under squared exponential noise]\label{ex:signaltonoise}
Suppose, as in Example \ref{ex:squaredexponential}, that the noise $\epsilon(t)$ is a centered Gaussian process with squared exponential covariance function. Assume further that we are in the high-noise regime where $\ell^{1/2}\lesssim \sigma^2$ and $\sigma\gtrsim 1.$ Then, Theorem \ref{thm:ErrorBoundDeconvolutionEstimator} along with the scaling of $\rho_N$ in Example \ref{ex:squaredexponential} imply that, for fixed confidence parameter $\delta\in(0,e^{-1}),$ 
\begin{equation}\label{eq:scaling}
     \|\hatf-f\|_2 \lesssim  \Biggl(
\frac{\ell^{(d-1)/2}\sigma^3}{\sqrt{\sam}}\Biggr)^{\frac{\beta-d/2}{2\beta+1}}\, .
\end{equation}
 Hence, $N \gtrsim \ell^{d-1} \sigma^6$ suffices for signal recovery in the high-noise regime.  

Since throughout our analysis we take the signal to be order one, the quantity 
$$  \mathsf{SNR} = \frac{1}{\ell^{(d-1)/3}\sigma^2}$$
can be viewed as an effective signal-to-noise ratio (SNR) in the high-noise regime. This effective SNR incorporates both the noise level and the correlation lengthscale. Using this terminology, Theorem \ref{thm:ErrorBoundDeconvolutionEstimator} shows that $N \gtrsim \mathsf{SNR}^{-3}$ suffices for signal recovery.
\end{example}

\begin{rmk}
Note that the assumption $\sup_{\omega \in \R^d} \frac{|\nabla \fft(\omega)|_2}{|\fft(\omega)|}  \leq C_\nabla$ is not used in the analysis of Algorithm \ref{alg:mainalgFM}. Without this assumption, Algorithm \ref{alg:mainalg} still achieves signal recovery, albeit with a slightly reduced convergence rate. 
Indeed, the stability guarantee for Algorithm \ref{alg:mainalg} in Proposition \ref{prop:FTDeviation} becomes
    \begin{align*}
        \sup_{\omega \in [-h^{-1},h^{-1}]^d}
        \left|\hatfft(\omega) - \fft(\omega) \right|
        \lesssim h^{-2\beta-1}\epsilon \, ,
    \end{align*}
    and the recovery guarantee in Theorem \ref{thm:ErrorBoundDeconvolutionEstimator} becomes
    \begin{align*}
       \|\hatf-f\|_2 \lesssim h^{-d/2-2\beta-1}\rho_\sam+ h^{\beta-d/2} \lesssim  \rho_\sam^{\frac{\beta - d/2}{3 \beta + 1}}
    \end{align*} 
    for $h= h^* \asymp \rho_\sam^{\frac{1}{3\beta + 1}}$. 
\end{rmk}

\subsection{Concentration of Bispectrum Estimator} \label{subsec:bispectrum_conc}
In this subsection, we analyze the concentration of the bispectrum estimator \eqref{equ:empirical_estimator} considered in Proposition \ref{prop:bispectrumconcentration}. 
We first control the error of the bispectrum estimator $\widehat{\Bf}$ by the error of the third-order autocorrelation estimator $\widehat{A_3f}$ in Lemma \ref{lemma:bispectrum < A3f}.
We then decompose the error on $\widehat{A_3f}$ into three terms in Lemma \ref{lemma:error decomp}, and finally prove a concentration result for $\widehat{A_3f}$ in Lemma \ref{lem:concentration of A3}. 
Recall that, since \(f\) is supported on \([-\pi,\pi]^d\), the third-order
autocorrelation \(A_3f\) is supported in
$
    Z=[-2\pi,2\pi]^{2d}.
$

\begin{lem}\label{lemma:bispectrum < A3f}
Under Assumption \ref{assump:standing}, there is a constant $C_Z$ depending only on $Z$ such that for any $w_1,w_2$
\begin{align*}
    |\widehat{\Bf}(w_1,w_2) - \Bf(w_1,w_2)| &\leq C_Z\underset{(z_1,z_2)\in Z}{\operatorname{sup}} 
    \left|\widehat{A_3f}(z_1,z_2)-A_3f(z_1,z_2)\right|\\
    \left|\nabla_1 \widehat{\Bf} (w_1,w_2)- \nabla_1 \Bf (w_1,w_2)\right|  & \leq C_Z \underset{(z_1,z_2)\in Z}{\operatorname{sup}} 
    \left|\widehat{A_3f}(z_1,z_2)-A_3f(z_1,z_2)\right|.
\end{align*}
\end{lem}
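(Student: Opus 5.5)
The proof will go through the Fourier integral defining $\widehat{\Bf}$ and $\Bf$ directly. By \eqref{eq:bispectrum and A3} (Lemma \ref{lem:bispectrum}), $\Bf=(A_3f)^{ft}$. By construction \eqref{equ:empirical_estimator}, $\widehat{\Bf}=(\widehat{A_3f})^{ft}$. Both transforms are taken over the domain $Z=[-2\pi,2\pi]^{2d}$, on which we agreed to regard $A_3g$ and its estimator as defined. Hence, setting $\Delta:=\widehat{A_3f}-A_3f$ on $Z$, linearity of the Fourier transform gives
\[
\widehat{\Bf}(w_1,w_2)-\Bf(w_1,w_2)=\int_Z \Delta(z_1,z_2)\,e^{-i(w_1\cdot z_1+w_2\cdot z_2)}\,dz_1\,dz_2 .
\]
Bounding the modulus of the exponential by one then yields
\[
|\widehat{\Bf}-\Bf|\le |Z|\sup_Z|\Delta|, \qquad |Z|=(4\pi)^{2d},
\]
uniformly in $(w_1,w_2)$.

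For the gradient, I will differentiate under the integral sign in $w_1$. This is justified because $Z$ is compact and $\Delta$ is bounded and measurable on $Z$: $A_3Y$ is continuous when $Y$ is, and $\noisecov$ is bounded by Assumption \ref{assump:standing}(vii). The integrand $z_1\Delta(z_1,z_2)e^{-i(\cdot)}$ is then dominated by the integrable function $|z_1|_2\sup_Z|\Delta|$. This gives
\[
\nabla_1\widehat{\Bf}(w_1,w_2)-\nabla_1\Bf(w_1,w_2)=-i\int_Z z_1\,\Delta(z_1,z_2)\,e^{-i(w_1\cdot z_1+w_2\cdot z_2)}\,dz ,
\]
so that
\[
|\nabla_1\widehat{\Bf}-\nabla_1\Bf|_2\le \Bigl(\int_Z|z_1|_2\,dz\Bigr)\sup_Z|\Delta|\le 2\pi\sqrt d\,|Z|\,\sup_Z|\Delta| .
\]
Taking $C_Z=\max(1,2\pi\sqrt d)\,|Z|$ gives both inequalities, and this constant depends only on $Z$ (and thus on $d$).

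The only point requiring care is the convention that the transforms are computed on $Z$ rather than on all of $\R^{2d}$. For $A_3f$ this is harmless, since it vanishes off $Z$ because $\operatorname{supp} f\subset[-\pi,\pi]^d$. For the empirical quantity $\widehat{A_3f}$, by contrast, it is genuinely a definition: $A_3Y$ and the covariance corrections $\noisecov(z_1)+\noisecov(z_2)+\noisecov(z_1+z_2)$ do not vanish outside $Z$. I will state explicitly that $\widehat{A_3f}$ is restricted to $Z$ (equivalently, extended by zero) before its Fourier transform is taken. Without this restriction the supremum over $Z$ could not control the transform, and I expect this to be the main, though mild, obstacle.
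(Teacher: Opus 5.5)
Your proposal is correct and follows the paper's proof: both write the difference as the Fourier integral of $\widehat{A_3f}-A_3f$ over the bounded set $Z$, bound the exponential by one to get the constant $|Z|$, and handle $\nabla_1$ by pulling out the factor $-iz_1$, which is bounded on $Z$. Your explicit constant $C_Z$, the dominated-convergence justification, and your remark that $\widehat{A_3f}$ must be restricted to $Z$ before transforming are added detail; the paper assumes that restriction implicitly by declaring $A_3g:Z\to\R$.
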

\begin{proof}
Using the relation \eqref{eq:bispectrum and A3}, we have 
\begin{align*}
    |\widehat{\Bf}(w_1,w_2) - \Bf(w_1,w_2)|& =\left| (\widehat{A_3f} )^{ft}(w_1,w_2)-(A_3f)^{ft}(w_1,w_2)\right| \\
    & \leq \left|\int_{Z} \left[\widehat{A_3f}(z_1,z_2)-A_3f(z_1,z_2)\right]e^{-iw_1\cdot z_1-iw_2\cdot z_2}dz_1dz_2\right|\\
    & \leq |Z| \underset{(z_1,z_2)\in Z}{\operatorname{sup}} 
    \left|\widehat{A_3f}(z_1,z_2)-A_3f(z_1,z_2)\right|, 
\end{align*}
where we have used the fact that $Z$ is a bounded domain in the last step. 
Similarly,
\begin{align*}
    \left|\nabla_1\widehat{ \Bf} (w_1,w_2)- \nabla_1 \Bf (w_1,w_2)\right|& = \left|\int_Z -iz_1\left[\widehat{A_3f}(z_1,z_2)-A_3f(z_1,z_2)\right]e^{-iw_1\cdot z_1-iw_2\cdot z_2}dz_1dz_2\right|\\
    &\leq C_Z \underset{(z_1,z_2)\in Z}{\operatorname{sup}} 
    \left|\widehat{A_3f}(z_1,z_2)-A_3f(z_1,z_2)\right|. 
\end{align*}
\end{proof}

We next seek to control the estimation error for the third-order correlation of $f$:
\begin{align}\label{eq:error of A3f}
    \underset{(z_1,z_2)\in Z}{\operatorname{sup}} 
    \left|\widehat{A_3f}(z_1,z_2)-A_3f(z_1,z_2)\right| \, .
\end{align}
The next lemma shows that this error can be decomposed into several components, each involving weighted autocorrelation functions of $f$ of different orders. 
Specifically, the statement uses the weighted second- and first-order quantities defined by
\begin{align}\label{equ:weighted_A2_A1_def}
    A_2[g,w](z) &:=  \int_{[-\dom,\dom]^d} g(t)g(t+z)w(t)\, dt \, , \quad A_1[g,w]:=   \int_{[-\dom,\dom]^d}g(t)w(t)\, dt \, . 
\end{align}

\begin{lem}\label{lemma:error decomp}
Under Assumption \ref{assump:standing},
for each \((z_1,z_2)\in Z\), the estimation error satisfies
\begin{align}\label{eq:concentration error}
    |A_3f(z_1,z_2) - \widehat{A_3f}(z_1,z_2)| &\leq \left|\text{Error}_1\right|+\left|\text{Error}_2\right|+\left|\text{Error}_3\right|,  
\end{align}
where
\begin{align*}
\text{Error}_1 &:= \frac{1}{\sam}\Big(A_1[\epsilon(t-z_2),F(t)F(t+z_1)] + A_1[\epsilon(t+z_1),F(t)F(t-z_2)] +A_1[\epsilon(t),F(t+z_1)F(t-z_2)]\Big)\, , \\
 \text{Error}_2 &:= \left(\frac{1}{\sam}A_2[\epsilon(t), F(t+z_1)](-z_2) - \noisecov(z_2)\right) + \left(\frac{1}{\sam}A_2[\epsilon(t),F(t-z_2)](z_1)-\noisecov(z_1)\right) \\
 &\qquad + \left(\frac{1}{\sam}A_2[\epsilon(t+z_1),F(t)](-z_2-z_1) -\noisecov(z_1+z_2) \right),\\
 \text{Error}_3 &:= \frac{A_3\epsilon(z_1,z_2)}{\sam} .
\end{align*}
\end{lem}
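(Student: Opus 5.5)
The plan is to expand $A_3\obs$ trilinearly using $\obs = F + \epsilon$, identify each resulting term, and subtract the definition of $\widehat{A_3f}$ in \eqref{equ:empirical_estimator}.

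\textbf{Expansion.} Writing
\[
\obs(t)\obs(t+z_1)\obs(t-z_2)
\]
and expanding gives eight products, which fall into four groups:
\begin{itemize}
\item one term $F(t)F(t+z_1)F(t-z_2)$;
\item three terms with exactly one factor of $\epsilon$;
\item three terms with exactly two factors of $\epsilon$;
\item one term $\epsilon(t)\epsilon(t+z_1)\epsilon(t-z_2)$.
\end{itemize}
Integrating over $[-\dom,\dom]^d$ and dividing by $\sam$ handles each group as follows.

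\textbf{Pure signal term.} The first group is $A_3F/\sam$. By the well-separatedness in Assumption \ref{assump:standing}(iv), for $(z_1,z_2)\in Z$ cross terms between distinct copies vanish. The copies are at distance $>4\pi\sqrt d$, while the three evaluation points lie within $\ell^\infty$-distance $2\pi$ of one another, so the supports cannot meet. Moreover each copy lies well inside the window, so translating the integral is harmless. Hence $A_3F=\sam A_3f$, matching the identity invoked before \eqref{eq:importantidentity}. I would verify this carefully but briefly, since it is the only place the geometry enters.

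\textbf{One-noise terms.} The three terms with a single $\epsilon$ factor are, by definition \eqref{equ:weighted_A2_A1_def}, exactly $A_1[\epsilon(\cdot-z_2),F F(\cdot+z_1)]$ and its two analogues. Dividing by $\sam$ gives Error$_1$.

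\textbf{Two-noise terms.} Each term with two $\epsilon$ factors and one $F$ factor must be rewritten in the form $A_2[\epsilon(\cdot + a), F(\cdot + b)](c)$ by matching arguments. For example, $\int \epsilon(t)\epsilon(t-z_2)F(t+z_1)\,dt = A_2[\epsilon,F(\cdot+z_1)](-z_2)$, and similarly for the other two. The only care needed is sign and shift bookkeeping, e.g.\ $\epsilon(t+z_1)\epsilon(t-z_2)$ corresponds to lag $-z_2-z_1$ relative to $\epsilon(t+z_1)$. Pairing each with the corresponding subtracted covariance term $\noisecov(z_2)$, $\noisecov(z_1)$, $\noisecov(z_1+z_2)$ from $\widehat{A_3f}$ yields Error$_2$. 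Stationarity makes $\noisecov$ even, which resolves any sign ambiguity in the lags.

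\textbf{Three-noise term.} The single pure-noise term is $A_3\epsilon/\sam$, which is Error$_3$.

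\textbf{Conclusion.} Combining the four groups,
\[
\widehat{A_3f}-A_3f=\text{Error}_1+\text{Error}_2+\text{Error}_3 .
\]
The triangle inequality then gives \eqref{eq:concentration error}.

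The main obstacle is purely bookkeeping: matching shifts and lags in the two-noise terms so they align with the stated $A_2$ arguments, together with justifying $A_3F=\sam A_3f$ from the separation assumption. No probabilistic estimate is needed at this stage.
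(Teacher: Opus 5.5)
Your proposal is correct and follows the paper's proof: expand $A_3\obs$ trilinearly into the eight terms, use $A_3F=\sam A_3f$, subtract the three covariance corrections to obtain $\widehat{A_3f}-A_3f=\text{Error}_1+\text{Error}_2+\text{Error}_3$, and apply the triangle inequality. One small slip in your geometric aside: $t+z_1$ and $t-z_2$ can be $4\pi$ apart in $\ell^\infty$, so the separation argument should compare each evaluation point to $t$ rather than to each other (if $f(t-x_i)f(t+z_1-x_j)\neq 0$ then $|x_i-x_j|_\infty\le 4\pi$, hence $|x_i-x_j|_2\le 4\pi\sqrt d$, forcing $i=j$, and likewise $i=k$).
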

\begin{proof}
Recall our model for $\obs$ in \eqref{eq:M=F+noise}. 
A straightforward calculation then gives:
\begin{align*}
    A_3 \obs(z_1,z_2) &= A_3F(z_1,z_2) + A_3\epsilon(z_1,z_2) \\
    &\quad +A_2[\epsilon(t), F(t+z_1)](-z_2) + A_2[\epsilon(t),F(t-z_2)](z_1) + A_2[\epsilon(t+z_1),F(t)](-z_2-z_1) \\
    &\quad +A_1[\epsilon(t-z_2),F(t)F(t+z_1)] + A_1[\epsilon(t+z_1),F(t)F(t-z_2)] + A_1[\epsilon(t),F(t+z_1)F(t-z_2)] .
\end{align*}   
Since $A_3F=\sam A_3f$, from the definition of $\widehat{A_3f}$ in \eqref{equ:empirical_estimator} we obtain  
\begin{align*}
    \widehat{A_3f}(z_1,z_2)-A_3f(z_1,z_2) = \frac{A_3 \obs(z_1,z_2)}{\sam} - \noisecov(z_2)- \noisecov(z_1)-\noisecov(z_1+z_2) - \frac{A_3F(z_1,z_2)}{\sam}\, .
\end{align*}
Substituting the expansion for $A_3Y$ into the above and rearranging the terms proves the claim. 
\end{proof}

By bounding each error component, we arrive at the following concentration result. 

\begin{lem}[Concentration of $\widehat{A_3f}$]\label{lem:concentration of A3}
Under Assumption \ref{assump:standing}, with probability at least $1-\delta$, we have
\begin{small}
\begin{align*}
    &\underset{z\in [-2\pi,2\pi]^{2d}}{\operatorname{sup}}\, |A_3f(z)-\widehat{A_3f}(z)|\\
&\lesssim (-\log \delta)^{3/2}\sqrt{\frac{\boundcov^2\|\nabla\noisecov\|_1+\boundcov \Lipcov\|\noisecov\|_1}{\sam}}+(-\log \delta)\sqrt{\frac{\Lipcov\|f\|_2^2 \|\noisecov\|_1}{\sam}}+(-\log \delta)^{1/2}\sqrt{\frac{\Lipf^2 \|f\|_\infty^2 \|\noisecov\|_1 }{\sam}}.
\end{align*}
\end{small}
Here we recall that $\Lipf$ is the Lipschitz constant of $f.$
\end{lem}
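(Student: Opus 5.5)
Starting from Lemma \ref{lemma:error decomp}, I would bound $\sup_{z\in Z}|\text{Error}_j(z)|$ separately for $j=1,2,3$. Each bound should hold with probability at least $1-\delta/3$, and a union bound then finishes the proof. The three error terms are Gaussian chaoses of orders one, two and three in the process $\epsilon$. This matches the powers $(-\log\delta)^{1/2}$, $(-\log\delta)$ and $(-\log\delta)^{3/2}$ in the statement. For each $j$ the template is the same. First, compute the second moment (variance) of $\text{Error}_j(z)$ at a fixed $z$. Second, use Gaussian hypercontractivity (Borell--TIS for order one, Hanson--Wright or Latała-type chaos tail bounds for orders two and three) to get $|\text{Error}_j(z)|\lesssim(-\log\delta)^{j'/2}\sqrt{\mathrm{Var}}$, where $j'$ is the chaos order. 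Third, upgrade the pointwise bound to a supremum over the compact set $Z$ by a net/chaining argument. This last step is where the Lipschitz constants $\Lipf$ and $\Lipcov$ enter.

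\textbf{Order one.} $\text{Error}_1$ is a centered Gaussian; for instance, the first piece is $\frac1\sam\int\epsilon(t-z_2)G(t)\,dt$ with $G=F(\cdot)F(\cdot+z_1)$. Its variance is $\sam^{-2}\iint G(t)G(s)\noisecov(t-s)\,dt\,ds\le\sam^{-2}\|G\|_2^2\|\noisecov\|_1$ by Young's inequality. By well-separatedness (Assumption \ref{assump:standing}(iv)), $\|G\|_2^2\le\sam\fsup^2\fLtwo^2$. For the supremum over $z$, I would bound the increments $\mathrm{Var}(\text{Error}_1(z)-\text{Error}_1(z'))$. Since $F$ is Lipschitz with constant $\Lipf$ and the copies do not overlap, these increments are $\lesssim\sam^{-1}\Lipf^2\fsup^2\|\noisecov\|_1|z-z'|^2$. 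Dudley's entropy integral on the bounded set $Z$ together with Borell--TIS then gives the third term of the bound.

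\textbf{Order two.} For $\text{Error}_2$, note that $\frac1\sam A_2[\epsilon(t),F(t+z_1)](-z_2)=\frac1\sam\int\epsilon(t)\epsilon(t-z_2)F(t+z_1)\,dt$. Its expectation is $\noisecov(z_2)\frac1\sam\int F=\noisecov(z_2)$, using $\fft(0)=1$. So $\text{Error}_2$ is a centered second-order chaos. Its variance, computed by Wick/Isserlis, is a sum of terms $\sam^{-2}\iint F(t)F(s)\noisecov(t-s)^2\,dt\,ds$ and similar. I would bound these by $\sam^{-1}\boundcov\|\noisecov\|_1\fLtwo^2$ (or by the variant with $\Lipcov$ that appears in the statement). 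For increments in $z$, the Lipschitz property of $\noisecov$ produces $\Lipcov$. A chaos-process chaining bound, for example via the $\psi_1$ Orlicz norm, then yields the $(-\log\delta)$ term.

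\textbf{Order three (main obstacle).} $\text{Error}_3=\sam^{-1}\int_{[-\dom,\dom]^d}\epsilon(t)\epsilon(t+z_1)\epsilon(t-z_2)\,dt$ has mean zero by Gaussian symmetry. This is not a pure third chaos: its Wiener expansion contains first-chaos pieces such as $\noisecov(z_1+z_2)\epsilon(t)$. The integral runs over a region of volume $(2\dom)^d=\sam/\gamma$, which is where the $1/\sqrt\sam$ comes from. The variance is a sum of $\sam^{-2}\iint$ of products of three covariances, for example $\noisecov(t-s)^3$ or $\noisecov(t-s)\noisecov(\cdot)\noisecov(\cdot)$. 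Bounding two factors by $\boundcov$ and integrating the third in $t-s$ gives $\lesssim\sam^{-2}(2\dom)^d\boundcov^2\|\noisecov\|_1\asymp\sam^{-1}\boundcov^2\|\noisecov\|_1$. The increments in $z$ would use the mean-value theorem, $|\noisecov(u+z)-\noisecov(u+z')|\le|z-z'|\int_0^1|\nabla\noisecov(u+\theta(z-z'))|\,d\theta$. Integrated in $u$, this gives $\|\nabla\noisecov\|_1$, or alternatively $\Lipcov\|\noisecov\|_1$ when the other factor is integrated instead. These are exactly the quantities $\boundcov^2\|\nabla\noisecov\|_1+\boundcov\Lipcov\|\noisecov\|_1$ in the statement. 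Hypercontractivity of third-order polynomials in Gaussians gives tails $\exp(-c(u/\sigma)^{2/3})$, hence the $(-\log\delta)^{3/2}$ factor. I would run the chaining with these Orlicz tails over $Z$. The bookkeeping of the Wick pairings, and making sure every pairing produces at most one factor of $(2\dom)^d$ (which stationarity and integrability of $\noisecov$ should guarantee), is the delicate step.
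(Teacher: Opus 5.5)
Your strategy is the paper's. You split the error via Lemma \ref{lemma:error decomp} into fields of chaos order one, two and three. You bound the canonical-metric increments by Isserlis' theorem and the mean-value theorem, which produces $\|\nabla\noisecov\|_1$, $\Lipcov\|\noisecov\|_1$ and $\Lipf\fsup$ as in Lemma \ref{lemma:bound on canonical distance}. You then control the supremum over $Z$ by an entropy bound plus chaos concentration (the paper quotes Viens--Vizcarra, Corollary 3.3 and Theorem 3.4), and finish with a union bound over three events of probability $\delta/3$.

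The step that does not go through as you state it is the matching of constants. In Borell--TIS and its chaos analogues, the deviation scale is $\sup_z\operatorname{Var}(\text{Error}_j(z))^{1/2}$, not the increment constant. So the pointwise variances you correctly compute must appear in the final bound.

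For order one this is harmless. Since $f$ is Lipschitz and vanishes outside $[-\pi,\pi]^d$, you have $\fsup\le\pi\Lipf$ and $\fLtwo\lesssim_d\Lipf$. Hence $\fsup^2\fLtwo^2\|\noisecov\|_1/\sam$ is dominated by the third term.

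For orders two and three, however, the pointwise variances are of order $\boundcov\fLtwo^2\|\noisecov\|_1/\sam$ and $\gamma^{-1}\boundcov^2\|\noisecov\|_1/\sam$. These are not controlled by $\Lipcov\fLtwo^2\|\noisecov\|_1$, nor by $\boundcov^2\|\nabla\noisecov\|_1+\boundcov\Lipcov\|\noisecov\|_1$. For the squared-exponential kernel of Example \ref{ex:squaredexponential} they are larger by a factor of order $\ell$. So your remarks that the order-three quantities are ``exactly'' those of the statement, and that the order-two variance admits a ``variant with $\Lipcov$'', are unjustified. The extra terms also cannot be removed by a better argument. Already at $z=0$ the third-chaos part of the error has variance $6\sam^{-2}\iint\noisecov(t-s)^3\,dt\,ds$, which is of order $\gamma^{-1}\sigma^6\ell^d/\sam$ for that kernel when $\ell\ll\dom$. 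By hypercontractivity, the error exceeds a fixed fraction of its standard deviation with probability bounded below.

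The paper's proof has the same hole. It normalizes $\xxx$ by the increment constant only, and then absorbs $\sup_z\operatorname{Var}$ into the unspecified constant $c'$ of the concentration step. What your argument (and the paper's) actually establishes is one of two things:
\begin{enumerate}
\item the stated bound with $(-\log\delta)\sqrt{\boundcov\fLtwo^2\|\noisecov\|_1/\sam}$ and $(-\log\delta)^{3/2}\sqrt{\boundcov^2\|\noisecov\|_1/\sam}$ added; or
\item the stated bound with implicit constants allowed to depend on $\noisecov$, which is consistent with $C_\B$ in Proposition \ref{prop:bispectrumconcentration} being required only to be independent of $\sam$.
\end{enumerate}
You should state one of these versions explicitly rather than claim the constants coincide.
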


\begin{proof}
We control the three error terms in \eqref{eq:concentration error}. 
We present the full details of bounding $\text{Error}_3$ and the other two can be bounded using similar arguments.
By Lemma \ref{lemma:moments gaussian chaos}, there exists a constant $a$ such that the rescaled random field  
\begin{align*}
    X(z) = \frac{a\xxx(z)}{\sqrt{\boundcov^2\|\nabla\noisecov\|_{1}+\boundcov \Lipcov\|\noisecov\|_{1}}}
\end{align*}
is a sub-$3$rd Gaussian chaos field (see Definition \ref{def:Gaussian_chaos}), where
\[ \xxx(z) = \frac{1}{\sqrt{\sam}}\int_{[-\dom,\dom]^d} \epsilon(t)\epsilon(t-z_1)\epsilon(t-z_2)dt = \frac{1}{\sqrt{\sam}}A_3\epsilon\]
is from Definition \ref{def:error_decomp}, which gives
\begin{align*}
    \frac{1}{\sam}A_3\epsilon = \sqrt{\frac{\boundcov^2\|\nabla\noisecov\|_{1}+\boundcov \Lipcov\|\noisecov\|_{1}}{\sam a^2}}X(z).
\end{align*}
By \cite[Corollary 3.3]{viens2007supremum} applied to $X$, we obtain
\begin{align*}
    \E\, \underset{z\in [-2\pi,2\pi]^{2d}}{\operatorname{sup}}\, X(z) \lesssim \int_0^{\operatorname{diam}([-2\pi,2\pi]^{2d})} \log [\mathcal{N}(\ell,[-2\pi,2\pi]^{2d},s)]^{3/2} d\ell ,
\end{align*}
where $\mathcal{N}(\ell,[-2\pi,2\pi]^{2d},s)$ is the smallest number of $\ell$ balls, with respect to the metric $s(z,w)=\sqrt{\E|X(z)-X(w)|^2},$ needed to cover $[-2\pi,2\pi]^{2d}$. 
By Lemma \ref{lemma:bound on canonical distance}, we have $s(z,w)\leq a|z-w|_2^{1/2}$ for some constant $a$, so that 
\begin{align*}
    \mathcal{N}(\ell,[-2\pi,2\pi]^{2d},s) \lesssim \mathcal{N}((\ell/a)^2,[-2\pi,2\pi]^{2d},|\cdot|_2) \lesssim  (\ell^2)^{-2d}.
\end{align*}
Therefore we have 
\begin{align*}
    \E \underset{z\in [-2\pi,2\pi]^{2d}}{\operatorname{sup}}\, X(z) \lesssim \int_0^{\operatorname{diam}([-2\pi,2\pi]^{2d})} (-\log \ell)^{3/2}d\ell <\infty. 
\end{align*}
Furthermore, \cite[Theorem 3.4]{viens2007supremum} implies that  
\begin{align*}
    \P\left\{\left|\underset{z\in [-2\pi,2\pi]^{2d}}{\operatorname{sup}}\,X(z)-\E \underset{z\in [-2\pi,2\pi]^{2d}}{\operatorname{sup}}X(z)\right|>u\right\}\leq 2\exp\left(-c'u^{2/3}\right)
\end{align*}
for some constant $c'$,
so that with probability $1-\delta/3$, we have 
\begin{align*}
    \underset{z\in [-2\pi,2\pi]^{2d}}{\operatorname{sup}}\,X(z) \leq \E \underset{z\in [-2\pi,2\pi]^{2d}}{\operatorname{sup}}\,X(z) + c(-\log \delta)^{3/2}\lesssim (-\log \delta)^{3/2}.
\end{align*}
The same argument applied to $-X$ gives the same bound on the supremum of $|X(z)|$.
Therefore, with probability $1-\delta/3$, 
\begin{align*}
    \underset{z\in [-2\pi,2\pi]^{2d}}{\operatorname{sup}}\,|\text{Error}_3(z)|&=\underset{z\in [-2\pi,2\pi]^{2d}}{\operatorname{sup}}\, \frac{1}{\sam}|A_3\epsilon(z_1,z_2)|  \\
    &= \sqrt{\frac{\boundcov^2\|\nabla\noisecov\|_{1}+\boundcov \Lipcov\|\noisecov\|_{1}}{ \sam a^2}}\underset{z\in [-2\pi,2\pi]^{2d}}{\operatorname{sup}}\,  |X(z)| \\
    &\lesssim (-\log \delta)^{3/2}\sqrt{\frac{\boundcov^2\|\nabla\noisecov\|_{1}+\boundcov \Lipcov\|\noisecov\|_{1}}{\sam}}.
\end{align*}
By similar arguments, Lemma \ref{lemma:bound on canonical distance} yields the next two bounds, each with probability at least $1 - \delta/3$: 
\begin{align*}
    \underset{z\in[-2\pi,2\pi]^{2d}}{\operatorname{sup}} |\text{Error}_2(z)| &\lesssim (-\log \delta)\sqrt{\frac{\Lipcov\fLtwo^2 \covLone}{\sam}},\\
    \underset{z\in[-2\pi,2\pi]^{2d}}{\operatorname{sup}}|\text{Error}_1(z)|&\lesssim (-\log \delta)^{1/2}\sqrt{\frac{\Lipf^2 \fsup^2 \|\noisecov\|_{1}}{\sam}}.
\end{align*}
The result follows by combining the bounds above. 
\end{proof}

\subsection{Stability of Bispectrum Inversion}\label{subsec:stability_of_bisp_inv} In this section, we prove stability results for bispectrum inversion via Algorithms \ref{alg:mainalgFM} and \ref{alg:mainalg} stated in Proposition \ref{prop:FTDeviation}. 
Note the stability for Algorithm \ref{alg:mainalg} essentially follows from Theorem 3.4 in \cite{fMRA}, but for completeness we rederive the result in our setting, which accounts for our use of the unregularized bispectrum estimator as well as Assumption \ref{assump:standing}(ii). 
 
\begin{proof}[Proof of Proposition \ref{prop:FTDeviation}]

\textbf{Algorithm \ref{alg:mainalgFM}:}
We shall first present the proof for $d=1$.    
In this case, Algorithm \ref{alg:mainalgFM} first defines $\widehat{\fft}(0)=1$ and $\widehat{\fft}(1) = |\widehat{ \Bf}(1,1)|^{\frac{1}{2}},$ and then defines $\widehat{\fft}(k)$ for $k=2,\ldots, K$ by the recursive formula
    \[\widehat{\fft}(k) = \frac{\widehat{\Bf}(k,k-1)}{\widehat{\fft}\bigl(-(k-1)\bigr)\widehat{\fft}(-1)} \, .\]
By expressing $\hatfft\bigl(-(k-1)\bigr)$ again using the recursion formula, we obtain 
\begin{align}
    \tag{$k$ odd}\widehat{\fft}(k) &= \frac{\widehat{\Bf}(k,k-1)\cdot \widehat{\Bf}(k-2,k-3)\cdots \widehat{\Bf}(3,2)\cdot |\widehat{\Bf}(1,1)|^{\frac{1}{2}}}{\overline{\widehat{\Bf}(k-1,k-2)}\cdot\overline{\widehat{\Bf}(k-3,k-4)} \cdots \overline{\widehat{\Bf}(2,1)}} \\
    \tag{$k$ even}\widehat{\fft}(k)& = \frac{\widehat{\Bf}(k,k-1)\cdot \widehat{\Bf}(k-2,k-3)\cdots \widehat{\Bf}(2,1)}
{\overline{\widehat{\Bf}(k-1,k-2)} \cdot \overline{\widehat{\Bf}(k-3,k-4)}\cdots \overline{\widehat{\Bf}(3,2)} \cdot |\widehat{\Bf}(1,1)|}.
\end{align}
Since the quantities involved are complex-valued, we interpret the logarithmic
differences below as logarithms of ratios close to one. More precisely, by Assumption \ref{assump:standing}(ii) and the assumption that $\frac{2^{\beta}(1+K)^{2\beta+1}}{c^3}\epsilon \leq \frac12$ we have
\begin{equation}\label{equ:term_bound}
    \frac{|\widehat{\Bf}(j,j-1)-\Bf(j,j-1)|}{|\Bf(j,j-1)|}
    \le \frac{2^{\beta}(1+j)^{2\beta}}{c^3}\epsilon \leq \frac{2^{\beta}(1+K)^{2\beta}}{c^3}\epsilon \leq \frac12 \quad \text{for} \quad j=2,\ldots, K,
\end{equation}
and so we can write
\[
    \log \widehat{\Bf}(j,j-1)-\log \Bf(j,j-1)
    :=
    \log\left(
    1+
    \frac{\widehat{\Bf}(j,j-1)-\Bf(j,j-1)}
    {\Bf(j,j-1)}
    \right),
\]
where the logarithm on the right-hand side is taken using the principal branch. This is
well defined because the argument lies in the ball \(B(1,1/2)\), which does not
intersect the negative real axis. For the initialization,
note that Assumption \ref{assump:standing}(ii) and (iii) imply that $c \le 1,$ and hence the assumption $\frac{2^{\beta}(1+K)^{2\beta+1}}{c^3}\epsilon \leq \frac12$ for $K>1$
ensures that $
    \frac{4^\beta}{c^2}\epsilon \leq \frac12.$
Hence, we similarly have
\[
    \frac{| |\widehat{\Bf}(1,1)|^{\frac{1}{2}}-\Bf(1,1)^{\frac{1}{2}}|}{\Bf(1,1)^{\frac{1}{2}}} \leq \frac{||\widehat{\Bf}(1,1)| - \Bf(1,1)|}{\Bf(1,1)}
    \le \frac{4^\beta}{c^2}\epsilon \le \frac12 \, ,
\]
and $\log |\widehat{\Bf}(1,1)|^{\frac12}-\log \Bf(1,1)^{\frac12}$ is also well defined. 
As a result, for \(k\) odd,
\begin{align}\label{eq:algorithm 1 stability recursion}
   \left|\log\left(\frac{\widehat{\fft}(k)}{\fft(k)}\right)\right|
   \leq \sum_{j=2}^{k}
   \left|\log\left(\frac{\widehat{\Bf}(j,j-1)}{\Bf(j,j-1)}\right)\right|
   +
   \left|
   \log\left(
   \frac{|\widehat{\Bf}(1,1)|^{1/2}}
        {\Bf(1,1)^{1/2}}
   \right)
   \right|.
\end{align}
Notice that for $|z|<1/2,$ 
\begin{align*}
    |\log (1+z)| = \left|\sum_{k=1}^\infty \frac{(-1)^{k+1}z^k}{k}\right| \leq \sum_{k=1}^\infty |z|^k  = \frac{|z|}{1-|z|} \leq 2|z|.  
\end{align*}
Therefore we have 
\begin{align*}
\left|\log \widehat{\Bf}(j,{j-1})-\log \Bf(j,{j-1})\right| &\leq 2\frac{|\widehat{\Bf}(j,j-1)-\Bf(j,j-1)|}{|\Bf(j,j-1)|}\leq \frac{2^{\beta+1}(1+j)^{2\beta}}{c^3}\epsilon,\qquad j=2,\ldots,K,
\end{align*}
and similarly
\[ \left|\log |\widehat{\Bf}(1,1)|^{\frac12} - \log \Bf(1,1)^{\frac12}\right| \leq 2 \frac{| |\widehat{\Bf}(1,1)|^{\frac{1}{2}}-\Bf(1,1)^{\frac{1}{2}}|}{\Bf(1,1)^{\frac{1}{2}}} \leq \frac{2\cdot 4^{\beta}}{c^2}\epsilon \leq \frac{2^{\beta+1}\cdot 2^{2\beta}}{c^3}\epsilon\, . \]
We can thus bound the error on the log of the coefficients by:
\begin{align*}
  |\log\widehat{\fft}(k)-\log\fft(k)| &\leq  \epsilon\sum_{j=1}^{k} \frac{2^{\beta+1}(1+j)^{2\beta}}{c^3} \leq  \frac{2^{\beta+1}(1+k)^{2\beta+1}}{c^3} \epsilon.
\end{align*}
Now by the assumption $\frac{2^{\beta}}{c^3}(1+ K)^{2\beta+1} \epsilon \leq \frac12$ again, we have 
\begin{align*}
    \underset{k\in [-K,K]}{\operatorname{sup}} |\log\widehat{\fft}(k)-\log\fft(k)| \leq 1 \, .
\end{align*}
Then since $|e^z-1|\leq C|z|$ over $|z|\leq 1$ for some constant $C$ independent of $K$ and $\epsilon$, we obtain 
\begin{align*}
    \left|\frac{\widehat{\fft}(k)}{\fft(k)}-1\right| = \left|\exp\bigl(\log\widehat{\fft}(k)-\log\fft(k) \bigr)-1\right| \lesssim k^{2\beta+1}\epsilon,
\end{align*}
which further implies that 
\begin{align*}
    |\widehat{\fft}(k)-\fft(k)| \lesssim |\fft(k)|k^{2\beta+1}\epsilon\lesssim k^{\beta+1}\epsilon. 
\end{align*}
Taking the supremum over $k \in \mathbb{Z} \cap [-K,K]$ proves the claim.
The proof for the case when $k$ is even proceeds in the same manner except in \eqref{eq:algorithm 1 stability recursion} the last term would be $|\log |\widehat{\Bf}(1,1)|-\log \Bf(1,1)|$, which can be bounded with the same argument. 
Note a similar calculation holds in general dimension, with $|k|_1\leq dK$ replacing $k$ as the number of steps in the path, and also $\sqrt{d}K$ replacing $K$ as the bound for $|j|_2$ in \eqref{equ:term_bound}.

\textbf{Algorithm \ref{alg:mainalg}:}
We first notice that $\hatfft(\omega)= \fft(\omega) e^{\Delta(\omega)}$, where
\begin{align*}
    \Delta(\omega) = \int_0^1 \left[\frac{\nabla_1 \widehat{\Bf}(\alpha\omega,\alpha\omega)}{\widehat{\Bf}(\alpha\omega,\alpha\omega)}  - \frac{\nabla_1 \Bf(\alpha\omega,\alpha\omega)}{\Bf(\alpha\omega,\alpha\omega)}\right] \cdot \omega \,d\alpha. 
\end{align*}
Therefore, it suffices to bound the quantity $\Delta(\omega)$, which we decompose as 
\begin{small}
\begin{align*}
    \Delta(\omega) &= \int_0^1 \frac{\nabla_1 \widehat{\Bf}(\alpha\omega,\alpha\omega)-\nabla_1\Bf(\alpha\omega,\alpha\omega)}{\Bf(\alpha\omega,\alpha\omega)}\cdot \omega \, d\alpha \\
    &\quad + \int_0^1 \left[\frac{1}{\widehat{\Bf}(\alpha\omega,\alpha\omega)}-\frac{1}{\Bf(\alpha\omega,\alpha\omega)}\right] \nabla_1 \Bf(\alpha\omega,\alpha\omega)\cdot \omega\, d\alpha\\
    &\quad + \int_0^1 \left[\frac{1}{\widehat{\Bf}(\alpha\omega,\alpha\omega)}-\frac{1}{\Bf(\alpha\omega,\alpha\omega)}\right] \left[\nabla_1\widehat{\Bf}(\alpha\omega,\alpha\omega)-\nabla_1\Bf (\alpha\omega,\alpha\omega)\right]\cdot \omega\,d\alpha=: \Delta_1(\omega)+\Delta_2(\omega)+\Delta_3(\omega). 
\end{align*}
\end{small}

We will bound each $\Delta_i$ in turn. Before doing so, we obtain 
an estimate that will be used throughout. 
By the identity \eqref{equ:bispectrum_def} and Assumption \ref{assump:standing}, we have 
\begin{align*}
|\Bf(\alpha\omega,\alpha\omega)|= |\fft(\alpha\omega)|^2 \geq c^2 (1+\alpha|\omega|_2)^{-2\beta}\geq  c^2 (1+|\omega|_2)^{-2\beta},
\end{align*}
and 
\begin{align*}
    |\widehat{\Bf}(\alpha\omega,\alpha\omega)|&\geq |\Bf(\alpha\omega,\alpha\omega)| - |\widehat{\Bf}(\alpha\omega,\alpha\omega)-\Bf(\alpha\omega,\alpha\omega)|\\
    &\geq  |\Bf(\alpha\omega,\alpha\omega)| - \epsilon\gtrsim (1+|\omega|_2)^{-2\beta},
\end{align*}
where in the last step we have used the assumption that $\epsilon\leq \frac{c^2}{2}(1+\sqrt{d}h^{-1})^{-2\beta-1}$, which implies $\epsilon \leq \frac{c^2}{2} (1+|\omega|_2)^{-2\beta}$ for any $\omega\in [-h^{-1},h^{-1}]^d.$
Therefore, 
\[
    \min\{|\widehat{\Bf}(\alpha\omega,\alpha\omega)|,
    |\Bf(\alpha\omega,\alpha\omega)|\}
    \gtrsim (1+|\omega|_2)^{-2\beta}.
\]
Now fix $\omega \in [-h^{-1},h^{-1}]^d$. 
For $\Delta_1$, we have 
\begin{align}
    |\Delta_1(\omega)|&\leq |\omega|_2 \underset{0\leq \alpha\leq 1}{\operatorname{sup}}\,|\nabla_1 \widehat{\Bf}(\alpha\omega,\alpha\omega)-\nabla_1 \Bf(\alpha\omega,\alpha\omega)|_2 \cdot \underset{0\leq \alpha\leq 1}{\operatorname{sup}}\,\frac{1}{|\Bf(\alpha\omega,\alpha\omega)|} \nonumber\\
    &\lesssim |\omega|_2 \underset{z\in [-h^{-1},h^{-1}]^d}{\operatorname{sup}}\, |\nabla_1 \widehat{\Bf}(z,z)-\nabla_1 \Bf(z,z)|_2  \cdot \underset{0\leq \alpha\leq 1}{\operatorname{sup}}\,(1+\alpha |\omega|)^{2\beta} \nonumber\\
    &\leq |\omega|_2 \epsilon (1+|\omega|_2)^{2\beta}.  \label{eq:Delta_1}
\end{align}

For $\Delta_2$, we have 
\begin{align*}
    |\Delta_2(\omega)|& \leq \int_0^1 \left|\frac{\widehat{\Bf}(\alpha\omega,\alpha\omega)-\Bf(\alpha\omega,\alpha\omega)}{\widehat{\Bf}(\alpha\omega,\alpha\omega)}\right| |\omega|_2 \left|\frac{\nabla_1\Bf(\alpha\omega,\alpha\omega)}{\Bf(\alpha\omega,\alpha\omega)}\right|_2d\alpha \\
    &\leq |\omega|_2\underset{0\leq \alpha\leq 1}{\operatorname{sup}}\, \left|\frac{\widehat{\Bf}(\alpha\omega,\alpha\omega)-\Bf(\alpha\omega,\alpha\omega)}{\widehat{\Bf}(\alpha\omega,\alpha\omega)}\right| \cdot \underset{0\leq \alpha\leq 1}{\operatorname{sup}}\, \left|\frac{\nabla_1\Bf(\alpha\omega,\alpha\omega)}{\Bf(\alpha\omega,\alpha\omega)}\right|_2\\
    &\leq |\omega|_2 \epsilon (1+|\omega|_2)^{2\beta} \cdot \underset{0\leq \alpha\leq 1}{\operatorname{sup}}\, \left|\frac{\nabla_1\Bf(\alpha\omega,\alpha\omega)}{\Bf(\alpha\omega,\alpha\omega)}\right|_2.
\end{align*}
Using the identity \eqref{equ:bispectrum_def}, 
\begin{align*}
    \nabla_1 \Bf(w_1,w_2) &= \nabla \fft(w_1)\fft(-w_2)\fft(w_2-w_1)+ \fft(w_1)\fft(-w_2)\int (it)f(t)e^{i(w_2-w_1)t}dt,
\end{align*}
and so
\begin{align*}
    \frac{\nabla_1 \Bf(\alpha\omega,\alpha\omega)}{\Bf(\alpha\omega,\alpha\omega)} &= \frac{\nabla \fft(\alpha\omega)}{\fft(\alpha\omega)}+ \int (it)f(t)dt 
    = \frac{\nabla \fft(\alpha\omega)}{\fft(\alpha\omega)},
\end{align*}
where for the last equality we use that the assumption $\nabla\fft(0)=0$ implies that $\int tf(t)dt=0.$ 
By Assumption \ref{assump:standing}(ii), we have $|\frac{\nabla \fft(\alpha\omega)}{\fft(\alpha\omega)}|\leq C_{\nabla}$. 
Therefore,  
\begin{align}\label{eq:Delta_2}
|\Delta_2(\omega) |\leq  |\omega|_2 \epsilon (1+|\omega|_2)^{2\beta}.
\end{align}

For $\Delta_3$, we have 
\begin{align}\label{eq:Delta_3}
    |\Delta_3(\omega)|&\leq |\omega|_2 \underset{0\leq \alpha\leq 1}{\operatorname{sup}}\, \left|\frac{\widehat{\Bf}(\alpha\omega,\alpha\omega)-\Bf(\alpha\omega,\alpha\omega)}{\widehat{\Bf}(\alpha\omega,\alpha\omega)\Bf(\alpha\omega,\alpha\omega)}\right| \cdot \underset{0\leq \alpha\leq 1}{\operatorname{sup}}\,\left|\nabla_1\widehat{\Bf}(\alpha\omega,\alpha\omega)-\nabla_1\Bf(\alpha\omega,\alpha\omega)\right|\nonumber\\
    & \leq |\omega|_2 \epsilon^2 (1+|\omega|_2)^{4\beta}.
\end{align}
Since \(|\omega|_2\le \sqrt d\,h^{-1}\), the assumption on \(\epsilon\) implies
\[
\epsilon(1+|\omega|_2)^{2\beta+1}\lesssim 1.
\]
Combining this with \eqref{eq:Delta_1}, \eqref{eq:Delta_2}, \eqref{eq:Delta_3}, we obtain that 
\begin{align*}
    |\Delta(\omega)|\lesssim \epsilon |\omega|_2(1+|\omega|_2)^{2\beta}\lesssim 1, \qquad \forall \omega\in[-h^{-1},h^{-1}]^d.
\end{align*}
Since \(|\Delta(\omega)|\lesssim 1\), the bound
\(|e^z-1|\lesssim |z|\) for bounded complex \(z\) gives
\[
    |\hatfft(\omega)-\fft(\omega)|
    \leq |\fft(\omega)| |e^{\Delta(\omega)}-1|
    \lesssim |\fft(\omega)| |\Delta(\omega)|
    \lesssim
    (1+|\omega|_2)^{-\beta}
    \epsilon |\omega|_2(1+|\omega|_2)^{2\beta}
    \lesssim
    \epsilon(1+|\omega|_2)^{\beta+1}.
\]
Therefore, for \(\omega\in[-h^{-1},h^{-1}]^d\),
\[
    |\hatfft(\omega)-\fft(\omega)|
    \lesssim
    \epsilon h^{-\beta-1},
\]
and the desired result follows. 
\end{proof}

\section{Numerical Experiments}\label{sec:numerics}
In this section, we study the numerical performance of Algorithms \ref{alg:mainalgFM} and \ref{alg:mainalg} in a variety of experiments, including the recovery of the number of signal occurrences, bispectrum, and four signals that explicitly satisfy Assumption \ref{assump:standing}. Moreover, through the dependence of our numerical results on model parameters, we reinforce the validity of Proposition \ref{prop:bispectrumconcentration} and Theorem \ref{thm:ErrorBoundDeconvolutionEstimator}. Matlab code to reproduce all numerical experiments is publicly available at \url{https://github.com/msween11/fMTD/}. For computational simplicity, we only consider the one-dimensional MTD problem in our numerical experiments, although our theory holds in arbitrary dimension. All simulations in this section were conducted 20 times and averaged, and we report the average result along with error bars denoting one sample standard deviation across trials.

\subsection{Setting}\label{subsec:num_setup}

\subsubsection{Construction of the Observation $ \obs $}
In a slight break with Assumption \ref{assump:standing}, we assume that our hidden signal $ f $ is compactly supported on $ D:= [-1,1] $ in space. To construct the MTD observation $ \obs $ with $ N $ signal occurrences in the functional setting, we first initialize our hidden signal $ f $ as a MATLAB function handle. We then sample $ N $ i.i.d. $ \text{Unif}(0, 1/2) $ random variables $\{z_j\}_{j=1}^N$ and define the shifts
\begin{align*}
    x_1 = 0,\,x_j = x_{j-1} \pm 4 \pm z_j,\quad \, j =2,\ldots,N/2,
\end{align*}
where the shifts expand in both the positive and negative directions from 0. This construction ensures that the shifts are well-separated and and do not lie on any regular grid. The noise in our experiments was chosen to be a stationary Gaussian process with a squared-exponential covariance function defined in \eqref{equ:squared_exp_kernel}. We denote the lengthscale parameter $ \lambda $ and noise intensity parameter $ \sigma$. See Example \ref{ex:squaredexponential} for more details. Like the hidden signal, the noise is constructed as a MATLAB function handle that can be evaluated on any grid to produce a vector. 
To produce a discretization of our observation $ \obs $, we sample the interval $ [-2.5\cdot(N+1)-3,2.5\cdot(N+1)+3] $ at a rate of $ 2^{-5} $ . On this grid we then evaluate the sum of the noise and the $ N $ shifted function handles $ t\to f(t-x_j)$. Note that this discretization occurs \emph{after} the shifting, preserving the functional formulation of the problem. The empirical third-order autocorrelation function $ A_3\obs $ is then computed from the discretization of $ Y $ via matrix products. 

\subsubsection{Signal Function Choice}
We investigate our algorithms for four choices of functions as the hidden signal $ f $. Each of these functions is a compactly supported Wendland radial basis function \cite{wendland} $ \phi_{d,k} $, defined as
\[
  \phi_{d,k}(r) \;\propto\; \mathcal{I}^k\!\left[(1-r)_+^l\right], \qquad
  l = \left\lfloor\tfrac{d}{2}\right\rfloor + k + 1, \qquad
  \mathcal{I}[\psi](r) = \int_r^\infty t\,  \psi  (t)\,dt,
\]
and normalized so that $\phi_{d,k}(0)=1$. Thus for any $d,k,\, \|\phi_{d,k}\|_\infty = 1$. For our candidate functions, we take $r=|x|$ to get the following functions:
\begin{equation}\label{eq:signals}
	f_1 = \phi_{2,0}, \qquad  f_2 = \phi_{1,1}, \qquad f_3 = \phi_{1,2}, \qquad f_4 = \phi_{1,3}.
\end{equation}
Respectively, the functions $f_1,f_2,f_3,f_4$ are of smoothness class $C^0$, $C^2$, $C^4$, $C^6$, with Fourier decay $|f\ft(\omega)|\sim|\omega|^{-\beta}$ for $\beta=2,4,6,8$. These functions are compactly supported on $ [-1,1] $ in space, never vanish in frequency, and explicitly satisfy Assumption \ref{assump:standing}. 

\subsubsection{Algorithm Extensions}\label{ssec:algextensions}

For implementation of Algorithm \ref{alg:mainalg}, in order to increase the fidelity of the approximation of the integrals in Steps 3 and 4 of the algorithm, the discretized $ A_3\obs $ matrix is then zero-padded before Fourier transforming $ A_3\obs $ to estimate the bispectrum. As Algorithm \ref{alg:mainalgFM} operates only on the integer frequencies, the discretized $ A_3\obs $ matrix fed into this algorithm is not zero-padded before mapping into frequency.
For all experiments, we choose the infinite order deconvolution kernel $\kerneldec(x) = \text{sinc}(x)$. Additionally, a regularization step with constant $r = 10^{-3}$, was introduced, as in \cite{kurisu2022uniform} and \cite{li1998nonparametric}, to improve the performance of Algorithms \ref{alg:mainalgFM} and \ref{alg:mainalg}. In both algorithms, $\widehat{\Bf}$ was replaced with its regularized version: 
\[
\widetilde{\Bf}(\omega_1,\omega_2) := \frac{\widehat{\Bf}(\omega_1,\omega_2)}{1 \land \left(r\cdot \sqrt{\sam}|\widehat{\Bf}(\omega_1,\omega_2)|\right)} \, .
\]
However, for Algorithm \ref{alg:mainalg} the numerator was kept as $\widehat{\Bf}$; only the denominator was replaced with $\widetilde{\Bf}$. For both algorithms, this regularization step helps stabilize the construction of $\widehat\fft$ and improves performance across all signals and parameter regimes considered below.

Moreover, the frequency marching approach of Algorithm \ref{alg:mainalgFM} is based on a recursive formula of the bispectrum, see \eqref{equ:FM_phase} and the discussion preceding the algorithm's pseudocode. On the phases, this recursive formula is $ \phi_{j+1} = \Psi_{j+1,j} + \phi_j + \phi_1 $. However, for a given $\phi_{j+1}$ this recursive formula is not unique; for $ j \geq 1 $, there are $ \lfloor (j+1)/2 \rfloor $ distinct recursive formulas: 
\begin{align}\label{eq:fm1p}
	\phi_{j+1} = \phi_\ell + \phi_{j+1-\ell} + \Psi_{\ell,j+1} \quad \text{for any } 1 \leq \ell \leq \lfloor (j+1)/2 \rfloor.
\end{align}
While in the noiseless case all of these recursive formulas are equivalent and thus redundant, this is not true for the noisy empirical bispectrum. Indeed, following the discussion in Section IV-A of \cite{bendory2017bispectrum}, one can average the recovered phases over all of the recursive formulas to reduce the noise: 
\begin{align}\label{eq:mpfm}
	\phi_{j+1} = \text{phase}\left( \sum_{\ell=1}^{\lfloor (j+1)/2 \rfloor} e^{i(\phi_\ell + \phi_{j+1-\ell} + \Psi_{\ell,j+1})}\right).
\end{align}
Likewise, Algorithm \ref{alg:mainalg} also admits multiple integral formulas that are equivalent in the noiseless case. If we let $g(\omega)=\log \fft(\omega)$ for $f$ a real-valued signal, then by Lemma \ref{lem:kot_mp} the following recursion formula holds for any $0 \le \w_1 \le \w_2$:
\begin{align}\label{eq:kotmp}
	g(\w_2) = g(\w_1) +\int_0^{\w_2-\w_1} \partial_1 \log Bf(\xi+\w_1,\w_1)\ d\xi - \overline{g(\w_2-\w_1)}  .
\end{align}
Equation \eqref{eq:kotmp} gives a family of \emph{horizontal} integrals in the $\omega_1, \omega_2$ plane, starting on the diagonal point $(\w_1,\w_1)$ and ending at $(\w_2,\w_1)$. Numerically, if after discretization our frequency domain is of length $ J $, we can write our positive frequencies as $ \w_j, \, j = 1,\ldots,J/2 $ and denote $ g_\ell(j) $ as the recursion identity \eqref{eq:kotmp} with $ \w_2 = \w_j $ and $ \w_1 = \w_\ell $, for $ \ell = 1,\ldots,j $. We can then average to reduce noise:
\begin{align*}
	g(j):=\frac{1}{j} \sum_{\ell=1}^{j} g_\ell(j), \quad j = 1,\ldots,J/2.
\end{align*}
We perform our subsequent experiments with both the `single-path' formulas stated in Algorithms \ref{alg:mainalgFM} and \ref{alg:mainalg} and the averaged `multi-path' formulas \eqref{eq:mpfm} and \eqref{eq:kotmp}. It will be clearly indicated which version of Algorithms \ref{alg:mainalgFM} and \ref{alg:mainalg} are implemented.

Additionally, we note that the correctly unbiased second-order empirical autocorrelation function $ A_2\obs $ is generally a better estimator of $ A_2f $ than $ A_3\obs $ is of $ A_3 f $, as it contains only two copies of the noisy observation $ \obs $, rather than three. Moreover, the power spectrum of $ f $ can be recovered easily from $ A_2\obs $. Leveraging these facts, for all subsequent experiments, for whichever algorithm and recursion formula we implement, after obtaining our estimated Fourier transform $ \widehat{\fft} $, we then extract the phase information and update the magnitudes of $ \widehat{\fft} $ with the power spectrum recovered from $ A_2\obs $. This reduces recovery errors in both space and frequency for all the signals and  algorithms we consider.

\subsection{Bispectrum Estimation}
\begin{figure}[h]
	\centering
	\begin{subfigure}[t]{0.32\textwidth}
		\centering
		\includegraphics[width=\textwidth]{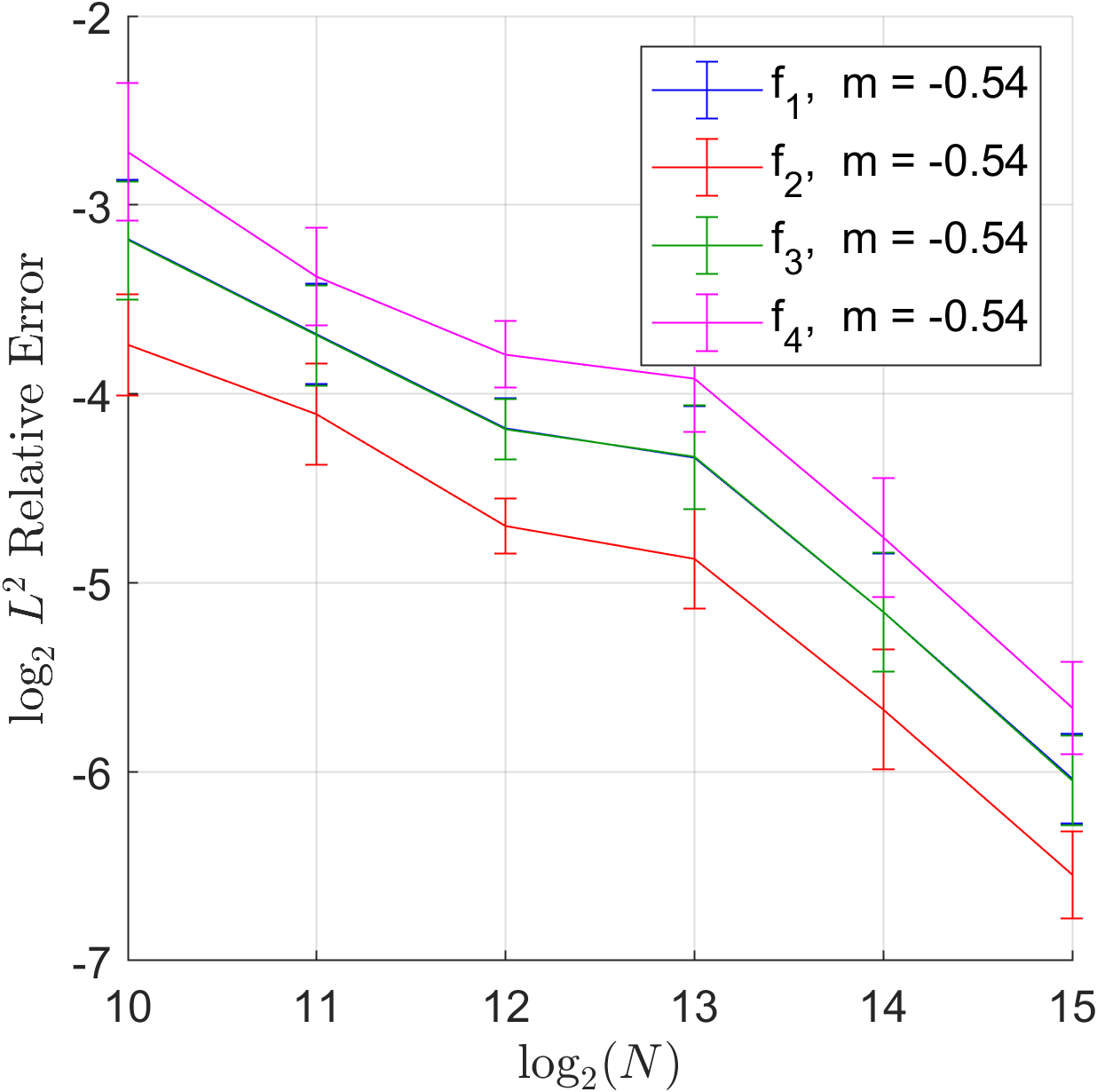}
		\caption{Estimating the number of signal occurrences $N$.}
		\label{fig:n_est}
	\end{subfigure}
	\hfill
	\begin{subfigure}[t]{0.32\textwidth}
		\centering
		\includegraphics[width=\textwidth]{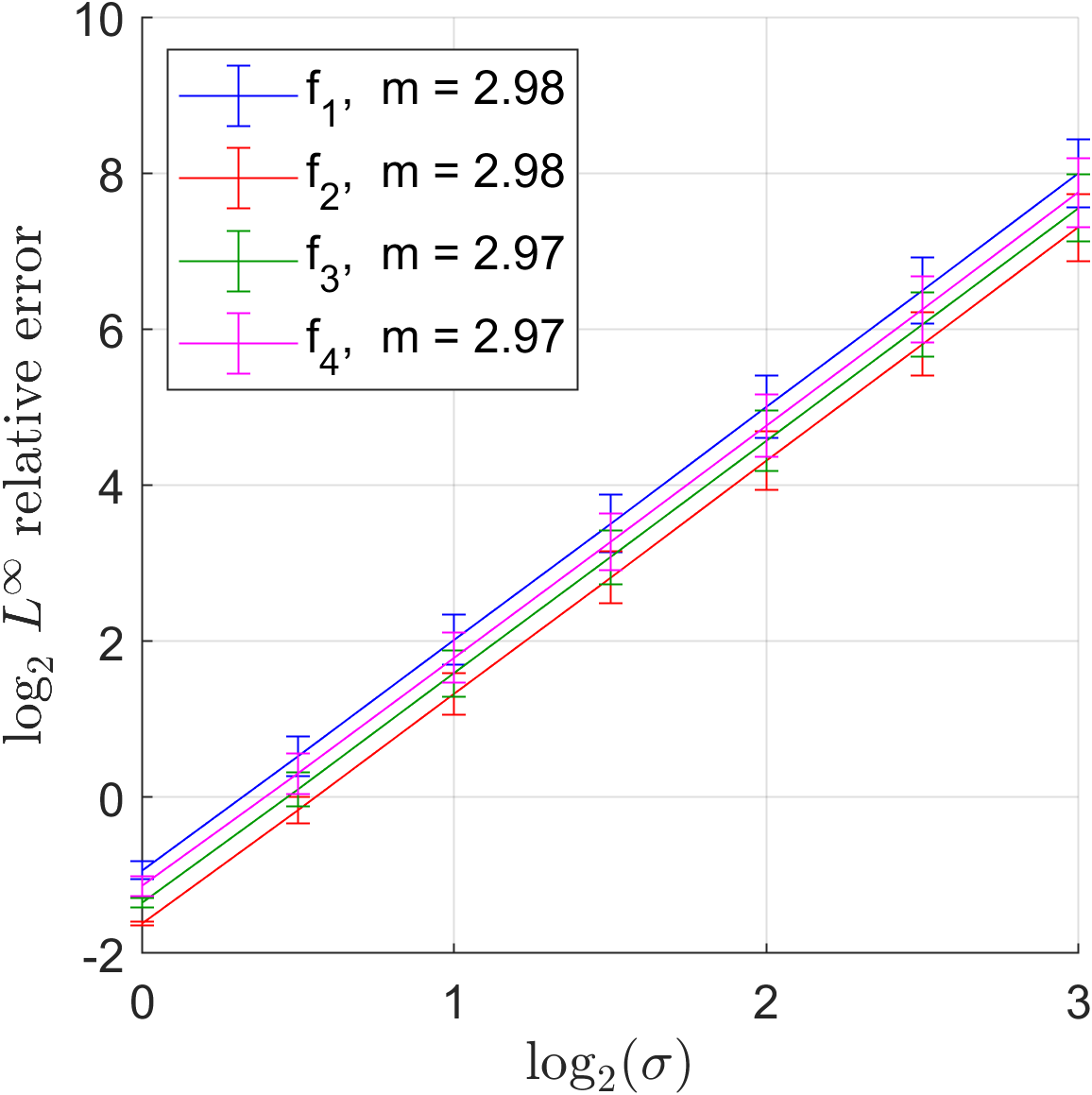}
		\caption{Estimating $ A_3f $.}
		\label{fig:a3y_est}
	\end{subfigure}
	\hfill
	\begin{subfigure}[t]{0.32\textwidth}
		\centering
		\includegraphics[width=\textwidth]{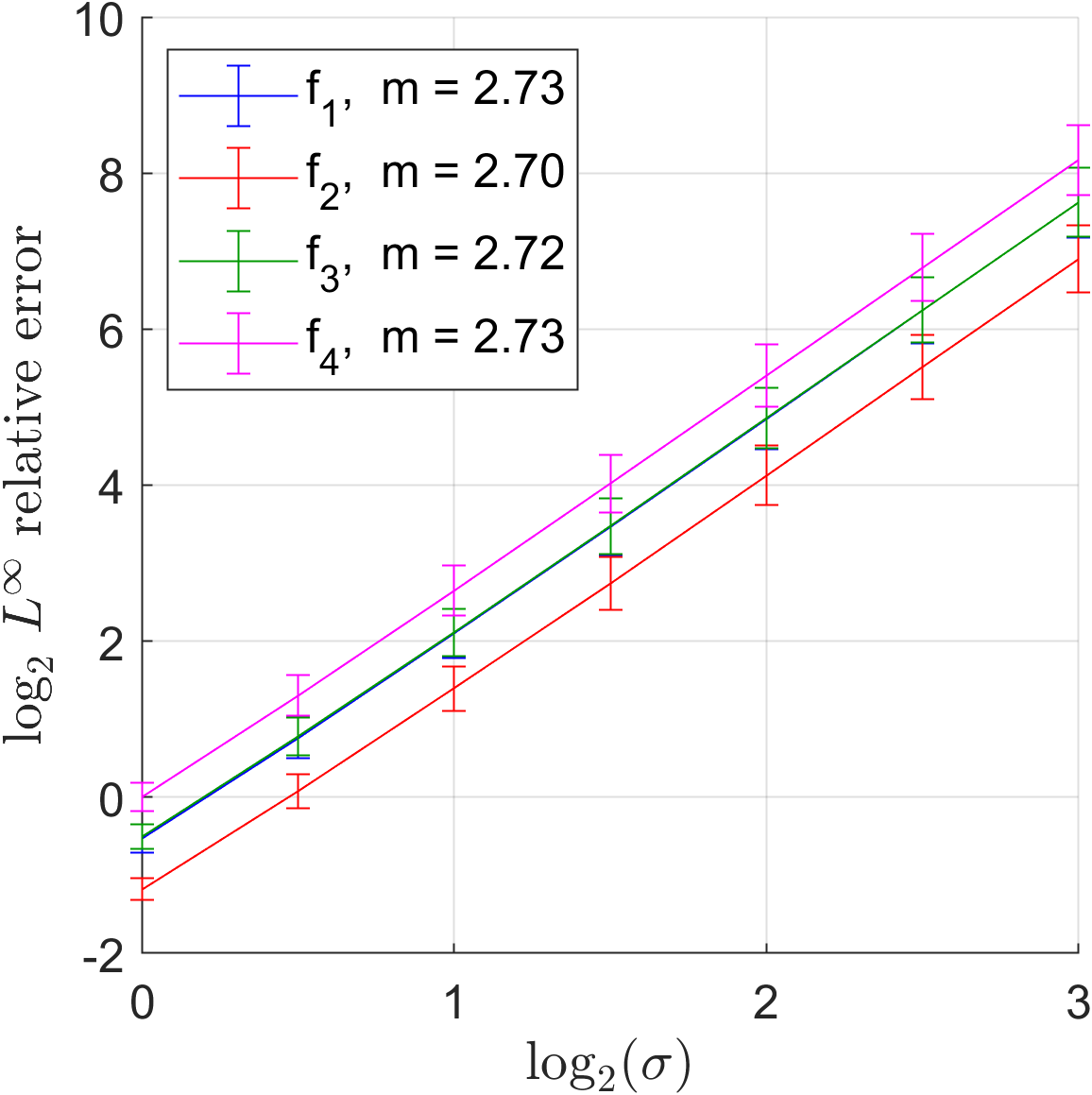}
		\caption{Estimating $ \Bf $.}
		\label{fig:subbispec_est}
	\end{subfigure}
	\caption{Three steps in estimating the bispectrum. The left panel plots error in estimating the true number of signal occurrences as a function of the number of occurrences. The middle and right panels plot error as a function of noise intensity $ \sigma $. Results are reported for all four hidden functions, with $m$ denoting the slope of the best linear fit for each function.}
	\label{fig:bispec_est}
\end{figure}

We first study the performance of our bispectrum estimator. We separately assess
the errors in estimating the number of signal occurrences and the third-order
autocorrelation, as well as the overall bispectrum estimation error.

Figure \ref{fig:n_est} shows the relative $ L^2 $ error in estimating the number of signal occurrences using the estimator proposed in Remark \ref{rmk:n_est} over a $ \log_2 $ evenly spaced range of sample sizes from $ 2^{10} $ to $ 2^{17} $. The noise parameters $ \sigma = 1, \lambda = .1 $ were held fixed for all sample sizes. Thus, all simulations were performed in the high noise regime. For all four functions, the relative error decreased monotonically as the number of signal occurrences increased. Moreover, for each function, the slope of the best fit line in log-log space was a constant $-0.54$. This matches well with the expected slope of $-0.5$ from the $N^{1/2}$ dependence on $\sam$ expected from extending the concentration in Proposition \ref{prop:bispectrumconcentration} to the $\sam$ estimator constructed in Remark \ref{rmk:n_est}. Going forward, to construct the estimator \eqref{equ:empirical_estimator}, one could first estimate the number of signal occurrences $\sam$ using the Remark \ref{rmk:n_est} estimator and use the estimated $\hat\sam$ in place of the true $\sam$. However, in order to control error dependence on model parameters, for all subsequent experiments we use the true number of signal occurrences $\sam$ in the construction of our bispectrum estimator. 

Figures \ref{fig:a3y_est} and \ref{fig:subbispec_est} show the relative $L^\infty$ error in estimating $A_3f$ and $\Bf$ using the estimators defined in \eqref{equ:empirical_estimator}. For both experiments, $\sam$ was held fixed at $2^{10}$ while the noise intensity $\sigma$ ranged evenly in $\log_2$ space from $2^0$ to $ 2^3 $. The bispectrum estimator was constructed without regularization but with zero-padding in space before mapping into frequency. The error curves for both experiments are almost perfectly linear for all functions, with slopes varying between $2.70$ to $2.73$ for the bispectrum estimation and $2.97$ to $2.98$ for the spatial autocorrelation estimation. This numerically validates the expected slope of 3 from the $\sigma^3$ dependence on $\sigma$ in Example \ref{ex:squaredexponential}, all other parameters being held constant. 

\subsection{Bispectrum Inversion} Now, we study the performance of our algorithms for inverting the bispectrum and recovering the hidden signal. In the following experiments, we emphasize the dependence on the model parameters $ \beta, N, \sigma $, as well as the spatial sampling rate. An example of a signal recovered using Algorithms \ref{alg:mainalgFM} and \ref{alg:mainalg} can be seen in Figure \ref{fig:introplotright}, where we recovered $f_2$ from a $\sigma=1$ observation with $N=2^{15}$ samples.

\subsubsection{Varying the Number of Signal Occurrences}
\begin{figure}[h]
    \centering
    \begin{subfigure}[b]{0.45\textwidth}
        \centering
        \includegraphics[width=\textwidth]{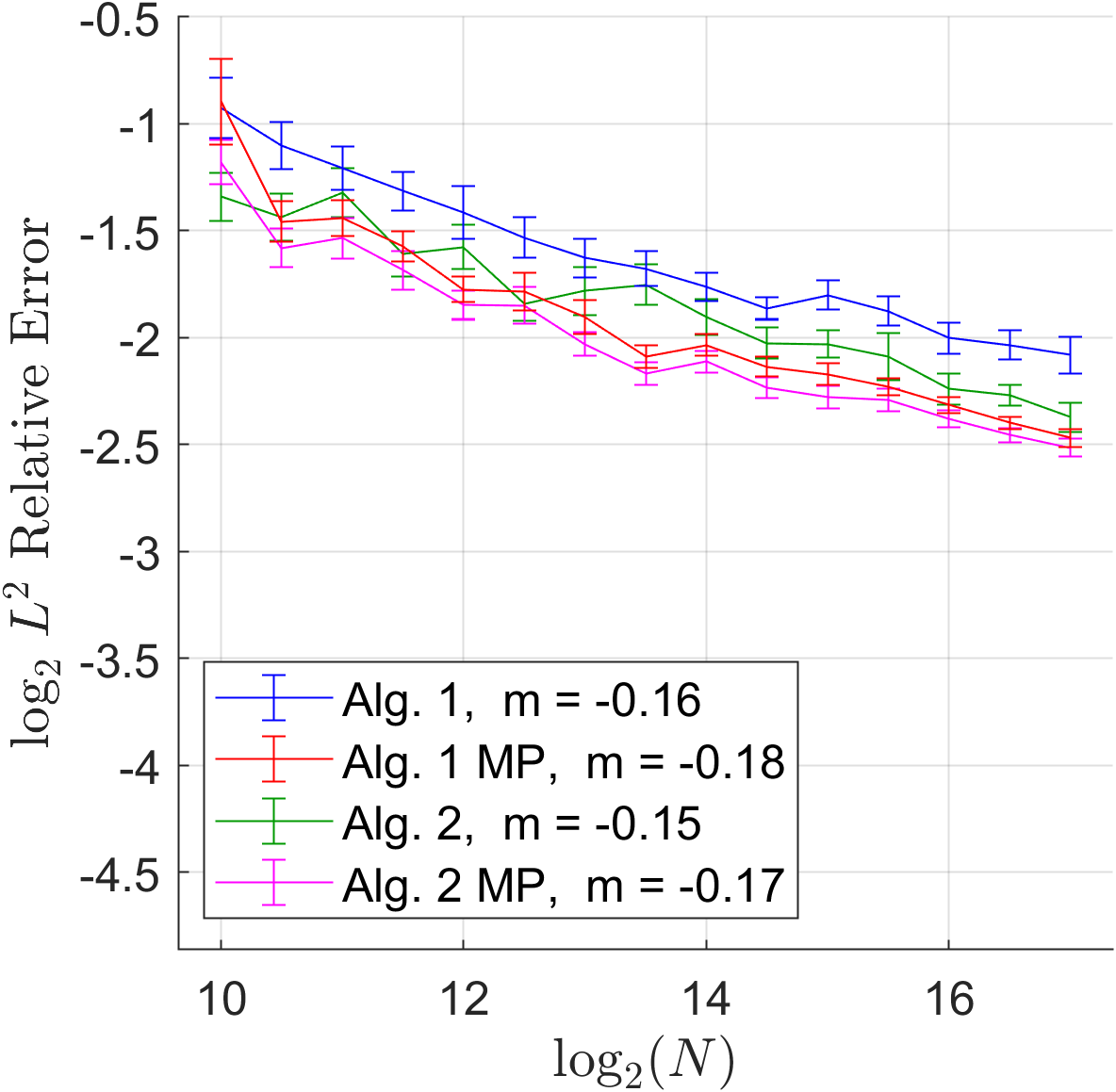}
        \caption{Recovery of $ f_1 $.}
        \label{fig:sub1}
    \end{subfigure}
    \hfill
    \begin{subfigure}[b]{0.45\textwidth}
        \centering
        \includegraphics[width=\textwidth]{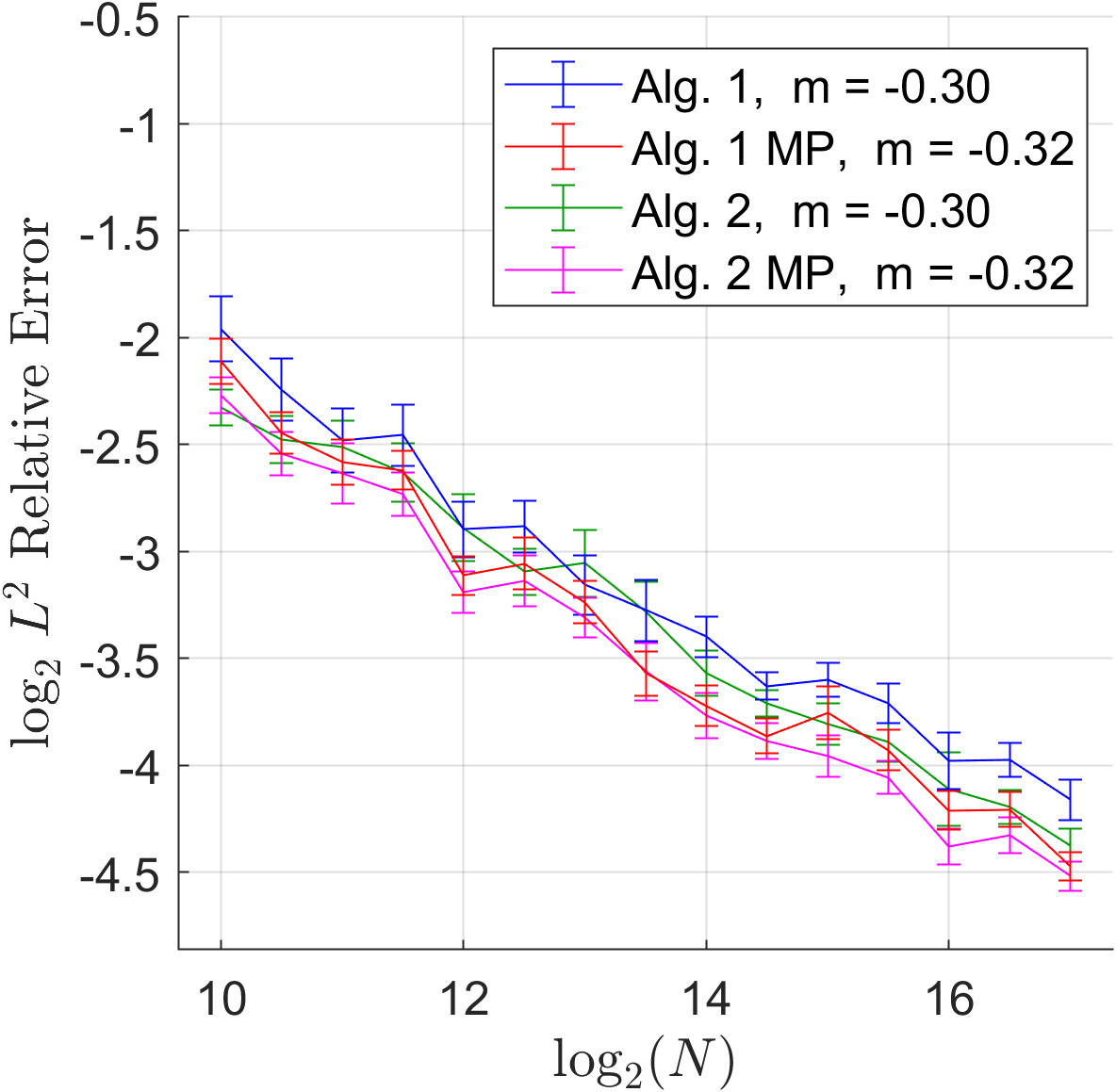}
		\caption{Recovery of $f_2$.}
		\label{fig:sub2}
    \end{subfigure}
    \\[1em]
    \begin{subfigure}[b]{0.45\textwidth}
        \centering
        \includegraphics[width=\textwidth]{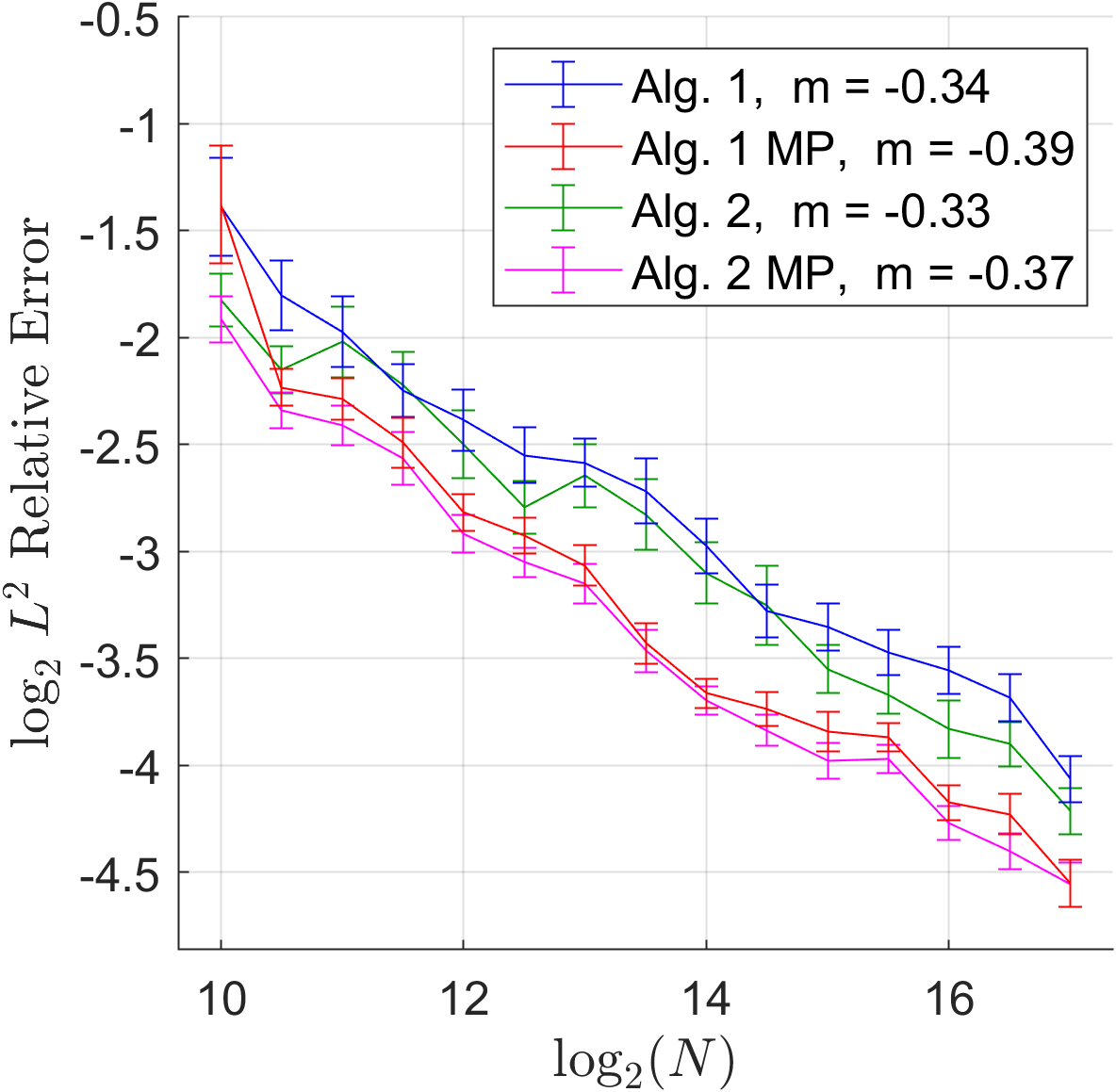}
		\caption{Recovery of $ f_3 $.}
		\label{fig:sub3}
    \end{subfigure}
    \hfill
    \begin{subfigure}[b]{0.45\textwidth}
        \centering
        \includegraphics[width=\textwidth]{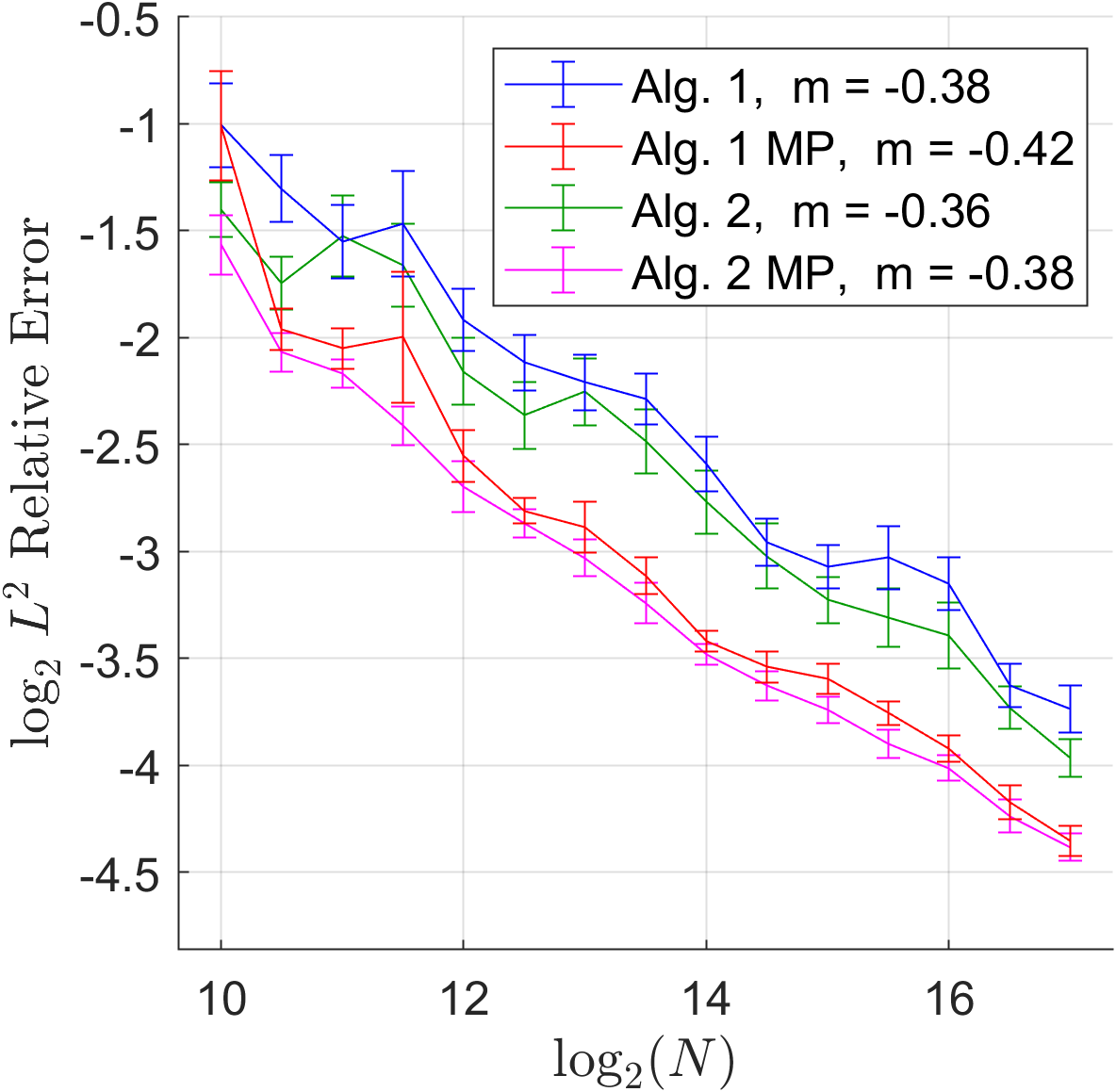}
		\caption{Recovery of $ f_4 $.}
		\label{fig:sub4}
    \end{subfigure}
    
    \caption{Error as a function of $\sam$ for recovering the four hidden signals using Algorithms \ref{alg:mainalgFM} and \ref{alg:mainalg} and their `multi-path' extensions. Slopes are reported for lines of best fit for each algorithm. }
	\label{fig:changing_n_l5}
\end{figure}
Figure \ref{fig:changing_n_l5} shows the relative spatial $L^2$ error decay in estimating the hidden signal $f$ using Algorithms \ref{alg:mainalgFM} and \ref{alg:mainalg}, as well as their `multi-path' extensions discussed in Subsection \ref{ssec:algextensions}. In this experiment, $ \sigma = 1, \lambda = .1 $ were held fixed to situate the simulations in the high noise regime. As discussed in the Section \ref{ssec:algextensions}, an additional regularization constant of $r=10^{-3}$ was introduced, and the deconvolution bandwidth was chosen to be $h = 10^{-1}$ for all functions. $\sam$ was varied evenly in $ \log_2 $ space from $ 2^{10} $ to $ 2^{17} $. Reported slopes are for lines of best fit. We call attention to the fact that as the hidden signals vary from $ f_1 $ to $ f_4 $, the slopes of the best fit lines for all algorithms increase monotonically. This demonstrates the dependence on $ \beta $ for the relative error bounds in Theorem \ref{thm:ErrorBoundDeconvolutionEstimator}: as $ \beta $ increases, the relative error in recovery decreases and thus the slope of the recovery best fit line increases in absolute value as well.

We also call attention to the performance of the various algorithms relative to each other. For the slower decaying functions $ f_1,f_2 $, there is less of a difference between Algorithms \ref{alg:mainalgFM}, \ref{alg:mainalg} and their `multi-path' equivalents. However, for the functions $ f_3,f_4 $ with faster frequency decay, the `multi-path' algorithms outperform their `single-path' equivalents, giving lower relative errors at all values of $\sam$. In general, for all functions, Algorithm \ref{alg:mainalg} and its `multi-path' equivalent outperform Algorithm \ref{alg:mainalgFM} and its `multi-path' equivalent, respectively---albeit only slightly for the slower frequency decaying functions. This difference in performance is accentuated in Figure \ref{fig:changing_n_l3}. The experiment for this figure was conducted for $ f_1 $ and is equivalent to that of Figure \ref{fig:sub1} with the exception that the spatial sampling rate was changed to the coarser $ 2^{-3} $, rather than $ 2^{-5} $. Here, both the single and `multi-path' variants of Algorithm \ref{alg:mainalg} outperform Algorithm \ref{alg:mainalgFM} and its `multi-path' extension at all $\sam$ values, with the gap in performance closing as $\sam$ increases. Together with Figure \ref{fig:changing_n_l5}, this demonstrates that Algorithm
\ref{alg:mainalg} increasingly outperforms Algorithm \ref{alg:mainalgFM} as the
number of informative Fourier modes decreases, whether because of a faster
\(\beta\) decay rate or fewer sampled frequency points.
This is likely due to the fact that Algorithm \ref{alg:mainalgFM} only recovers integer frequencies to reconstruct a Fourier series periodic extension of the hidden signal, while Algorithm \ref{alg:mainalg} recovers a finer frequency grid of values for the hidden signal;  Algorithm \ref{alg:mainalg} is less sensitive to having fewer frequency values and is more discretization-agnostic.

\subsubsection{Varying Noise Intensity}
In Figures \ref{fig:changing_sigma_l3} and \ref{fig:changing_sigma_l5} we held the number of signal occurrences fixed at $ N = 2^{10} $ and studied the spatial relative $ L^2 $ error in recovery of the hidden signal $f_1$ from the estimators constructed via Algorithms \ref{alg:mainalgFM} and  \ref{alg:mainalg} and their `multi-path' extensions. The noise intensity $ \sigma $ was varied evenly in $ \log_2 $ spacing from $ 2^{-10} $ to $ 2^3 $, while the algorithm constants $ r=10^{-3}, h = 10^{-1} $ were held fixed, as in Figure \ref{fig:changing_n_l5}. In Figure \ref{fig:changing_sigma_l3}, the spatial sampling rate was fixed at $ 2^{-3} $, while in Figure \ref{fig:changing_sigma_l5}, the spatial sampling rate was fixed at the finer $ 2^{-5} $. In addition to our proposed algorithms, we also consider the spectral method for
bispectrum inversion introduced in \cite{spectralbispecinversion}. This is another
non-initialized approach to bispectrum inversion, but it was developed for the
discrete formulation of the MRA problem and relies on a sufficiently large spectral
gap in an eigenvalue decomposition associated with the bispectrum. While the spectral algorithm can be applied to the functional shifts-off-grid setting (see Section \ref{subsec:num_setup}), its performance is highly dependent on the sampling rate. In Figure \ref{fig:changing_sigma_l3}, the spectral algorithm performs only slightly worse than our proposed algorithms as $ \sigma $ increaes, but in Figure \ref{fig:changing_sigma_l5}, the spectral algorithm dramatically fails; the fine spatial grid leads to very small eigenvalue gaps in the spectral
decomposition of the algorithm's bispectrum matrix, making the method highly
sensitive to even minuscule noise. Thus, among uninitialized algorithms for
bispectrum inversion, our proposed algorithms are better suited to the
off-grid-shift setting and are more discretization-agnostic.

\begin{figure}[h]
	\centering
	\begin{subfigure}[b]{0.32\textwidth}
		\centering
		\includegraphics[width=\textwidth]{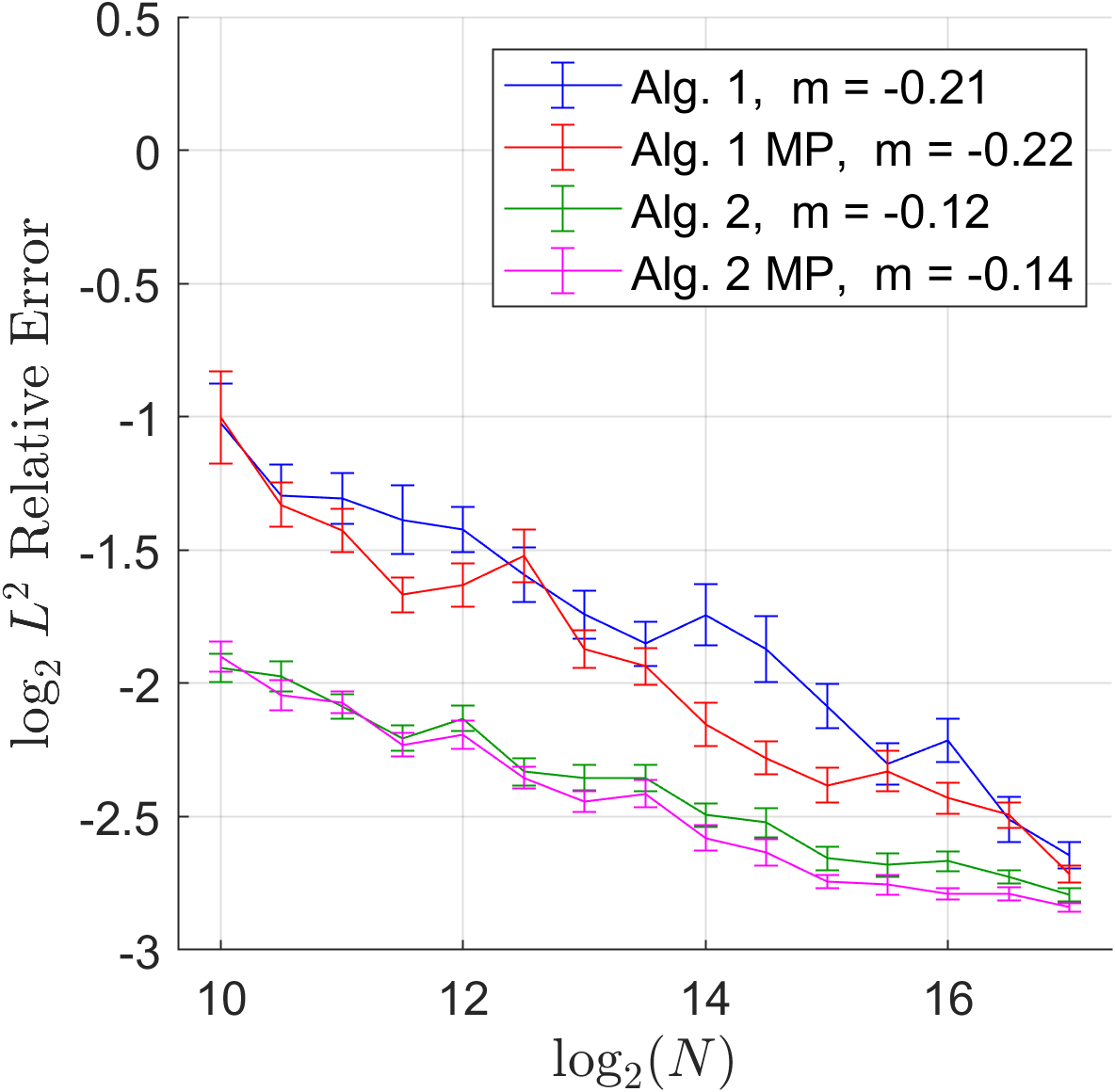}
		\caption{Recovery of $f_1$ as a function of $\sam$ for a spatial sampling rate of $2^{-3}$.}
		\label{fig:changing_n_l3}
	\end{subfigure}
	\hfill
	\begin{subfigure}[b]{0.32\textwidth}
		\centering
		\includegraphics[width=\textwidth]{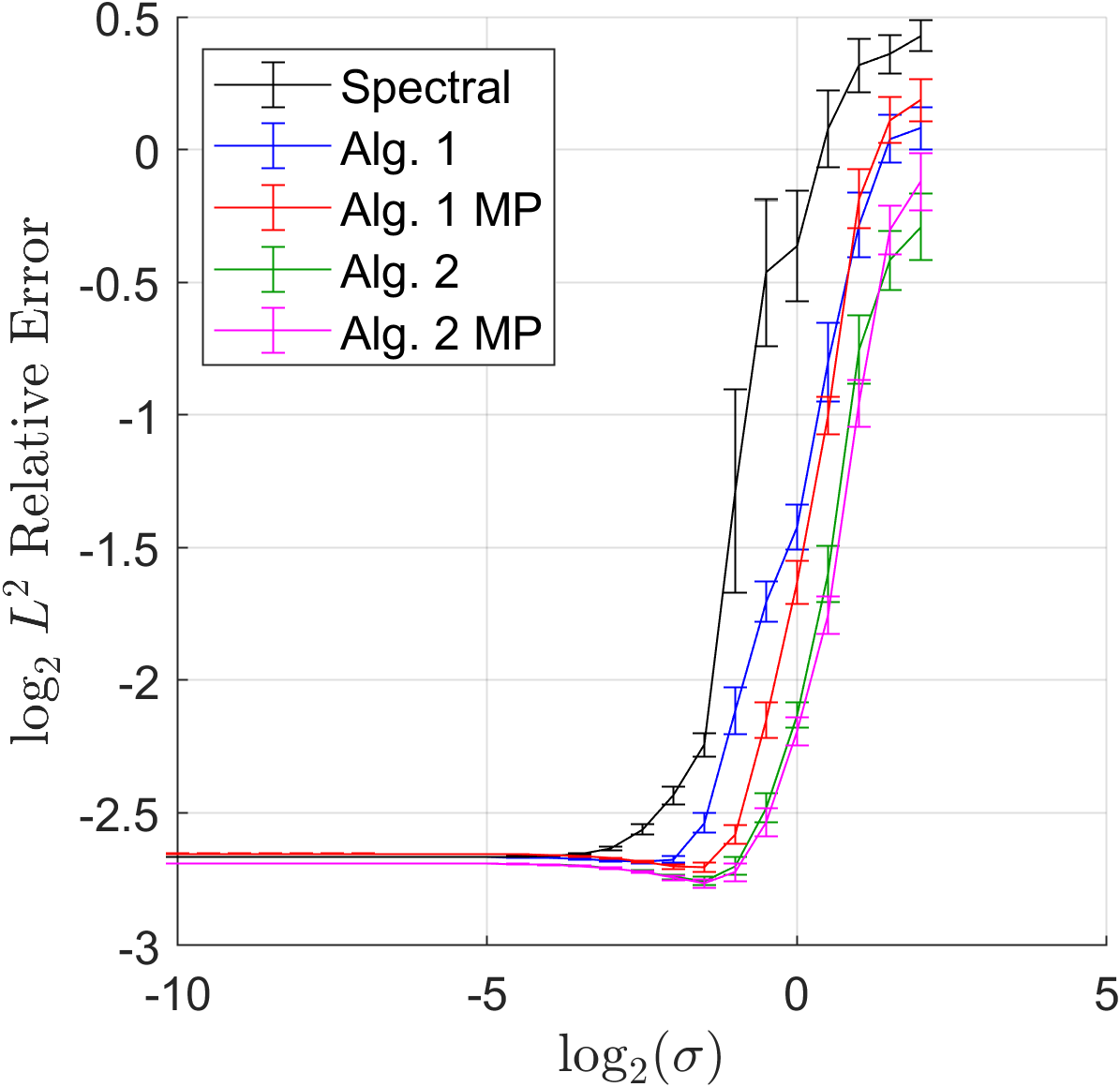}
		\caption{Error in recovery of $f_1$ as a function of $ \sigma $, for a spatial sampling rate of $2^{-3}$. }
		\label{fig:changing_sigma_l3}
	\end{subfigure}
	\hfill
	\begin{subfigure}[b]{0.32\textwidth}
	\centering
	\includegraphics[width=\textwidth]{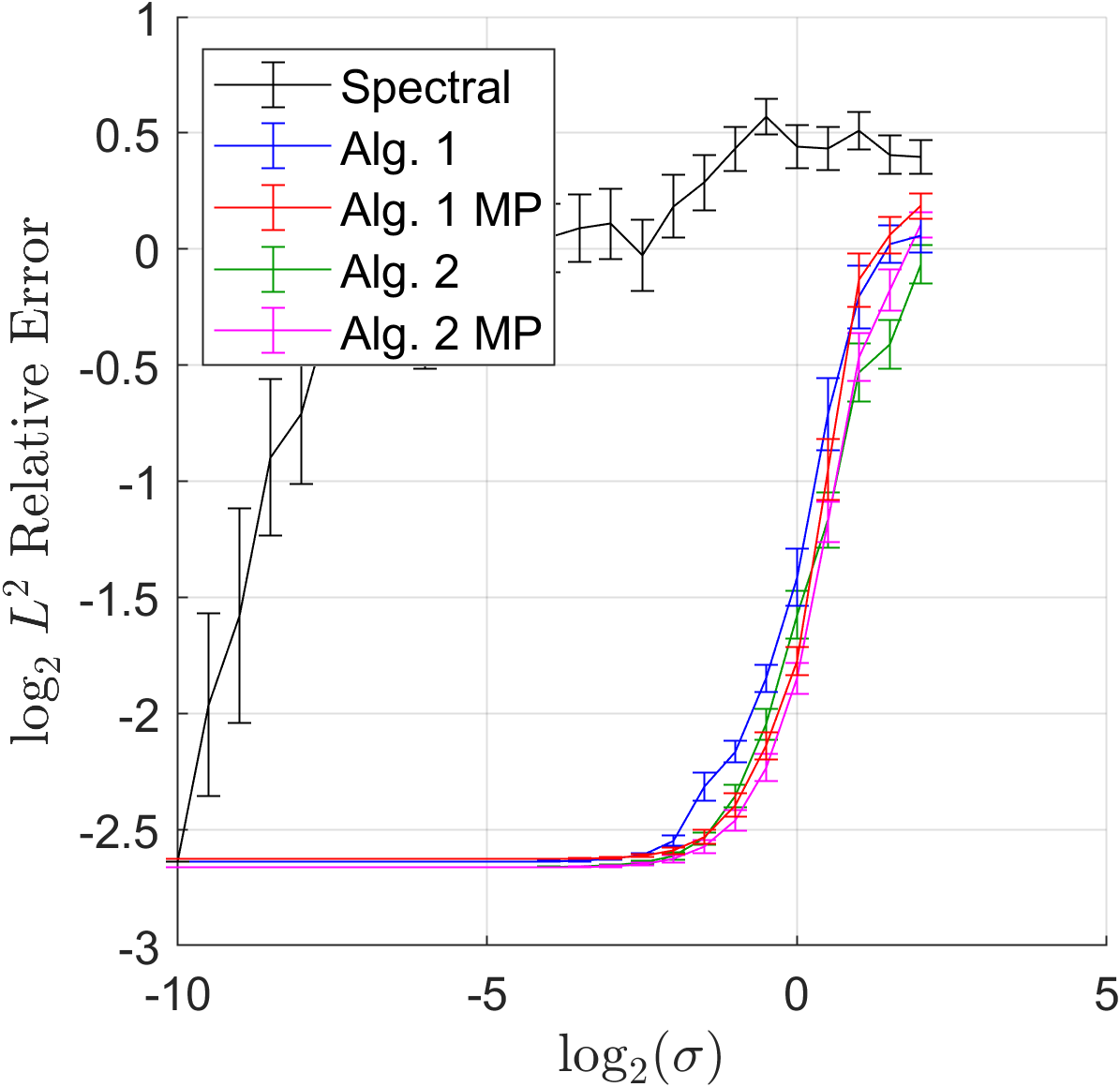}
	\caption{Error in recovery of $f_1$ as a function of $ \sigma $, for a spatial sampling rate of $2^{-5}$. }
	\label{fig:changing_sigma_l5}
	\end{subfigure}

	\caption{Three experiments comparing the performance of Algorithms \ref{alg:mainalgFM} and \ref{alg:mainalg}, their `multi-path' extensions, and the spectral algorithm from \cite{spectralbispecinversion} as functions of $\sam$ and noise intensity. The left panel is identical to the experiment in Figure \ref{fig:sub1} but with a coarser spatial sampling rate. The middle and right panels are equivalent except for a differing spatial sampling rate.   }
	\label{fig:bispec_est}
\end{figure}

\section{Conclusion}\label{sec:conclusions}
This paper introduced and analyzed two end-to-end algorithms for signal recovery in the functional MTD setting, providing the first explicit signal recovery guarantees in the MTD literature with finite sample concentration results. Our functional formulation accommodates off-grid translations, spatially correlated noise, and signals without band-limiting assumptions, with recovery rates depending explicitly on signal smoothness. A key theoretical byproduct is the identification of Kotlarski's formula as the continuous limit of frequency marching, unifying methods in deconvolution and bispectrum inversion.
Natural directions for future work include incorporating rotations into the functional MTD model, relaxing the well-separated assumption, and closing the gap between the upper bounds established here and the sample complexity lower bounds of \cite{abraham2025sample, SEorbitrec}.

\bibliographystyle{siam}
\bibliography{MTD, MRA}

\newpage
\begin{appendix}
    \section{Auxiliary Results}\label{sec:technical lemmas}

\begin{lem}\label{lem:bispectrum}
    Let $g\in L^1(\mathbb{R}^d)$ be compactly supported on $[-(\dom-2\pi),\dom-2\pi]^d$. Then
    \[ (A_3g)^{ft}(\omega_1,\omega_2) =\B g(\omega_1,\omega_2) \, ,\]
    where $A_3g$ is defined in \eqref{equ:A3_def} and $\B g$ in \eqref{equ:bispectrum_def}.
\end{lem}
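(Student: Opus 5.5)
The plan is to compute the Fourier transform of $A_3g$ directly and rearrange with Fubini until it factors into three one-dimensional Fourier transforms of $g$. The key preliminary observation is that truncating the integral in \eqref{equ:A3_def} to $[-\dom,\dom]^d$ costs nothing. Since $g$ vanishes outside $[-(\dom-2\pi),\dom-2\pi]^d$, the integrand $g(t)g(t+z_1)g(t-z_2)$ is zero unless $t$ lies in this smaller box. So for every $(z_1,z_2)$ we have
\[
A_3g(z_1,z_2)=\int_{\R^d} g(t)g(t+z_1)g(t-z_2)\,dt .
\]
(For $(z_1,z_2)$ outside $Z$ the formula is taken as this full-space integral; the margin $2\pi$ simply guarantees that the truncation is harmless. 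If one insists on restricting to $Z$, the identity holds for functions $g$ supported on $[-\pi,\pi]^d$, such as $f$, because then $A_3g$ already vanishes off $Z$.)

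Next I would justify Fubini. Take absolute values: after the substitutions $u=t+z_1$ and $v=t-z_2$, the integral of $|g(t)||g(t+z_1)||g(t-z_2)|$ over $(t,z_1,z_2)\in\R^{3d}$ equals $\|g\|_1^3<\infty$. Hence $A_3g\in L^1(\R^{2d})$, its Fourier transform is defined pointwise, and we may integrate in any order.

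The computation then goes as follows. Write
\[
(A_3g)^{ft}(\omega_1,\omega_2)=\int\!\!\int\!\!\int g(t)g(t+z_1)g(t-z_2)e^{-i\omega_1\cdot z_1-i\omega_2\cdot z_2}\,dt\,dz_1\,dz_2 .
\]
For fixed $t$, substitute $u=t+z_1$ and $v=t-z_2$. Then
\[
e^{-i\omega_1\cdot z_1-i\omega_2\cdot z_2}=e^{-i\omega_1\cdot u}\,e^{i\omega_2\cdot v}\,e^{-i(\omega_2-\omega_1)\cdot t},
\]
so the triple integral factors as
\[
\int g(t)e^{-i(\omega_2-\omega_1)\cdot t}\,dt\;\int g(u)e^{-i\omega_1\cdot u}\,du\;\int g(v)e^{-i(-\omega_2)\cdot v}\,dv .
\]
This product is $g^{ft}(\omega_2-\omega_1)\,g^{ft}(\omega_1)\,g^{ft}(-\omega_2)=\B g(\omega_1,\omega_2)$, which is exactly \eqref{equ:bispectrum_def}.

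The main point requiring care is the first step. One must check that the support hypothesis makes the window truncation irrelevant for all lags at which the transform is evaluated, and reconcile this with the convention that $A_3g$ is regarded only on $Z$. The rest is a routine change of variables.
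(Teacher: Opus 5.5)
Your proof is correct and is the same direct computation the paper uses: write $(A_3g)^{ft}$ as a triple integral, change the order of integration, and substitute $u=t+z_1$, $v=t-z_2$ so that it factors into $g^{ft}(\omega_1)\,g^{ft}(-\omega_2)\,g^{ft}(\omega_2-\omega_1)$. Your Tonelli bound by $\|g\|_1^3$ and your remark on dropping the window $[-\dom,\dom]^d$ make explicit two steps the paper does silently; note that this truncation step uses only the factor $g(t)$, i.e.\ $\operatorname{supp} g\subset[-\dom,\dom]^d$, so the $2\pi$ margin is not actually what makes it harmless.
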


\begin{proof}
By a straightforward calculation, we have 
    \begin{align*}
        (A_3g)^{ft}(\omega_1,\omega_2) &= \int_{\R^d}\int_{\R^d} \int_{[-\dom,\dom]^d} g(t)g(t+z_1)g(t-z_2) e^{-i\omega_1\cdot z_1-i\omega_2\cdot z_2}dtdz_1dz_2 \\
        &= \int_{\R^d} \left[ \int_{\R^d} g(t+z_1)e^{- i z_1\cdot\omega_1} \right] \left[ \int_{\R^d} g(t-z_2)e^{- i z_2\cdot\omega_2}\right] g(t) dt\\
        &= \int_{\R^d} \left[ \int_{\R^d} g(z_1)e^{- i \omega_1\cdot (z_1-t)}dz_1\right ]\left[ \int_{\R^d} g(z_2)e^{- i \omega_2\cdot (t-z_2)}dz_2\right ] g(t) dt\\
        &= \gft(\omega_1)\gft(-\omega_2)\int_{\R^d} g(t)e^{- i (\omega_2-\omega_1)\cdot t}dt\\
        &= \gft(\omega_1)\gft(-\omega_2)\gft(\omega_2-\omega_1) = \B g(\omega_1,\omega_2),
    \end{align*}
    as desired. 
\end{proof}

\begin{lem}\label{lemma:A3M}
Under Assumption \ref{assump:standing}, we have $A_3F=\sam A_3f$ and 
\begin{align*}
    \E\biggl[\frac{A_3\obs(z_1,z_2)}{\sam} \biggr] = A_3f(z_1,z_2)+\noisecov(z_2)+\noisecov(z_1)+\noisecov(z_1+z_2) .
\end{align*}
\end{lem}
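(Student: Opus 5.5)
The plan is to expand $A_3\obs$ trilinearly and handle the deterministic part and the random part separately. Substituting $\obs=F+\epsilon$ into \eqref{equ:A3_def} gives eight terms: $A_3F$, $A_3\epsilon$, three terms quadratic in $F$ and linear in $\epsilon$, and three terms linear in $F$ and quadratic in $\epsilon$. This is exactly the expansion recorded in the proof of Lemma \ref{lemma:error decomp}.

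\textbf{Step 1: $A_3F=\sam A_3f$.} Write
\[
F(t)F(t+z_1)F(t-z_2)=\sum_{i,j,l} f(t-x_i)\,f(t+z_1-x_j)\,f(t-z_2-x_l).
\]
For $(z_1,z_2)\in Z$, a nonzero term needs $t-x_i$, $t+z_1-x_j$ and $t-z_2-x_l$ all in $[-\pi,\pi]^d$. Then $x_j-x_i\in z_1+[-2\pi,2\pi]^d$, so $|x_j-x_i|_\infty\le 4\pi$.

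The separation in Assumption \ref{assump:standing}(iv) is stated in the Euclidean norm, $|x_i-x_j|_2>4\pi\sqrt d$. I will check that this forces $i=j$ for the relevant supports, using the box-versus-ball comparison together with the fact that both $f$-supports must overlap a common point. If the box estimate is only borderline, I will use the sharper fact that $f(\cdot-x_i)$ and $f(\cdot-x_j)$ have disjoint supports (their distance exceeds the support diameter $2\pi\sqrt d$). The same argument gives $i=l$. So only diagonal terms survive, and
\[
A_3F=\sum_i\int f(t-x_i)f(t-x_i+z_1)f(t-x_i-z_2)\,dt .
\]
Because $F$ is supported in $[-(\dom-2\pi),\dom-2\pi]^d$, each translate $t\mapsto t-x_i$ keeps all of the relevant support inside $[-\dom,\dom]^d$. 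Each summand therefore equals $A_3f(z_1,z_2)$, and the sum is $\sam A_3f$.

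\textbf{Step 2: expectation.} Since $\epsilon$ is a centered Gaussian process, odd moments vanish. Hence $\E A_3\epsilon=0$ and the three terms linear in $\epsilon$ have mean zero; exchanging expectation and integral is justified by Fubini, since $\noisecov$ is bounded and the domain is bounded. Each term quadratic in $\epsilon$ has mean given by stationarity. For example,
\[
\E\int\epsilon(t)\epsilon(t-z_2)F(t+z_1)\,dt=\noisecov(z_2)\int F(t+z_1)\,dt .
\]
Likewise the pair $\epsilon(t)\epsilon(t+z_1)$ gives $\noisecov(z_1)$, and the pair $\epsilon(t+z_1)\epsilon(t-z_2)$ gives $\noisecov(z_1+z_2)$, using that $\noisecov$ is even.

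The remaining factor is $\int_{[-\dom,\dom]^d}F(t+z)\,dt$. By the support margin this equals $\sum_i\int f=\sam\,\fft(0)=\sam$, using the normalization in Assumption \ref{assump:standing}(iii). Dividing by $\sam$ yields the identity.

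The main point requiring care is Step 1: showing that the cross terms vanish under the stated Euclidean separation, and checking that the shifted window $[-\dom,\dom]^d$ fully contains each translated triple product. Both are handled by the support margin $2\pi$ built into Assumption \ref{assump:standing}(iv).
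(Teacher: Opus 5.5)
Your proposal is correct and is essentially the paper's proof: expand $A_3\obs$ trilinearly, use the separation to reduce $A_3F$ to its diagonal terms $\sam A_3f$ on $Z$, and take expectations. Odd Gaussian moments then vanish, and the three quadratic-in-$\epsilon$ terms contribute $\noisecov(z_2)$, $\noisecov(z_1)$, $\noisecov(z_1+z_2)$ times $\int F=\sam$.

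One small correction to your Step 1. The box estimate is not borderline: $|x_j-x_i|_2\le\sqrt d\,|x_j-x_i|_\infty\le 4\pi\sqrt d$ already contradicts the strict separation. Your fallback, disjointness of the supports of $f(\cdot-x_i)$ and $f(\cdot-x_j)$, would not suffice on its own, because it ignores the lag $z_1$.
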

\begin{proof}
By the separation condition in Assumption \ref{assump:standing}, we have
\begin{align*}
    A_3F(z_1,z_2)&= \int_{[-\dom,\dom]^d} \sum_{i=1}^\sam f(t-x_i) \sum_{j=1}^\sam f(t-x_j+z_1) \sum_{k=1}^\sam f(t-x_k-z_2)dt \\
    &=\int_{[-\dom,\dom]^d} \sum_{i=1}^\sam f(t-x_i)f(t-x_i+z_1)f(t-x_i-z_2)dt= \sam A_3f(z_1,z_2).
\end{align*}
A straightforward calculation gives:
\begin{align*}
    A_3\obs(z_1,z_2) &= A_3F(z_1,z_2) + A_3\epsilon(z_1,z_2) \\
    &\quad +A_2[\epsilon(t), F(t+z_1)](-z_2) + A_2[\epsilon(t),F(t-z_2)](z_1) + A_2[\epsilon(t+z_1),F(t)](-z_2-z_1) \\
    &\quad +A_1[\epsilon(t-z_2),F(t)F(t+z_1)] + A_1[\epsilon(t+z_1),F(t)F(t-z_2)] + A_1[\epsilon(t),F(t+z_1)F(t-z_2)] \, ,
\end{align*}   
where $A_2, A_1$ are the weighted autocorrelation functions defined in \eqref{equ:weighted_A2_A1_def}.
Taking expectation on both sides and noticing that all odd moments of the noise equal zero, we get 
\begin{align*}
    \E[A_3\obs(z_1,z_2)]& = A_3F(z_1,z_2) + \E\Big[A_2[\epsilon(t), F(t+z_1)](-z_2)\Big] \\
    &\quad +\E\Big[A_2[\epsilon(t),F(t-z_2)](z_1)\Big] + \E\Big[A_2[\epsilon(t+z_1),F(t)](-z_2-z_1)\Big]\\
    & = A_3F(z_1,z_2)+\E \left[\int_{[-\dom,\dom]^d} \epsilon(t)\epsilon(t-z_2)F(t+z_1)dt\right]\\
    &\quad +\E \left[\int_{[-\dom,\dom]^d}\epsilon(t)\epsilon(t+z_1)F(t-z_2)\right]+\E\left[\int_{[-\dom,\dom]^d}\epsilon(t+z_1)\epsilon(t-z_2)F(t)dt\right]\\
    &= A_3F(z_1,z_2) +[\noisecov(z_2)+\noisecov(z_1)+\noisecov(z_1+z_2)]\int_{[-\dom,\dom]^d} F(t)dt,
\end{align*}
where in the last step we have used the fact that the integration domain $[-\dom,\dom]^d$ covers the support of $F$ for shifts $z_1,z_2\in [-2\pi,2\pi]^{2d}$, so that all three integrals of $F$ above are equal to $\int_{[-\dom,\dom]^d}F(t)dt$.
The result then follows by noticing that $f^{ft}(0)=1$ implies $\int_{[-\dom,\dom]^d} F(t)dt = \sam  \int_Df(t)dt= \sam $. 

\end{proof}

\begin{defn}\label{def:error_decomp}
We explicitly decompose the error terms arising in \eqref{eq:concentration error} into the following random processes: 
\begin{align}
    \xxx(z)& := \frac{1}{\sqrt{\sam}}\int_{[-\dom,\dom]^d} \epsilon(t)\epsilon(t-z_1)\epsilon(t-z_2)dt\label{eq:type 3 error}\\
    \xxone(z) & := \frac{1}{\sqrt{\sam}}\int_{[-\dom,\dom]^d} \left[\sum_{i=1}^\sam  f(t-x_i)\right]\Bigg[\epsilon(t+z_1)\epsilon(t-z_2)-\noisecov(z_1+z_2)\Bigg]dt\label{eq:type 2 error}\\
    \xxtwo(z) &:= \frac{1}{\sqrt{\sam}}\int_{[-\dom,\dom]^d} \left[\sum_{i=1}^\sam  f(t-x_i)\right]\Bigg[\epsilon(t+z_1)\epsilon(t+z_1+z_2)-\noisecov(z_2)\Bigg]dt\nonumber\\
    \xxthree(z) &:= \frac{1}{\sqrt{\sam}}\int_{[-\dom,\dom]^d} \left[\sum_{i=1}^\sam  f(t-x_i)\right]\Bigg[\epsilon(t-z_1)\epsilon(t-z_1-z_2)-\noisecov(z_2)\Bigg]dt\nonumber\\
    \xone(z) & := \frac{1}{\sqrt{\sam}}\int_{[-\dom,\dom]^d}\left[ \sum_{i=1}^\sam  f(t-x_i+z_1)f(t-x_i-z_2)\right] \epsilon(t)dt\label{eq:type 1 error}\\
    \xtwo(z) &:=  \frac{1}{\sqrt{\sam}}\int_{[-\dom,\dom]^d} \left[\sum_{i=1}^\sam  f(t-x_i+z_2)f(t-x_i+z_1+z_2)\right]\epsilon(t)dt \nonumber\\
    \xthree(z) & :=  \frac{1}{\sqrt{\sam}}\int_{[-\dom,\dom]^d}\left[\sum_{i=1}^\sam  f(t-x_i-z_1)f(t-x_i-z_1-z_2)\right]\epsilon(t)dt\nonumber
\end{align}
\end{defn}

\begin{defn}\label{def:Gaussian_chaos}
Let $r$ be a pseudo-metric on a set $D$. A centered random field $X$ on $D$ is said to be a sub-$k$th-Gaussian chaos field with respect to the metric $r$ if 
\begin{align}\label{eq:sub-alpha exponential assumption}
    \E \left[\exp\left(\frac{X(z)-X(w)}{r(z,w)}\right)^{2/k}\right]\leq 2,\qquad \forall z,w\in D.
\end{align}
The canonical metric of $X$ is defined as $s(z,w)=\sqrt{\E|X(z)-X(w)|^2}$. 
\end{defn}

\begin{lem}\label{lemma:moments gaussian chaos}
Under Assumption \ref{assump:standing},
up to a rescaling, the random fields $\{\xone,\xtwo,\xthree\}$,  $\{\xxone,\xxtwo,\xxthree\}$, and $\{\xxx\}$ in Definition \ref{def:error_decomp} are sub-$kth$ Gaussian chaos fields with respect to their canonical metrics with $k=1,2,3$ respectively.  
\end{lem}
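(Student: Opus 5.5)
The plan is to exploit the fact that each field in Definition \ref{def:error_decomp} is an element of a fixed Wiener chaos of the Gaussian process $\epsilon$. Then hypercontractivity turns the $L^2$ canonical metric into an exponential-moment (Orlicz) bound with exponent $2/k$. The fields group by degree. The $\{E_1^{(j)}\}$ are linear functionals $\int w_z(t)\epsilon(t)\,dt$ of $\epsilon$ with deterministic weights $w_z$, so they are centered Gaussian, i.e. first chaos. The $\{E_2^{(j)}\}$ are integrals of $\epsilon(t+a)\epsilon(t+b)-\noisecov(a-b)$ against deterministic weights, which is the Wick product $:\!\epsilon(t+a)\epsilon(t+b)\!:$, so they lie in the second chaos. $E_3$ is an integral of a triple product $\epsilon(t)\epsilon(t-z_1)\epsilon(t-z_2)$. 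Expanding by Wick's formula, this equals its third-chaos part plus first-chaos terms of the form $\noisecov(\cdot)\int\epsilon$; it has mean zero since odd moments vanish. So $E_3$ lies in the sum of chaoses of order at most $3$.

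For each $z,w$, the increment $X(z)-X(w)$ lies in the same finite sum of chaoses $\bigoplus_{j\le k}\mathcal H_j$, because these spaces are linear. The key step is the hypercontractivity (Nelson) inequality: for $G\in\bigoplus_{j\le k}\mathcal H_j$ and $q\ge2$,
\[
\bigl(\E|G|^q\bigr)^{1/q}\le (q-1)^{k/2}\bigl(\E|G|^2\bigr)^{1/2}.
\]
Applied with $G=(X(z)-X(w))/s(z,w)$, this gives moment growth $\|G\|_q\lesssim q^{k/2}$. By the standard equivalence between moment growth and Orlicz norms, $\E\exp\bigl(|G|/C_k\bigr)^{2/k}\le2$ for a constant $C_k$ depending only on $k$. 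This is \eqref{eq:sub-alpha exponential assumption} with $r=C_k s$. Rescaling $X$ by $1/C_k$ (this is the "up to a rescaling" in the statement) turns the metric into exactly the canonical metric, since $s$ scales the same way.

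The step I expect to need the most care is justifying that the integrals over $[-\dom,\dom]^d$ genuinely define elements of the chaos spaces of the Gaussian Hilbert space generated by $\epsilon$, with finite second moments. Continuity of $\noisecov$ and boundedness of $f$, both from Assumption \ref{assump:standing}, give $\E|X(z)|^2<\infty$. I would handle this by approximating each integral with Riemann sums. Each sum is a polynomial of degree $\le k$ in finitely many Gaussians, so it lies in $\bigoplus_{j\le k}\mathcal H_j$. These sums converge in $L^2$, and $\bigoplus_{j\le k}\mathcal H_j$ is closed in $L^2$, so the limit stays in the space. If $s(z,w)=0$, then $X(z)=X(w)$ almost surely and the condition is vacuous under the convention $0/0=0$.

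A finer bound on $s$ itself is not needed here; that is deferred to Lemma \ref{lemma:bound on canonical distance}.
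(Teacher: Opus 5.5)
Your proof is correct and rests on the same idea as the paper's: each field lies in a Wiener chaos of order at most $k$, and chaos variables have $\psi_{2/k}$-type tails at the scale of their $L^2$ norm. You implement it differently, though. The paper realizes $\epsilon(t)=I_1(f_t)$ through the spectral representation $\epsilon(t)=\int e^{i\xi\cdot t}\,dW_\mu(\xi)$ and uses the product formula for Wiener--It\^o integrals to write $E_3(z)=I_3(H_z)+I_1(G_z)$ with explicit kernels. It then splits each increment into its orthogonal third- and first-chaos parts, each of $L^2$ norm at most $s(z,w)$, and applies the tail bound \cite[Theorem 6.6]{major2006multiple} to each part separately. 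A union bound then gives $\P(|E_3(z)-E_3(w)|>u\,s(z,w))\le Ce^{-cu^{2/3}}$. You instead establish chaos membership intrinsically, via Wick products, Riemann sums and closedness of $\bigoplus_{j\le k}\mathcal H_j$ in $L^2$. You then apply hypercontractivity to the whole finite sum of chaoses at once. This way the mixed $\mathcal H_1\oplus\mathcal H_3$ structure of $E_3$ needs no orthogonal splitting, and the moment growth converts directly into the exponential-moment form of Definition~\ref{def:Gaussian_chaos} rather than into a tail bound. Your route avoids the complex spectral-measure bookkeeping. It also justifies that the $dt$-integrals define $L^2$ elements of the chaos, a point the paper passes over when it moves the $dt$-integral inside the stochastic integrals. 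The paper's route, in exchange, gives explicit chaos kernels and leans on a ready-made tail estimate.

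One remark should be corrected. What you prove is \eqref{eq:sub-alpha exponential assumption} with $r=C_k s$. Equivalently, $X/C_k$ satisfies it with respect to the canonical metric $s$ of the \emph{original} field $X$, and that is the right reading of ``up to a rescaling.'' Your justification that rescaling makes the metric exactly the field's own canonical metric ``since $s$ scales the same way'' is backwards. If the canonical metric rescales along with the field, the ratio $(X(z)-X(w))/s(z,w)$ is unchanged. In fact no normalization can make the condition hold with a field's own canonical metric. For a first-chaos increment, $G=(X(z)-X(w))/s(z,w)$ is standard normal, and $\E\exp(|G|^{2})=\infty$. This is harmless downstream, because the chaining bounds of \cite{viens2007supremum} only need the metric up to constant factors.
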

\begin{proof}
Let $\mu$ be the spectral measure of the real-valued stationary Gaussian field
$\epsilon$. By the spectral representation theorem
\cite[Proposition 14.19]{kallenberg1997foundations}, there exists a complex
Gaussian random spectral measure $W_\mu$, satisfying the Hermitian symmetry
condition
\[
W_\mu(-d\xi)=\overline{W_\mu(d\xi)},
\]
such that
\[
    \epsilon(t)=\int_{\R^d} e^{i\xi\cdot t}\,dW_\mu(\xi).
\]
For \(t\in\R^d\), define
\[
    f_t(\xi):=e^{i\xi\cdot t}.
\]
Then
\[
    \epsilon(t)=I_1(f_t),
\]
where \(I_1\) denotes the first-order Wiener--It\^o integral with respect to
\(W_\mu\). For this complex spectral representation, contractions are taken
with respect to the covariance pairing
\[
    \langle f,g\rangle_\mu
    :=
    \E[I_1(f)I_1(g)]
    =
    \int_{\R^d} f(\xi)g(-\xi)\,d\mu(\xi).
\]
In particular,
\[
\begin{aligned}
    \langle f_t,f_{t-z}\rangle_\mu
    &=
    \int_{\R^d} e^{i\xi\cdot t}e^{-i\xi\cdot(t-z)}\,d\mu(\xi)  \\
    &=
    \int_{\R^d} e^{i\xi\cdot z}\,d\mu(\xi).
\end{aligned}
\]
Since \(\epsilon\) is real-valued, the measure \(\mu\) is symmetric, and hence
the last display is real-valued and unchanged if \(z\) is replaced by \(-z\).

For \(F:(\R^d)^k\to\mathbb{C}\), write
\[
    I_k(F)
    :=
    \int_{(\R^d)^k}
    F(\xi_1,\ldots,\xi_k)\,
    dW_\mu(\xi_1)\cdots dW_\mu(\xi_k)
\]
for the \(k\)-th order Wiener--It\^o integral. Since
\(I_k(F)=I_k(\widetilde F)\), where
\[
    \widetilde F(\xi_1,\ldots,\xi_k)
    =
    \frac{1}{k!}
    \sum_{\pi\in\Pi_k}
    F(\xi_{\pi(1)},\ldots,\xi_{\pi(k)})
\]
is the symmetrization of \(F\), we work with symmetrized kernels below.

For functions \(f_1,\ldots,f_k\), we write
\[
f_1\widetilde\otimes \cdots \widetilde\otimes f_k
\]
for the symmetrized tensor product
\[
\left(f_1\widetilde\otimes \cdots \widetilde\otimes f_k\right)
(\xi_1,\ldots,\xi_k)
:=
\frac{1}{k!}
\sum_{\pi\in\Pi_k}
\prod_{j=1}^k f_j(\xi_{\pi(j)}),
\]
where \(\Pi_k\) denotes the set of permutations of \(\{1,\ldots,k\}\).

By the product formula for Wiener--It\^o integrals of symmetric functions
\cite[Proposition 1.1.2]{nualart2006malliavin}, and using the notation
above for symmetrized tensor products, we have
\[
\begin{aligned}
    I_1(f)I_1(g)I_1(h)
    &=
    I_3(f\widetilde\otimes g\widetilde\otimes h)  \\
    &\quad
    + I_1\left(
        \langle f,g\rangle_\mu h
        +
        \langle f,h\rangle_\mu g
        +
        \langle g,h\rangle_\mu f
    \right).
\end{aligned}
\]
Applying this identity with
\[
    f=f_t,\qquad g=f_{t-z_1},\qquad h=f_{t-z_2},
\]
we obtain
\[
    E_3(z)
    :=
    \frac{1}{\sqrt N}
    \int_{[-R,R]^d}
    \epsilon(t)\epsilon(t-z_1)\epsilon(t-z_2)\,dt
    =
    I_3(H_z)+I_1(G_z),
\]
where
\[
\begin{aligned}
    H_z(\xi_1,\xi_2,\xi_3)
    &:=
    \frac{1}{\sqrt N}
    \int_{[-R,R]^d}
    \left(
    f_t\widetilde\otimes f_{t-z_1}
    \widetilde\otimes f_{t-z_2}
    \right)(\xi_1,\xi_2,\xi_3)\,dt
\end{aligned}
\]
and
\[
\begin{aligned}
    G_z(\xi)
    &:=
    \frac{1}{\sqrt N}
    \int_{[-R,R]^d}
    \bigg[
    \left(\int_{\R^d} e^{ix\cdot z_1}\,d\mu(x)\right)
    e^{i\xi\cdot(t-z_2)}
    \\
    &\qquad\qquad\qquad
    +
    \left(\int_{\R^d} e^{ix\cdot z_2}\,d\mu(x)\right)
    e^{i\xi\cdot(t-z_1)}
    \\
    &\qquad\qquad\qquad
    +
    \left(\int_{\R^d} e^{ix\cdot(z_2-z_1)}\,d\mu(x)\right)
    e^{i\xi\cdot t}
    \bigg]\,dt .
\end{aligned}
\]
Equivalently, the three scalar coefficients in \(G_z\) are the contractions
\[
    \langle f_t,f_{t-z_1}\rangle_\mu,\qquad
    \langle f_t,f_{t-z_2}\rangle_\mu,\qquad
    \langle f_{t-z_1},f_{t-z_2}\rangle_\mu .
\]
Indeed,
\[
\begin{aligned}
    \langle f_t,f_{t-z_1}\rangle_\mu
    &=
    \int_{\R^d} e^{ix\cdot z_1}\,d\mu(x),\\
    \langle f_t,f_{t-z_2}\rangle_\mu
    &=
    \int_{\R^d} e^{ix\cdot z_2}\,d\mu(x),\\
    \langle f_{t-z_1},f_{t-z_2}\rangle_\mu
    &=
    \int_{\R^d} e^{ix\cdot(z_2-z_1)}\,d\mu(x).
\end{aligned}
\]
Since \(\mu\) is symmetric, the last expression is equal to
\[
    \int_{\R^d} e^{ix\cdot(z_1-z_2)}\,d\mu(x).
\]

Therefore,
\[
    E_3(z)-E_3(w)
    =
    I_3(H_z-H_w)+I_1(G_z-G_w)
    =:\mathcal I_3+\mathcal I_1 .
\]
By orthogonality of Wiener--It\^o integrals of different orders
\cite[page 9, eqn. (iii)]{nualart2006malliavin},
\[
\begin{aligned}
    s(z,w)^2
    &:=
    \E |E_3(z)-E_3(w)|^2  
    =
    \E |\mathcal I_3|^2+\E |\mathcal I_1|^2 .
\end{aligned}
\]
In particular,
\[
    \sqrt{\E|\mathcal I_3|^2}\le s(z,w),
    \qquad
    \sqrt{\E|\mathcal I_1|^2}\le s(z,w).
\]
Thus, by tail bounds for multiple Wiener--It\^o integrals
\cite[Theorem 6.6]{major2006multiple}, for every \(u>0\),
\[
\begin{aligned}
    \P\left(|E_3(z)-E_3(w)|>u\,s(z,w)\right)
    &\le
    \P\left(|\mathcal I_3|>\frac{u\,s(z,w)}{2}\right)
    +
    \P\left(|\mathcal I_1|>\frac{u\,s(z,w)}{2}\right)  \\
    &\le
    C_3\exp\left(-c_3 u^{2/3}\right)
    +
    C_1\exp\left(-c_1 u^{2}\right) \\
    &\le
    C\exp\left(-c u^{2/3}\right).
\end{aligned}
\]
Equivalently,
\[
    \P\left(
    \frac{|E_3(z)-E_3(w)|}{s(z,w)}>u
    \right)
    \le
    C\exp\left(-c u^{2/3}\right).
\]
After adjusting constants, or equivalently after rescaling the field, this gives
the desired sub-third Gaussian chaos increment bound.

The proof for the first- and second-order fields is analogous. Terms that are
linear in \(\epsilon\) are first-order Wiener--It\^o integrals. Terms that are
quadratic in \(\epsilon\) are handled by the product formula
\[
    I_1(f)I_1(g)
    =
    I_2(f\widetilde\otimes g)
    +
    \langle f,g\rangle_\mu .
\]
Thus the centered quadratic terms lie in the second Wiener chaos, while the
uncentered quadratic terms lie in the sum of the zeroth and second chaoses.
Consequently, up to rescaling, the random fields
\[
    \{\xone,\xtwo,\xthree\},\qquad
    \{\xxone,\xxtwo,\xxthree\},\qquad
    \{\xxx\}
\]
are sub-\(k\)th Gaussian chaos fields with \(k=1,2,3\), respectively.
\end{proof}

\begin{lem}\label{lemma:bound on canonical distance}
Under Assumption \ref{assump:standing}, the three types of errors in Definition \ref{def:error_decomp} satisfy 
\begin{enumerate}
    \item 
    \begin{align*}
        \E|\xxx(z)-\xxx(w)|^2 &\leq C(\boundcov^2\|\nabla\noisecov\|_1+\boundcov\Lipcov\|\noisecov\|_1)|z-w|_2.
    \end{align*}
    \item 
    \begin{align*}
        \E |\xxone(z)-\xxone(w)|^2 &\leq 4\sqrt{2}\Lipcov\|f\|^2_2 \|\noisecov\|_1 |z-w|_2 \, ,\\
    \E |\xxtwo(z)-\xxtwo(w)|^2 &\leq 8 \sqrt{2}\Lipcov  \|f\|^2_2\|\noisecov\|_1 |z-w|_2 \, ,\\
    \E |\xxthree(z)-\xxthree(w)|^2 &\leq 8 \sqrt{2}\Lipcov  \|f\|^2_2\|\noisecov\|_1 |z-w|_2 \, .
    \end{align*}

    \item 
    \begin{align*}
        \E |\xone(z)-\xone(w)|^2 &\leq 2\gamma^{-1}\Lipf^2\|\noisecov\|_1\|f\|^2_\infty |z-w|_2^2 \, ,\\
    \E|\xtwo(z)-\xtwo(w)|^2 &\leq 4\gamma^{-1} L_f^2 \|\noisecov\|_1\|f\|^2_\infty|z-w|^2_2 \, ,\\
    \E|\xthree(z)-\xthree(w)|^2 &\leq 4\gamma^{-1} L_f^2 \|\noisecov\|_1\|f\|^2_\infty |z-w|^2_2 \, , 
    \end{align*}
\end{enumerate}
where the constants are defined in Assumption \ref{assump:standing}.
\end{lem}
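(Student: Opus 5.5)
We will prove each bound by expanding the second moment of the increment exactly with Isserlis' (Wick's) formula. Then we telescope the change $z\mapsto w$ one coordinate at a time. The key structural fact is that every surviving Wick term contains at least one covariance factor depending on $t-s$. Integrating that factor over $s$ produces $\|\noisecov\|_1$ (or $\|\nabla\noisecov\|_1$), while the remaining $t$-integral is controlled either by $\|F\|_2^2=\sam\|f\|_2^2$ (disjoint supports, Assumption \ref{assump:standing}(iv)) or by the window volume $(2\dom)^d=\sam/\gamma$ (Assumption \ref{assump:standing}(v)). The prefactor $1/\sam$ then cancels the factor $\sam$.

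\textbf{Type 1 (linear) fields.} For $g\in L^2$ we have $\E|\int g\epsilon|^2=\iint g(t)g(s)\noisecov(t-s)\,dt\,ds\le \|\noisecov\|_1\|g\|_2^2$ by Young's inequality. For $\xone(z)-\xone(w)$ we take $g=\sam^{-1/2}\sum_i[\varphi_z-\varphi_w](t-x_i)$ with $\varphi_z(t)=f(t+z_1)f(t-z_2)$. The Lipschitz property gives $|\varphi_z-\varphi_w|\le \|f\|_\infty\Lipf(|z_1-w_1|_2+|z_2-w_2|_2)\le \sqrt2\|f\|_\infty\Lipf|z-w|_2$ pointwise. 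The separated copies have disjoint supports inside the window, so $\|g\|_2^2\le \sam^{-1}(2\dom)^d\sup|g|^2\le 2\gamma^{-1}\Lipf^2\|f\|_\infty^2|z-w|_2^2$. The fields $\xtwo,\xthree$ are handled identically; their shifts $z_1+z_2$ produce the extra factor $2$.

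\textbf{Type 2 (centered quadratic) fields.} For $\xxone$ we split the increment as
\[
\epsilon(t+z_1)\bigl[\epsilon(t-z_2)-\epsilon(t-w_2)\bigr]+\bigl[\epsilon(t+z_1)-\epsilon(t+w_1)\bigr]\epsilon(t-w_2),
\]
recentering each piece by its mean. We then use $\cov(\epsilon_a\epsilon_b,\epsilon_c\epsilon_d)=\noisecov(a-c)\noisecov(b-d)+\noisecov(a-d)\noisecov(b-c)$. Each term is $\sam^{-1}\iint F(t)F(s)\,\noisecov(t-s+\cdot)\,D(t-s+\cdot)\,dt\,ds$, where $D$ is a difference of covariances. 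We bound $|D|\le \Lipcov|z-w|_2$ and integrate the remaining $\noisecov$ factor via Schur's test, $\iint|F(t)||F(s)||\noisecov(t-s+c)|\le \|\noisecov\|_1\|F\|_2^2=\sam\|\noisecov\|_1\|f\|_2^2$. This yields the stated bound with $|z_1-w_1|+|z_2-w_2|\le\sqrt2|z-w|_2$ accounting for the $\sqrt2$.

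\textbf{Type 3 field (main obstacle).} We telescope $\epsilon(t)\epsilon(t-z_1)\epsilon(t-z_2)$ into two pieces, each with one difference factor $\epsilon(t-z_j)-\epsilon(t-w_j)$. The second moment of each piece is a sum over the $15$ Wick pairings of six Gaussian variables at times $t,s$. Pairings within the same time produce constant covariances $\noisecov(z_1),\noisecov(z_2),\ldots$; these are the first-chaos contributions. Every pairing, however, includes at least one cross pair in $t-s$. If the difference factor lies in a cross pair, we integrate it over $s$ to get $\|\nabla\noisecov\|_1|z-w|_2$ and bound the other factors by $\boundcov^2$. If it lies in a same-time pair, we bound it by $\Lipcov|z-w|_2$, the remaining cross factor gives $\|\noisecov\|_1$, and the last factor is at most $\boundcov$. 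Finally $\int_{[-\dom,\dom]^d}dt=(2\dom)^d=\sam/\gamma$ cancels the $1/\sam$, with $\gamma$ absorbed into $C$.

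The delicate point is checking that no pairing lacks a $t$-$s$ factor; such a term would give $(2\dom)^{2d}/\sam\to\infty$. Because the increment is a genuine difference, the odd-order structure forces every pairing to connect the $t$-block with the $s$-block: an odd number (three) of variables cannot be paired entirely within one block. I expect this bookkeeping over pairings to be the main work.
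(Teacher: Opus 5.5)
Your proposal follows essentially the same route as the paper: an Isserlis/Wick expansion of the increment's second moment, Lipschitz and $\|\nabla\noisecov\|_1$ control of covariance differences, and cancellation of the $1/\sam$ prefactor via $\|F\|_2^2=\sam\|f\|_2^2$ (Type 2) or the window volume $(2\dom)^d=\sam/\gamma$ (Types 1 and 3). The one procedural difference is that you telescope the random products before applying Wick, whereas the paper differences the Wick terms afterward; this does not affect the rates, but if you bound the pieces separately it inflates the explicit constants (for example $8\sqrt2$ instead of $4\sqrt2$ for $\xxone$), so to recover the stated numbers you should expand the unsplit increment as the paper does.
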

\begin{proof}
\textbf{Bound for $\xxx$:}
Notice that  
\begin{align*}
    \E |\xxx(z)-\xxx(w)|^2 &=\frac{1}{\sam} \E \left|\int_{[-\dom,\dom]^d}\underbrace{\epsilon(t)\epsilon(t-z_1)\epsilon(t-z_2)-\epsilon(t)\epsilon(t-w_1)\epsilon(t-w_2)}_{=:u(t)}dt\right|^2\\
    &=\frac{1}{\sam} \E \int_{-[\dom,\dom]^d} u(t)dt \int_{[-\dom,\dom]^d}u(s)ds\\
    & =\frac{1}{\sam} \int_{[-\dom,\dom]^d}  \int_{[-\dom,\dom]^d}  \E[u(t)u(s)]dtds\\
    & =\frac{1}{\sam} \int_{[-\dom,\dom]^d}  \int_{[-\dom,\dom]^d}  \cov_u(s-t)dtds\\
    & =\frac{1}{N}\int_{\R^d} \int_{[-\dom,\dom]^d\cap ([-\dom,\dom]^d-h)}   \cov_u(h) dvdh\\ 
    & \leq \frac{(2R)^d}{N} \int_{[-2\dom,2\dom]^d} |\cov_u(h)|dh= \frac{1}{\gamma}\int_{[-2\dom,2\dom]^d} |\cov_u(h)|dh.
\end{align*}
It suffices to bound $\int_{[-2\dom,2\dom]^d} |\cov_u(h)|dh$ in terms of $|z-w|_2$. 

Writing out explicitly, 
\begin{align}
    \cov_u(h) &= \E [u(0)u(h)] \nonumber\\
    &= \E[\epsilon(0)\epsilon(-z_1)\epsilon(-z_2)-\epsilon(0)\epsilon(-w_1)\epsilon(-w_2)][\epsilon(h)\epsilon(h-z_1)\epsilon(h-z_2)-\epsilon(h)\epsilon(h-w_1)\epsilon(h-w_2)]\nonumber\\
    &=\E [\epsilon(0)\epsilon(h)\epsilon(-z_1)\epsilon(-z_2)\epsilon(h-z_1)\epsilon(h-z_2)] 
    +\E[\epsilon(0)\epsilon(h)\epsilon(-w_1)\epsilon(-w_2)\epsilon(h-w_1)\epsilon(h-w_2)]\nonumber\\
    &\quad -\E[\epsilon(0)\epsilon(h)\epsilon(-w_1)\epsilon(-w_2)\epsilon(h-z_1)\epsilon(h-z_2)]-\E[\epsilon(0)\epsilon(h)\epsilon(-z_1)\epsilon(-z_2)\epsilon(h-w_1)\epsilon(h-w_2)]\nonumber\\
    &=: \E[P_1P_2 Q_1 Q_2 Z_1Z_2]  - \E[P_1P_2Q_1Q_2V_1V_2]+ \E[P_1P_2 W_1W_2V_1V_2]- \E[P_1P_2W_1W_2Z_1Z_2], \label{eq:cov of u expansion}
\end{align}
where 
\begin{align*}
    P_1=\epsilon(0),\quad Q_1=\epsilon(-z_1),\quad Z_1&=\epsilon(h-z_1),\quad W_1=\epsilon(-w_1),\quad V_1=\epsilon(h-w_1),\\
    P_2=\epsilon(h),\quad Q_2=\epsilon(-z_2),\quad Z_2&=\epsilon(h-z_2),\quad W_2=\epsilon(-w_2),\quad V_2=\epsilon(h-w_2).
\end{align*}
Isserlis's theorem on the product of six Gaussian random variables states that each term in \eqref{eq:cov of u expansion} can be written as a sum of 15 triple products of expectations that involve only a pair of Gaussians:
\begin{align*}
    \E[P_1P_2 Q_1 Q_2 Z_1Z_2] = &\E[P_1P_2]\E[Q_1Q_2]\E[Z_1Z_2] + \E[P_1P_2]\E[Q_1Z_1]\E[Q_2Z_2] +\E[P_1P_2]\E[Q_1Z_2]\E[Q_2Z_1]+\ldots\\ 
    \E[P_1P_2Q_1Q_2V_1V_2]= & \E[P_1P_2]\E[Q_1Q_2]\E[V_1V_2]+\E[P_1P_2]\E[Q_1V_1]\E[Q_2V_2]+\E[P_1P_2]\E[Q_1V_2]\E[Q_2V_1] +\ldots
\end{align*}
We can control the differences one-by-one as follows 
\begin{align*}
   &\E[P_1P_2]\E[Q_1Q_2]\E[Z_1Z_2] - \E[P_1P_2]\E[Q_1Q_2]\E[V_1V_2] \\
   &= \noisecov(h)\noisecov(z_1-z_2) \big[\noisecov(z_1-z_2)-\noisecov(w_1-w_2)\big]\\
   &\leq \noisecov(h)\boundcov \Lipcov |(z_1-z_2)-(w_1-w_2)|_2\\
   &\leq \noisecov(h)\boundcov \Lipcov (|z_1-w_1|_2+|z_2-w_2|_2) \leq  \sqrt{2}\noisecov(h)\boundcov \Lipcov  |z-w|_2,
\end{align*}
and so 
\begin{align*}
    \int_{[-2\dom,2\dom]^d} \Big|\E[P_1P_2]\E[Q_1Q_2]\E[Z_1Z_2] - \E[P_1P_2]\E[Q_1Q_2]\E[V_1V_2]\Big| dh \leq \|\noisecov\|_1\boundcov \Lipcov |z_1-z_2|_2.
\end{align*}
Half of the terms can be bounded similarly in this way. 
The other half involve differences like 
\begin{align*}
    &\int_{[-2\dom,2\dom]^d} \big|\noisecov(z_1)\noisecov(z_1)\noisecov(h) - \noisecov(z_1)\noisecov(w_1)\noisecov(h-w_2+z_2)\big|dh\\
    &\leq \noisecov(z_1) \int_{[-2\dom,2\dom]^d} \big|\noisecov(z_1)\noisecov(h) -\noisecov(z_1)\noisecov(h-w_2+z_2)\big|  + \big|\noisecov(z_1)\noisecov(h-w_2+z_2)-\noisecov(w_1)\noisecov(h-w_2+z_2)\big|dh\\
    &\leq \boundcov^2 \int_{[-2\dom,2\dom]^d} |\noisecov(h)-\noisecov(h-w_2+z_2)|dh + \boundcov\Lipcov|z_1-w_1|_2\int_{[-2\dom,2\dom]^d} |\noisecov(h-w_2+z_2)| dh \\
    & \leq \boundcov^2 \int_{[-2\dom,2\dom]^d} |\noisecov(h)-\noisecov(h-w_2+z_2)|dh + \|\noisecov\|_1 \boundcov\Lipcov |z_1-w_1|_2.
\end{align*}
The first term can be bounded by noticing that 
\begin{align*}
\noisecov(h)-\noisecov(h-w_2+z_2) = \int_{0}^1 \nabla\noisecov(h+t(-w_2+z_2))^T (-w_2+z_2) dt
\end{align*}
so that 
\begin{align*}
    \int_{[-2\dom,2\dom]^d} |\noisecov(h)-\noisecov(h-w_2+z_2)|dh &\leq |z_2-w_2|_2\int_0^1 \int_{[-2\dom,2\dom]^d} |\nabla\noisecov(h+t(-w_2+z_2))|_2 dh dt  \\
    &\leq \|\nabla\noisecov\|_1|z_2-w_2|_2.
\end{align*}
Therefore we eventually have errors of the form 
\begin{align*}
    \int_{[-2\dom,2\dom]^d} |\cov_u(h)|dh \leq C (\boundcov^2 \|\nabla\noisecov\|_1 + \boundcov\Lipcov\|\noisecov\|_1)|z-w|_2,
\end{align*}
where $C$ is a universal constant, and hence 
\begin{align*}
   \E |\xxx(z)-\xxx(w)|^2 \leq C (\boundcov^2 \|\nabla\noisecov\|_1 + \boundcov\Lipcov\|\noisecov\|_1)|z-w|_2.
\end{align*}   

\textbf{Bound for $\xxone$, $\xxtwo$, $\xxthree$:}
Notice from \eqref{eq:type 2 error} that for each $i$, the canonical distance of the type II error processes all take the form
\begin{align}\label{eq:E2 general}
    \E |E_2^{(i)}(z)-E_2^{(i)}(w)|^2 = \frac{1}{\sam}\E \left|\int_{[-\dom,\dom]^d}F(t)v^{(i)}(t)\right|^2,
\end{align}
where 
\begin{align*}
    v^{(1)}(t)& = \epsilon(t+z_1)\epsilon(t-z_2)-\noisecov(z_1+z_2)-\epsilon(t+w_1)\epsilon(t-w_2)+\noisecov(w_1+w_2)\\
    v^{(2)}(t)& = \epsilon(t+z_2)\epsilon(t+z_1+z_2)-\noisecov(z_1)-\epsilon(t+w_2)\epsilon(t+w_1+w_2)+\noisecov(w_1)\\
    v^{(3)}(t)& = \epsilon(t-z_1)\epsilon(t-z_1-z_2)-\noisecov(z_2)-\epsilon(t-w_1)\epsilon(t-w_1-w_2)+\noisecov(w_2) .
\end{align*}
We have 
\begin{align*}
    \eqref{eq:E2 general}
    &= \frac{1}{\sam}\E\int_{[-\dom,\dom]^d} \int_{[-\dom,\dom]^d} v^{(i)}(t)v^{(i)}(s) F(t)F(s) dtds\\
    & = \frac{1}{\sam}\int_{[-\dom,\dom]^d} \int_{[-\dom,\dom]^d} F(t)F(s) \cov_{v^{(i)}}(t-s)dtds\\
    & = \frac{1}{\sam}\int_{\R^d} \left[\int_{[-\dom,\dom]^d\cap ([-\dom,\dom]^d-h)} F(s+h)F(s)ds\right]\cov_{v^{(i)}}(h)  dh,
\end{align*}
where $[-\dom,\dom]^d-h=\{s:s+h\in [-\dom,\dom]^d\}$. 
Now by Cauchy-Schwarz inequality,
\begin{align*}
    \int_{[-\dom,\dom]^d\cap ([-\dom,\dom]^d-h)} F(s+h)F(s)ds &\leq \|F\|_2\|F(\cdot-h)\|_2 \\
    &= \|F\|_2^2 = \int_{\R^d} F(t)^2 dt = \int_{\R^d} \sum_{i=1}^\sam  f(t-x_i)^2 dt  = \sam\|f\|^2_2.
\end{align*}
Therefore 
\begin{align*}
    \E |E_2^{(i)}(z)-E_2^{(i)}(w)|^2  \leq \|f\|^2_2 \int_{\R^d} |\cov_{v^{(i)}}(h)|dh.
\end{align*}
Now it suffices to bound the $L^1$ norm of each $\cov_{v^{(i)}}$. 
We first compute 
\begin{small}
\begin{align*}
    &\cov_{v^{(1)}}(h) \\
    &= \E v^{(1)}(0)v^{(1)}(h)\\
    &= \E [\epsilon(z_1)\epsilon(-z_2)-\noisecov(z_1+z_2)-\epsilon(w_1)\epsilon(-w_2)+\noisecov(w_1+w_2)]\\
    &\quad \cdot[\epsilon(h+z_1)\epsilon(h-z_2)-\noisecov(z_1+z_2)-\epsilon(h+w_1)\epsilon(h-w_2)+\noisecov(w_1+w_2)]\\
    & = \E[\epsilon(z_1)\epsilon(-z_2)\epsilon(h+z_1)\epsilon(h-z_2)]-\noisecov(z_1+z_2)\E[\epsilon(z_1)\epsilon(-z_2)]\\
    &\quad -\E[\epsilon(z_1)\epsilon(-z_2)\epsilon(h+w_1)\epsilon(h-w_2)]+\noisecov(w_1+w_2)\E[\epsilon(z_1)\epsilon(-z_2)]\\
    &\quad -\noisecov(z_1+z_2)\E[\epsilon(h+z_1)\epsilon(h-z_2)]+\noisecov(z_1+z_2)^2\\
    &\quad +\noisecov(z_1+z_2)\E[\epsilon(h+w_1)\epsilon(h-w_2)]-\noisecov(z_1+z_2)\noisecov(w_1+w_2)\\
    &\quad -\E[\epsilon(w_1)\epsilon(-w_2)\epsilon(h+z_1)\epsilon(h-z_2)]+\noisecov(z_1+z_2)\E[\epsilon(w_1)\epsilon(-w_2)]\\
    &\quad +\E[\epsilon(w_1)\epsilon(-w_2)\epsilon(h+w_1)\epsilon(h-w_2)]-\noisecov(w_1+w_2)\E [\epsilon(w_1)\epsilon(-w_2)]\\
    &\quad +\noisecov(w_1+w_2)\E[\epsilon(h+z_1)\epsilon(h-z_2)]-\noisecov(w_1+w_2)\noisecov(z_1+z_2)\\
    &\quad -\noisecov(w_1+w_2)\E[\epsilon(h+w_1)\epsilon(h-w_2)]+\noisecov(w_1+w_2)^2\\
    &=\E[\epsilon(z_1)\epsilon(-z_2)\epsilon(h+z_1)\epsilon(h-z_2)]- \noisecov(z_1+z_2)^2 -\E[\epsilon(z_1)\epsilon(-z_2)\epsilon(h+w_1)\epsilon(h-w_2)] +\noisecov(w_1+w_2)\noisecov(z_1+z_2)\\
    &\quad -\cancel{\noisecov(z_1+z_2)^2} +\cancel{\noisecov(z_1+z_2)^2} +\cancel{\noisecov(z_1+z_2)\noisecov(w_1+w_2)}-\cancel{\noisecov(z_1+z_2)\noisecov(w_1+w_2)}\\
    &\quad -\E[\epsilon(w_1)\epsilon(-w_2)\epsilon(h+z_1)\epsilon(h-z_2)]+\noisecov(z_1+z_2)\noisecov(w_1+w_2)+\E[\epsilon(w_1)\epsilon(-w_2)\epsilon(h+w_1)\epsilon(h-w_2)] - \noisecov(w_1+w_2)^2 \\
    &\quad +\cancel{\noisecov(w_1+w_2)\noisecov(z_1+z_2)}-\cancel{\noisecov(w_1+w_2)\noisecov(z_1+z_2)}-\cancel{\noisecov(w_1+w_2)^2}+\cancel{\noisecov(w_1+w_2)^2}\\
    &=\E[\epsilon(z_1)\epsilon(-z_2)\epsilon(h+z_1)\epsilon(h-z_2)]- \noisecov(z_1+z_2)^2-\E[\epsilon(z_1)\epsilon(-z_2)\epsilon(h+w_1)\epsilon(h-w_2)] +\noisecov(w_1+w_2)\noisecov(z_1+z_2)\\
    &\quad -\E[\epsilon(w_1)\epsilon(-w_2)\epsilon(h+z_1)\epsilon(h-z_2)]+\noisecov(z_1+z_2)\noisecov(w_1+w_2)+\E[\epsilon(w_1)\epsilon(-w_2)\epsilon(h+w_1)\epsilon(h-w_2)] - \noisecov(w_1+w_2)^2.
\end{align*}
\end{small}
By Isserlis's theorem, we can further expand 
\begin{small}
\begin{align*}
    &\cov_v(h)\\
    & = \E[\epsilon(z_1)\epsilon(-z_2)]\E [\epsilon(h+z_1)\epsilon(h-z_2)] + \E[\epsilon(z_1)\epsilon(h+z_1)]\E[\epsilon(-z_2)\epsilon(h-z_2)]+ \E[\epsilon(z_1)\epsilon(h-z_2)]\E[\epsilon(-z_2)\epsilon(h+z_1)]\\
    &\quad -\Big(\E[\epsilon(z_1)\epsilon(-z_2)]\E[\epsilon(h+w_1)\epsilon(h-w_2)]+\E[\epsilon(z_1)\epsilon(h+w_1)]\E[\epsilon(-z_2)\epsilon(h-w_2)]+\E[\epsilon(z_1)\epsilon(h-w_2)]\E[\epsilon(-z_2)\epsilon(h+w_1)]\Big)\\
    &\quad -\Big(\E[\epsilon(w_1)\epsilon(-w_2)]\E[\epsilon(h+z_1)\epsilon(h-z_2)]+\E[\epsilon(w_1)\epsilon(h+z_1)]\E[\epsilon(-w_2)\epsilon(h-z_2)]+\E[\epsilon(w_1)\epsilon(h-z_2)]\E[\epsilon(-w_2)\epsilon(h+z_1)]\Big)\\
    &\quad +\E[\epsilon(w_1)\epsilon(-w_2)]\E[\epsilon(h+w_1)\epsilon(h-w_2)]+\E[\epsilon(w_1)\epsilon(h+w_1)]\E[\epsilon(-w_2)\epsilon(h-w_2)]+\E[\epsilon(w_1)\epsilon(h-w_2)]\E[\epsilon(-w_2)\epsilon(h+w_1)]\\
    &\quad -\noisecov(z_1+z_2)^2+2\noisecov(w_1+w_2)\noisecov(z_1+z_2)-\noisecov(w_1+w_2)^2\\
    &=\cancel{\noisecov(z_1+z_2)^2} +\noisecov(h)^2 +\noisecov(h-z_2-z_1)\noisecov(h+z_1+z_2)\\
    &\quad -\cancel{\noisecov(z_1+z_2)\noisecov(w_1+w_2)}-\noisecov(h+w_1-z_1)\noisecov(h-w_2+z_2)-\noisecov(h-w_2-z_1)\noisecov(h+w_1+z_2)\\
    &\quad -\cancel{\noisecov(w_1+w_2)\noisecov(z_1+z_2)}-\noisecov(h+z_1-w_1)\noisecov(h-z_2+w_2)-\noisecov(h-z_2-w_1)\noisecov(h+z_1+w_2)\\
    &\quad +\cancel{\noisecov(w_1+w_2)^2}+\noisecov(h)^2+\noisecov(h-w_2-w_1)\noisecov(h+w_1+w_2)\\
    &\quad -\cancel{\noisecov(z_1+z_2)^2}+\cancel{2\noisecov(w_1+w_2)\noisecov(z_1+z_2)}-\cancel{\noisecov(w_1+w_2)^2}\\
    &=\big[\noisecov(h)^2 - \noisecov(h+w_1-z_1)\noisecov(h-w_2+z_2)\big] + \big[\noisecov(h-z_2-z_1)\noisecov(h+z_1+z_2)-\noisecov(h-w_2-z_1)\noisecov(h+w_1+z_2)\big] \\
    &\quad +\big[\noisecov(h)^2-\noisecov(h+z_1-w_1)\noisecov(h-z_2+w_2)\big] + \big[\noisecov(h-w_2-w_1)\noisecov(h+w_1+w_2)-\noisecov(h-z_2-w_1)\noisecov(h+z_1+w_2)\big]\\
    &\leq \Lipcov\Big(|\noisecov(h)||z_1-w_1|+|\noisecov(h+w_1-z_1)||z_2-w_2|+|\noisecov(h-z_1-w_2)||z_1-w_1|+|\noisecov(h+z_1+z_2)||z_2-w_2|\\
    &\quad |\noisecov(h)||z_1-w_1|_2 + |\noisecov(h+z_1-w_1)||z_2-w_2|_2+ |\noisecov(h-z_2-w_1)||z_1-w_1|_2+ |\noisecov(h+w_1+w_2)||z_2-w_2|_2\Big).
\end{align*}
\end{small}
Hence 
\begin{small}
\begin{align*}
    \int_{\R^d} |\cov_{v^{(1)}}(h)|dh \leq 4\Lipcov\|\noisecov\|_1(|z_1-w_1|_2+|z_2-w_2|_2)
    \leq 4\sqrt{2}\Lipcov\|\noisecov\|_1 |z-w|_2,
\end{align*}
\end{small}
so that 
\begin{align*}
     \E |\xxone(z)-\xxone(w)|^2 \leq 4\sqrt{2}\Lipcov\|f\|^2_2 \|\noisecov\|_1 |z-w|_2.
\end{align*}

For $v^{(2)}$, we have 
\begin{small}
\begin{align*}
    &\cov_{v^{(2)}}(h)\\
    &=\E v^{(2)}(0)v^{(2)}(h)\\
    &=\E[\epsilon(z_1)\epsilon(z_1+z_2)-\noisecov(z_2)-\epsilon(w_1)\epsilon(w_1+w_2)+\noisecov(w_2)]\\
    &\quad \cdot [\epsilon(h+z_1)\epsilon(h+z_1+z_2)-\noisecov(z_2)-\epsilon(h+w_1)\epsilon(h+w_1+w_2)+\noisecov(w_2)]\\
    &=\E[\epsilon
    (z_1)\epsilon(z_1+z_2)\epsilon(h+z_1)\epsilon(h+z_1+z_2)]- \noisecov(z_2)\E[\epsilon(z_1)\epsilon(z_1+z_2)]\\
    &\quad -\E[\epsilon(z_1)\epsilon(z_1+z_2)\epsilon(h+w_1)\epsilon(h+w_1+w_2)]+\noisecov(w_2)\E[\epsilon(z_1)\epsilon(z_1+z_2)]\\
    &\quad-\noisecov(z_2)\E[\epsilon(h+z_1)\epsilon(h+z_1+z_2)]+\noisecov(z_2)^2\\
    &\quad +\noisecov(z_2)\E[\epsilon(h+w_1)\epsilon(h+w_1+w_2)]-\noisecov(z_2)\noisecov(w_2) \\
    &\quad -\E[\epsilon(w_1)\epsilon(w_1+w_2)\epsilon(h+z_1)\epsilon(h+z_1+z_2)]+\noisecov(z_2)\E[\epsilon(w_1)\epsilon(w_1+w_2)] \\
    & \quad +\E[\epsilon(w_1)\epsilon(w_1+w_2)\epsilon(h+w_1)\epsilon(h+w_1+w_2)] - \noisecov(w_2)\E[\epsilon(w_1)\epsilon(w_1+w_2)]\\
    & \quad +\noisecov(w_2)\E[\epsilon(h+z_1)\epsilon(h+z_1+z_2)]-\noisecov(w_2)\noisecov(z_2)\\
    & \quad -\noisecov(w_2)\E[\epsilon(h+w_1)\epsilon(h+w_1+w_2)]+\noisecov(w_2)^2\\
    & =\E[\epsilon
    (z_1)\epsilon(z_1+z_2)\epsilon(h+z_1)\epsilon(h+z_1+z_2)] - \noisecov(z_2)^2 -\E[\epsilon(z_1)\epsilon(z_1+z_2)\epsilon(h+w_1)\epsilon(h+w_1+w_2)]+\noisecov(w_2)\noisecov(z_2)\\
    & \quad -\cancel{\noisecov(z_2)^2}+\cancel{\noisecov(z_2)^2} + \cancel{\noisecov(z_2)\noisecov(w_2)}- \cancel{\noisecov(z_2)\noisecov(w_2)}\\
    &\quad -\E[\epsilon(w_1)\epsilon(w_1+w_2)\epsilon(h+z_1)\epsilon(h+z_1+z_2)]+\noisecov(z_2)\noisecov(w_2)+\E[\epsilon(w_1)\epsilon(w_1+w_2)\epsilon(h+w_1)\epsilon(h+w_1+w_2)]-\noisecov(w_2)^2
    \\&\quad +\cancel{\noisecov(w_2)\noisecov(z_2)}-\cancel{\noisecov(w_2)\noisecov(z_2)} -\cancel{\noisecov(w_2)^2}+\cancel{\noisecov(w_2)^2}\\
    & =\E[\epsilon(z_1)\epsilon(z_1+z_2)]\E{\epsilon(h+z_1)]\epsilon(h+z_1+z_2)}+ \E[\epsilon(z_1)\epsilon(h+z_1)]\E[\epsilon(z_1+z_2)\epsilon(h+z_1+z_2)]\\
    &\quad + \E[\epsilon(z_1)\epsilon(h+z_1+z_2)]\E[\epsilon(z_1+z_2)\epsilon(h+z_1)] - \noisecov(z_2)^2 \\
    &\quad -\E[\epsilon(z_1)\epsilon(z_1+z_2)]\E[\epsilon(h+w_1)\epsilon(h+w_1+w_2)] - \E[\epsilon(z_1)\epsilon(h+w_1)]\E[\epsilon(z_1+z_2)\epsilon(h+w_1+w_2)] \\
    &\quad -\E[\epsilon(z_1)\epsilon(h+w_1+w_2)]\E[\epsilon(z_1+z_2)\epsilon(h+w_1)] +\noisecov(w_2)\noisecov(z_2)\\
    &\quad -\E[\epsilon(w_1)\epsilon(w_1+w_2)]\E[\epsilon(h+z_1)\epsilon(h+z_1+z_2)] - \E[\epsilon(w_1)\epsilon(h+z_1)]\E[\epsilon(w_1+w_2)\epsilon
    (h+z_1+z_2)]\\
    &\quad - \E[\epsilon(w_1)\epsilon(h+z_1+z_2)]\E[\epsilon(w_1+w_2)\epsilon(h+z_1)]+\noisecov(z_2)\noisecov(w_2)\\
    & \quad +\E[\epsilon(w_1)\epsilon(w_1+w_2)]\E[\epsilon(h+w_1)\epsilon(h+w_1+w_2)] + \E[\epsilon(w_1)\epsilon(h+w_1)]\E[\epsilon(w_1+w_2)\epsilon(h+w_1+w_2)] \\
    &\quad + \E[\epsilon(w_1)\epsilon(h+w_1+w_2)]\E[\epsilon(w_1+w_2)\epsilon(h+w_1)] - \noisecov(w_2)^2 \\
    & =\cancel{\noisecov(z_2)^2} + \noisecov(h)^2+ \noisecov(h+z_2)^2 - \cancel{\noisecov(z_2)^2} \\
    &\quad - \cancel{\noisecov(z_2)\noisecov(w_2)}-\noisecov(h+w_1-z_1)\noisecov(h+w_1-z_1+w_2-z_2)- \noisecov(h+w_1-z_1+w_2)\noisecov(h+w_1-z_1-z_2) + \cancel{\noisecov(w_2)\noisecov(z_2)}\\
    &\quad -\cancel{\noisecov(w_2)\noisecov(z_2)}-\noisecov(h+z_1-w_1)\noisecov(h+z_1-w_1+z_2-w_2)-\noisecov(h+z_1-w_1+z_2)\noisecov(h+z_1-w_1-w_2)+\cancel{\noisecov(z_2)\noisecov(w_2)}\\
    &\quad +\cancel{\noisecov(w_2)^2} +\noisecov(h)^2+\noisecov(h+w_2)^2-\cancel{\noisecov(w_2)^2}\\
    & = [\noisecov(h)^2-\noisecov(h+w_1-z_1)\noisecov(h+w_1-z_1+w_2-z_2)]+ [\noisecov(h+w_2)^2-\noisecov(h+w_1-z_1+w_2)\noisecov(h+w_1-z_1-z_2)]\\
    &\quad +[\noisecov(h)^2- \noisecov(h+z_1-w_1)\noisecov(h+z_1-w_1+z_2-w_2)] + [\noisecov(h+z_2)^2-\noisecov(h+z_1-w_1+z_2)\noisecov(h+z_1-w_1-w_2)]\\
    & \leq L|\noisecov(h)||z_1-w_1|_2 + L|\noisecov(h+w_1-z_1)| (|z_1-w_1|_2+|z_2-w_2|_2) \\
    &\quad + L|\noisecov(h+w_2)||z_1-w_1|_2+L|\noisecov(h+w_1-z_1+w_2)|(|z_1-w_1|_2+|z_2-w_2|_2)\\
    &\quad +L|\noisecov(h)||z_1-w_1|_2+ L|\noisecov(h+w_1-z_1)|(|z_1-w_1|_2+|z_2-w_2|_2) \\
    &\quad +L|\noisecov(h+w_2)||z_1-w_1|_2 + L |\noisecov(h+z_1-w_1+z_2)|(|z_1-w_1|_2+|z_2-w_2|_2). 
\end{align*}
\end{small}
Therefore, 
\begin{align*}
    \int_{\R^d} |\cov_{v^{(2)}}(h)|dh \leq 8\sqrt{2} \Lipcov \|\noisecov\|_1 |z-w|_2,%
\end{align*}
and
\begin{align*}
     \E |\xxtwo(z)-\xxtwo(w)|^2 \leq 8 \sqrt{2}\Lipcov  \|f\|^2_2\|\noisecov\|_1 |z-w|_2. 
\end{align*}
The bound for $v^{(3)}$ is proved similarly as for $v^{(2)}$.

\textbf{Bound for $\xone$, $\xtwo$, $\xthree$:}
Notice from \eqref{eq:type 1 error} that for each $i$, the type I error process takes the form 
\begin{align}\label{eq:type 1 general}
    \E |E_1^{(i)}(z)-E_1^{(i)}(w)|^2 = \frac{1}{\sam}\E \left|\int_{[-\dom,\dom]^d} G^{(i)}(t)\epsilon(t)\right|^2,
\end{align}
where 
\begin{align*}
    G^{(1)}(t)&=\sum_{i=1}^\sam  f(t-x_i+z_1)f(t-x_i-z_2) -  f(t-x_i+w_1)f(t-x_i-w_2)\\
    G^{(2)}(t)&=\sum_{i=1}^\sam  f(t-x_i+z_2)f(t-x_i+z_1+z_2)-f(t-x_i+w_2)f(t-x_i+w_1+w_2)\\
    G^{(3)}(t)&=\sum_{i=1}^\sam  f(t-x_i-z_1)f(t-x_i-z_1-z_2)-f(t-x_i-w_1)f(t-x_i-w_1-w_2)
\end{align*}
We have 
\begin{align*}
    \eqref{eq:type 1 general}
    & = \frac{1}{\sam}\int_{[-\dom,\dom]^d}\int_{[-\dom,\dom]^d}G^{(i)}(t)G^{(i)}(s)\noisecov(t-s)dtds\\
    & = \frac{1}{\sam}\int_{\R^d} \int_{[-\dom,\dom]^d\cap ([-\dom,\dom]^d-u)} G^{(i)}(s+h)G^{(i)}(s)\noisecov(h)dsdh,
\end{align*}
where $[-\dom,\dom]^d-h=\{s:s+h\in [-\dom,\dom]^d\}$. 
By Cauchy-Schwarz, we have 
\begin{align*}
    \frac{1}{\sam}\int_{[-\dom,\dom]^d\cap ([-\dom,\dom]^d-u)} G^{(i)}(s+h)G^{(i)}(s) ds \leq \frac{1}{\sam}\int_{[-\dom,\dom]^d} G^{(i)}(s)^2 ds.
\end{align*}
It suffices to bound this last integral. 

Since $f$ is supported on $[-\pi,\pi]$ and $z,w\in [-2\pi,2\pi]^{2d}$, for each $t$, only one term remains in the definition of $G^{(i)}$ (which we denote as the $i_t$-$th$ term) and 
\begin{align*}
    |G^{(1)}(t)| &= |f(t-x_{i_t}+z_1)f(t-x_{i_t}-z_2) -  f(t-x_{i_t}+w_1)f(t-x_{i_t}-w_2)|\\
    & \leq |f(t-x_{i_t}+z_1)f(t-x_{i_t}-z_2)-f(t-x_{i_t}+z_1)f(t-x_{i_t}-w_2)|\\
    &\quad + |f(t-x_{i_t}+z_1)f(t-x_{i_t}-w_2)-f(t-x_{i_t}+w_1)f(t-x_{i_t}-w_2)|\\
    &\leq \|f\|_\infty L_f |z_2-w_2|_2+ \|f\|_\infty L_f |z_1-w_1|_2\\
    &\leq  \sqrt{2}L_f \|f\|_\infty|z-w|_2.
\end{align*}
Therefore 
\begin{align*}
    \E |\xone(z)-\xone(w)|^2 \leq 2\gamma^{-1} L_f^2 \|\noisecov\|_1\|f\|_\infty^2 |z-w|_2^2.
\end{align*}

Similarly,  
\begin{align*}
    |G^{(2)}(t)| &= |f(t-x_{i_t}+z_2)f(t-x_{i_t}+z_1+z_2)-f(t-x_{i_t}+w_2)f(t-x_{i_t}+w_1+w_2)|\\
    &\leq |f(t-x_{i_t}+z_2)f(t-x_{i_t}+z_1+z_2)-f(t-x_{t_t}+w_2)f(t-x_{i_t}+z_1+z_2)| \\
    &\quad + |f(t-x_{t_t}+w_2)f(t-x_{i_t}+z_1+z_2)-f(t-x_{i_t}+w_2)f(t-x_{i_t}+w_1+w_2)|\\
    & \leq \|f\|_\infty L_f|z_2-w_2|_2 + \|f\|_\infty L_f (|z_1-w_1|_2+|z_2-w_2|_2) \leq 2\sqrt{2} L_f \|f\|_\infty |z-w|_2.
\end{align*}
Therefore 
\begin{align*}
    \E|\xtwo(z)-\xtwo(w)|^2 \leq 4\gamma^{-1} L_f^2 \|\noisecov\|_1\|f\|^2_\infty |z-w|^2_2. 
\end{align*}
The bound for $\xthree$ is proved similarly to that for $\xtwo$.

\end{proof}

\begin{rmk}\label{rmk:n_est}
Under Assumption \ref{assump:standing}, we can estimate the number of signal occurrences $N$ from the MTD observation $\obs$ in the following way. We present this result informally and comment that the concentration results proven in Section \ref{subsec:bispectrum_conc} extend to this estimator as well. First observe that $\E[A_1\obs] = \sam \fft(0)$. Then,
\begin{align*}
 \int_{[-2\pi,2\pi]^d} A_2\obs(z)\ dz &= \int_{[-2\pi,2\pi]^d} \int_{[-\dom,\dom]^d} (F(t)+\epsilon(t))(F(t+z)+\epsilon(t+z))\ dt \ dz  \\
 &= \int_{[-2\pi,2\pi]^d} \int_{[-\dom,\dom]^d} (F(t)F(t+z)+\epsilon(t)F(t+z) + F(t)\epsilon(t+z)+\epsilon(t)\epsilon(t+z))\ dt \ dz  \\
 &=  \int_{[-2\pi,2\pi]^d} A_2F(z)\ dz + \text{mean zero cross terms} + \int_{[-2\pi,2\pi]^d} A_2\epsilon(z)\ dz,
\end{align*}
and taking the expected value of both sides yields:
\[ \E\left[ \int_{[-2\pi,2\pi]^d} A_2\obs(z)\ dz\right] = \sam \fft(0)^2 + \E\left[ \int_{[-2\pi,2\pi]^d} A_2\epsilon(z)\ dz\right]. \]
Since $\int_{[-2\pi,2\pi]^d} A_2\obs(z)\ dz$ concentrates around its mean, we can thus estimate $\sam$ by
\[ \widehat{\sam}:= \frac{(A_1\obs)^2 }{\int_{[-2\pi,2\pi]^d} A_2\obs(z)\ dz - \E\left[ \int_{[-2\pi,2\pi]^d} A_2\epsilon(z)\ dz\right] } = \frac{(A_1\obs)^2 }{\int_{[-2\pi,2\pi]^d} A_2 \obs(z)\ dz - (2\dom)^d \int_{[-2\pi,2\pi]^d} \noisecov(z)\ dz }\, .\]
Note $\fft(0)$ can then be estimated by $\widehat{\sam}/A_1Y$.
\end{rmk}
\begin{lem}\label{lem:kot_mp}
Under Assumption \ref{assump:standing}, let $g(\omega)=\log \fft(\omega)$ for $f:\R\rightarrow\R$, so that $g(0)=0$ and $g$ is well defined since $\fft$ is non-vanishing. The following recursion formula holds for any $0 \le \tau \le k$:
\begin{equation}
\label{equ:kot_recursion}
  g(k) = g(c) +\int_0^{k-c} \partial_1 \log \Bf(\xi+c,c)\ d\xi - \overline{g(k-c)}.  
\end{equation}
This gives a family of \textit{horizontal} integrals in the $\omega_1, \omega_2$ plane, starting on the diagonal point $(\tau,\tau)$ and ending at $(k,\tau)$.
\end{lem}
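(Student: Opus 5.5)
We prove the recursion \eqref{equ:kot_recursion}, reading the parameter as $c=\tau$ with $0\le c\le k$. The idea is to differentiate $\log \Bf$ along the horizontal line $\omega_2=c$ and integrate from the diagonal point $(c,c)$ to $(k,c)$.

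By \eqref{equ:bispectrum_def} in dimension one,
\[
\Bf(\xi+c,c)=\fft(\xi+c)\,\fft(-c)\,\fft(-\xi).
\]
The middle factor does not depend on $\xi$. Since $\fft$ is nonvanishing by Assumption \ref{assump:standing}(ii), each factor has a continuous logarithm along the path $\xi\in[0,k-c]$. We therefore fix
\[
\log \Bf(\xi+c,c)=g(\xi+c)+g(-c)+g(-\xi),
\]
with $g=\log\fft$ the continuous branch normalized by $g(0)=0$. The interval and $\R$ are simply connected, so this branch exists. With this convention,
\[
\partial_1\log\Bf(\xi+c,c)=\frac{\partial_1\Bf(\xi+c,c)}{\Bf(\xi+c,c)}
\]
is a well-defined continuous function. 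Assumption \ref{assump:standing}(i) gives compact support of $f$, so $\fft$ is smooth and the differentiation is justified.

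Next, differentiate in the first argument with $\omega_2=c$ held fixed. As a function of $\xi$, the first factor depends on $\omega_1=\xi+c$ and the third on $\omega_2-\omega_1=-\xi$. Hence
\[
\partial_1\log\Bf(\xi+c,c)=g'(\xi+c)-g'(-\xi).
\]
Integrating over $\xi\in[0,k-c]$ and using the fundamental theorem of calculus gives
\[
\int_0^{k-c}\partial_1\log\Bf(\xi+c,c)\,d\xi
=g(k)-g(c)+g(-(k-c))-g(0).
\]
Since $f$ is real-valued, $\fft(-\omega)=\overline{\fft(\omega)}$. The function $\omega\mapsto\overline{g(-\omega)}$ is then a continuous logarithm of $\fft$ that vanishes at $0$, so by uniqueness of continuous logarithms $g(-\omega)=\overline{g(\omega)}$. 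Substituting this and $g(0)=0$ yields $g(k)=g(c)+\int_0^{k-c}\partial_1\log\Bf(\xi+c,c)\,d\xi-\overline{g(k-c)}$, which is \eqref{equ:kot_recursion}.

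The main point needing care is the branch bookkeeping. One must ensure that the $\partial_1\log\Bf$ in the integrand is the logarithmic derivative $\partial_1\Bf/\Bf$, which is branch-independent, and that the same continuous $g$ is used throughout. Nonvanishing of $\fft$ on all of $\R$ is exactly what makes this consistent.

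As a check, taking $c=0$ gives $\Bf(\xi,0)=\fft(\xi)\fft(-\xi)$, and the formula reduces to $g(k)=\int_0^k\bigl(g'(\xi)-g'(-\xi)\bigr)\,d\xi-\overline{g(k)}$. This holds identically because the integral equals $g(k)+g(-k)$. Taking $c=k$ gives the trivial identity $g(k)=g(k)-\overline{g(0)}$. This recursion does not need the assumption $\nabla\fft(0)=0$; only real-valuedness and nonvanishing of $\fft$ are used.
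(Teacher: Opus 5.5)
Your proof is correct and takes essentially the same route as the paper. Like the paper, you write $\log\Bf(\xi+c,c)=g(\xi+c)+g(-c)+g(-\xi)$, differentiate in the first argument to get $g'(\xi+c)-g'(-\xi)$, integrate over $[0,k-c]$, and use $g(-\omega)=\overline{g(\omega)}$. Beyond the paper, you justify the continuous branch, prove $g(-\omega)=\overline{g(\omega)}$ via uniqueness of continuous logarithms (which the paper only asserts), and read $c=\tau$, which resolves the notational slip in the statement.
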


\begin{proof}
From the definition of the bispectrum,
\[
\Bf(\omega_1,\omega_2)
=
\fft(\omega_1)\fft(-\omega_2)\fft(\omega_2-\omega_1),
\]
taking logarithms gives
\[
\log \Bf(\omega_1,\omega_2)
=
g(\omega_1)+g(-\omega_2)+g(\omega_2-\omega_1).
\]
Differentiating with respect to $\omega_1$ yields
\[
\partial_1 \log \Bf(\omega_1,\omega_2)
=
g'(\omega_1)-g'(\omega_2-\omega_1).
\]

Fix $\tau$ and set
\[
\omega_1 = \xi + \tau, \qquad \omega_2 = \tau.
\]
Then
\[
\omega_2-\omega_1 = \tau-(\xi+\tau) = -\xi,
\]
so
\[
\partial_1 \log \Bf(\xi+\tau,\tau)
=
g'(\xi+\tau)-g'(-\xi).
\]

Integrating over $\xi \in [0,k-\tau]$ gives
\[
\int_0^{k-\tau} \partial_1 \log \Bf(\xi+\tau,\tau)\, d\xi
=
\int_0^{k-\tau} g'(\xi+\tau)\, d\xi
-
\int_0^{k-\tau} g'(-\xi)\, d\xi.
\]

The first term becomes
\[
\int_0^{k-\tau} g'(\xi+\tau)\, d\xi
=
\int_c^k g'(t)\, dt
=
g(k)-g(\tau).
\]

For the second term, substitute $s=-\xi$ to obtain
\[
-
\int_0^{k-\tau} g'(-\xi)\, d\xi =
\int_0^{\tau-k} g'(s)\, ds = g(\tau-k) - g(0) = g(\tau-k) \, .
\]

Therefore,
\[ \int_0^{k-\tau} \partial_1 \log \Bf(\xi+\tau,\tau)\, d\xi
= g(k)-g(\tau) + g(\tau-k) \, .\]
Rearranging and utilizing $g(\tau-k)=\overline{g(k-\tau)}$:
\[
g(k)
= g(\tau) +
\int_0^{k-\tau} \partial_1 \log \Bf(\xi+\tau,\tau)\, d\xi
-
\overline{g(k-\tau)}.
\]
which gives the recursion formula.
\end{proof}

\end{appendix}

\end{document}